\numberwithin{equation}{section}
\numberwithin{figure}{section}
\newcommand{\beq}{\begin{equation}}
\newcommand{\eeq}{\end{equation}}
\newtheorem{theorem}{Theorem}[section]
\newtheorem{rem}{Remark}[section]
\newtheorem{lemma}{Lemma}[section]
\newtheorem{assumption}{Assumption}[section]
\newtheorem{proposition}{Proposition}[section]
\newtheorem*{rem*}{Remark}
\newtheorem*{lemma*}{Lemma}
\newcommand{\norm}[1]{\left\lVert#1\right\rVert}
\begin{document}

\title[Ground state in a harmonic potential]{\bf Ground state in the energy super-critical Gross-Pitaevskii equation with a harmonic potential}

\author{Piotr Bizon}
\address[P. Bizon]{Institute of Theoretical Physics, Jagiellonian University,
	Krak\'{o}w, Poland}
\email{bizon@th.if.uj.edu.pl}

\author{Filip Ficek}
\address[F. Ficek]{Institute of Theoretical Physics, Jagiellonian University,
	Krak\'{o}w, Poland}
\email{filip.ficek@doctoral.uj.edu.pl}

\author{Dmitry E. Pelinovsky}
\address[D.E. Pelinovsky]{Department of Mathematics and Statistics, McMaster University,
	Hamilton, Ontario, Canada, L8S 4K1}
\email{dmpeli@math.mcmaster.ca}

\author{Szymon Sobieszek}
\address[S. Sobieszek]{Department of Mathematics and Statistics, McMaster University,	Hamilton, Ontario, Canada, L8S 4K1}
\email{sobieszs@mcmaster.ca}

\keywords{Gross--Pitaevskii equation, ground states, Emden--Fowler transformation, shooting method, solution curve}

\begin{abstract}
The energy super-critical Gross--Pitaevskii equation with a harmonic potential is revisited 
in the particular case of cubic focusing nonlinearity and dimension $d \geq 5$. In order to prove 
the existence of a ground state (a positive, radially symmetric solution in the energy space), 
we develop the shooting method and deal with a one-parameter family of classical solutions to an initial-value problem for the stationary equation. We prove that the 
solution curve (the graph of the eigenvalue parameter versus the supremum norm) is oscillatory for $d \leq 12$ and 
monotone for $d \geq 13$. Compared to the existing literature, rigorous asymptotics are derived by constructing three families of solutions 
to the stationary equation with functional-analytic rather than geometric methods.
\end{abstract}

\date{\today}
\maketitle

\section{Introduction}

The main subject of this study is the focusing nonlinear Schr\"odinger (NLS) equation with an isotropic harmonic potential given by 
\begin{equation}\label{nls}
i \partial_t w =-\Delta w +|x|^2 w - |w|^{2p} w\,,
\end{equation}
where $w(t,x) : \mathbb{R}\times \mathbb{R}^d \to \mathbb{C}$ and $p > 0$.
In the physically relevant dimensions $d = 1,2,3$ and for cubic or quintic powers $p = 1,2$, the NLS equation \eqref{nls}, also known as the Gross-Pitaevskii (GP) equation, is used to describe the macroscopic behaviour of the Bose-Einstein condensate with attractive atomic interactions in  a harmonic trap. From a mathematical viewpoint, it is a prototype model of dynamics of nonlinear dispersive waves subject to a confining potential and from this perspective it is interesting to consider higher dimensions $d$. 

We recall the mass and energy which are formally conserved in time evolution:
\begin{equation}
\label{mass} 
M(w) = \int_{\mathbb{R}^d} |w|^2 dx
\end{equation}
and
\begin{equation}
\label{energy-E} 
E(w) = \int_{\mathbb{R}^d} \left( |\nabla w|^2 + |x|^2 |w|^2 - \frac{1}{p+1} |w|^{2p+2} \right) dx.
\end{equation}
In the case of the NLS equation without the potential, 
the scaling transformation 
\begin{equation}
w(t,x) \mapsto w_L(t,x)=L^{\frac{1}{p}} w(L^2 t, Lx), \quad L > 0, 
\end{equation}
leaves the NLS equation invariant of $L$ but changes the mass and energy as follows:
\begin{equation}
\label{scaling} 
M(w_L) = L^{\frac{2}{p}-d} M(w), \quad 
E(w_L) = L^{\frac{2}{p}+2-d} E(w).
\end{equation}
The mass-critical case is $p = \frac{2}{d}$ and the energy-critical case is $p = \frac{2}{d-2}$ (if $d \geq 3$). This work addresses the energy-supercritical case $p > \frac{2}{d-2}$ (if $d \geq 3$). In order to simplify the technical details of presentation, we fix the cubic nonlinearity power $(p=1)$ and consider the energy-supercritical case $d\geq 5$, where $d$ is always taken as a natural number.

Substituting the ansatz $w(t,x)=e^{-i\lambda t} u(x)$ into equation \eqref{nls} with $p = 1$, where $\lambda$ is a real parameter, one obtains the stationary GP equation
\begin{equation}
\label{statGP}
-\Delta u +|x|^2 u - |u|^2 u = \lambda u\,.
\end{equation}
We are interested in the existence and properties of the ground states of the stationary equation \eqref{statGP} which are defined as classical solutions that are positive and decaying to zero at infinity. By the well-known moving planes argument \cite{LiNi}, such solutions must be radially symmetric and monotonically decreasing, hence the stationary equation \eqref{statGP} reduces to the radial differential equation for $u(r) : \mathbb{R}^+ \mapsto \mathbb{R}$, where $r=|x|$ and $\Delta u = u''(r) + \frac{d-1}{r} u'(r)$.
Combining all requirements together, one defines 
the following boundary-value problem for the ground states denoted by 
$\mathfrak{u}$:
\begin{equation}
\label{statGPrad}
\left\{ \begin{array}{ll}
\mathfrak{u}''(r) + \frac{d-1}{r} \mathfrak{u}'(r) - r^2 \mathfrak{u}(r) + \lambda \mathfrak{u}(r) + \mathfrak{u}(r)^3 = 0, \quad & r > 0, \\
\mathfrak{u}(r) > 0, \qquad \qquad \mathfrak{u}'(r) < 0, \quad & \\
\lim\limits_{r \to 0} \mathfrak{u}(r) < \infty, \quad 
\lim\limits_{r \to \infty} \mathfrak{u}(r) = 0. & \end{array} \right.
\end{equation}
Weak solutions to the boundary-value problem (\ref{statGPrad}) are defined in the energy space 
\begin{equation}
\label{energy-space}
\mathcal{E} := \left\{ u \in L^2_r(\mathbb{R}^+) : \quad u' \in L^2_r(\mathbb{R}^+), \quad r u \in L^2_r(\mathbb{R}^+), \quad u \in L^4_r(\mathbb{R}^+) \right\}.
\end{equation}
Both $M(u)$ and $E(u)$ are well-defined for $u \in \mathcal{E}$. 

The linear part of the boundary-value problem (\ref{statGPrad}) 
has the ground state 
given by the Gaussian function $\mathfrak{u}_0(r) = e^{-r^2/2}$. It exists when the eigenvalue parameter $\lambda$ is given by $\lambda_0 = d$. By standard local bifurcation theory \cite{N}, there exist the nonlinear ground state $\mathfrak{u}_b(r)$ with small amplitude $b$  that bifurcates from the linear ground state $\mathfrak{u}_0$ as $b \rightarrow 0$. 
The nonlinear ground state corresponds to $\lambda = \lambda(b)$ satisfying 
the limit $\lambda(b) \to \lambda_0$ as $b \to 0$. An elementary calculation gives  
\begin{equation}
\label{continuation}
\mathfrak{u}_b(r) = b \mathfrak{u}_0(r) + \mathcal{O}(b^3), \qquad 
\lambda(b) = d-2^{-\frac{d}{2}} b^2+\mathcal{O}(b^4).  
\end{equation}
In the energy-subcritical dimensions $1 \leq d \leq 3$, the global behavior of the solution curve can be analyzed by using variational methods and the global bifurcation theory \cite{R} due to the fact that the Sobolev embedding $H^1(\mathbb{R}^d) \cap L^{2,1}(\mathbb{R}^d) \subset L^4(\mathbb{R}^d)$ is compact.
It was shown in \cite{KW} (see also \cite{Fuk} and \cite{Selem2011}) that for each $\lambda<d$, there exists the ground state $\mathfrak{u} \in \mathcal{E}$ satisfying the boundary-value problem (\ref{statGPrad}) with $1 \leq d \leq 3$. Uniqueness 
of the ground state in $\mathcal{E}$ was proven in \cite{H} for $d = 3$ and in \cite{HO} for $d  = 1,2$.  

In the energy-critical case $d=4$, the Sobolev embedding is not compact, nonetheless the existence and uniqueness of a ground state for 
$\lambda \in (0,d)$ was shown in \cite{Selem2011} by modification of the corresponding variational methods for bounded domains from \cite{BN} 
(see also Theorem 6 in \cite{SW2013}). In the energy-supercritical case $d \geq 5$, it was proven in \cite{SK2012} that the solution curve for $\lambda$ is located in a subset of $(0,d)$ and is unbounded in the sense that $\mathfrak{u}(0)$ diverges along the solution curve.

{\em The main goal of this paper} is to analyze the global behaviour of the solution curve in the energy-supercritical case $d \geq 5$. We implement the shooting method pioneered in \cite{JL} in the context of the Liouville--Bratu--Gerlfand problem \cite{JS}. Therefore, we define a solution to the following initial-value problem:
\begin{equation}
\label{statGPshoot}
\left\{ \begin{array}{ll}
f''(r) + \frac{d-1}{r} f'(r) - r^2 f(r) + \lambda f(r) + f(r)^3 =0,\quad & r > 0, \\
f(0)=b, \quad f'(0)=0, & \end{array} \right.
\end{equation}
where $b \in \mathbb{R}$ is a free parameter (assumed to be positive without loss of generality). 

We first prove that for each $b > 0$ and each $\lambda \in \mathbb{R}$, there exists the unique global classical solution to the initial-value problem (\ref{statGPshoot}); moreover, there exists $\lambda = \lambda(b) \in (d-4,d)$ such that the corresponding solution $f$ decays to zero at infinity, so that it gives the ground state $\mathfrak{u} = \mathfrak{u}_b$ of the boundary-value problem (\ref{statGPrad}). The following theorem presents this result.

\begin{theorem}
	\label{theorem-1}
	Fix $d \geq 4$. For every $b > 0$, there exists $\lambda \in (d-4,d)$, labeled as $\lambda(b)$, such that the unique classical solution $f \in C^2(0,\infty)$ 
	to the initial-value problem (\ref{statGPshoot}) with $\lambda = \lambda(b)$ is a solution $\mathfrak{u} = \mathfrak{u}_b \in \mathcal{E}$ to the boundary-value problem (\ref{statGPrad}).
\end{theorem}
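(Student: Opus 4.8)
The plan is to combine standard ODE theory for global existence and uniqueness with a shooting argument in the parameter $\lambda$, and then to locate the shooting value by two integral identities. First I would treat the initial-value problem \eqref{statGPshoot} itself: local existence, uniqueness, and $C^2$-regularity of the solution $f_\lambda$ follow from Picard iteration after resolving the regular-singular point at $r=0$, rewriting the equation as $(r^{d-1}f')' = r^{d-1}[(r^2-\lambda)f - f^3]$ and using $f(0)=b$, $f'(0)=0$, which forces $f''(0) = -(\lambda b + b^3)/d$. Global existence (no finite-$r$ blow-up) comes from the sign of the cubic term: for large $|f|$ the right-hand side $(r^2-\lambda)f - f^3$ has sign opposite to $f$, so $f$ cannot escape to $\pm\infty$ in finite $r$; equivalently $\mathcal{H}(r) = \tfrac12 f'^2 + \tfrac12(\lambda-r^2)f^2 + \tfrac14 f^4$ satisfies $\mathcal{H}'(r) = -\tfrac{d-1}{r}f'^2 - r f^2 \le 0$ and controls $f$ on bounded intervals.

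Next comes the shooting. Restricting to $\lambda > -b^2$ so that $f''(0)<0$ and $f_\lambda$ starts strictly decreasing, I classify $\lambda$ by its first event: let $N$ be the set of $\lambda$ for which $f_\lambda$ reaches $0$ (with $f_\lambda'<0$ there) before acquiring any interior critical point, and let $P$ be the set for which $f_\lambda$ attains a positive local minimum, i.e. turns around while still positive. These sets are disjoint, and both are open by continuous dependence of $f_\lambda$ on $\lambda$ on compact $r$-intervals together with transversality: a first zero is simple (since $f=f'=0$ would force $f\equiv 0$), and a genuine sign change of $f_\lambda'$ persists. For nonemptiness I would show $N$ contains all large $\lambda$ (the term $-\lambda f$ dominates near $r=0$, forcing an early crossing as in $f''\approx -\lambda f$) and $P$ contains $\lambda$ close to $-b^2$ (the solution barely decreases before the growing potential $r^2 f$ pushes it back up). Setting $\lambda_* := \inf N$, openness gives $\lambda_*\notin N$, and disjointness with openness of $P$ forces $\lambda_*\notin P$; hence $f_{\lambda_*}$ neither crosses zero nor turns around, so it is positive and monotonically decreasing for all $r$. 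Such a solution tends to a limit $\ell\ge 0$, and $\ell>0$ is impossible (for large $r$ the term $(r^2-\lambda)\ell$ would force $f$ to increase), so $f_{\lambda_*}\searrow 0$. Matching to the linear equation $f'' + \tfrac{d-1}{r}f' - r^2 f + \lambda f \approx 0$ as $f\to 0$ shows the decay is Gaussian, $f_{\lambda_*}(r)\sim r^{a}e^{-r^2/2}$, so $\mathfrak{u}_b := f_{\lambda_*}$ lies in the energy space \eqref{energy-space} and solves \eqref{statGPrad}.

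Finally I would locate $\lambda_*=:\lambda(b)$ in $(d-4,d)$ using two identities valid for the decaying solution $\mathfrak{u}=\mathfrak{u}_b$. Writing $A=\int|\nabla\mathfrak{u}|^2$, $B=\int|x|^2\mathfrak{u}^2$, $C=\int\mathfrak{u}^2$, $D=\int\mathfrak{u}^4$, testing \eqref{statGP} against $\mathfrak{u}$ gives $A+B=\lambda C + D$, while the Pohozaev identity (testing against $x\cdot\nabla\mathfrak{u}$, with boundary terms vanishing by the Gaussian decay) combines with it to yield $(d-4)A + (d+4)B = \lambda\, d\, C$. The upper bound follows by testing against the linear ground state $\phi_0=e^{-r^2/2}$: since $(-\Delta+|x|^2-\lambda)\phi_0=(d-\lambda)\phi_0$ and $\langle\phi_0,\mathfrak{u}^3\rangle>0$, positivity of $\mathfrak{u}$ forces $\lambda<d$. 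The lower bound follows from the spectral inequality $A+B\ge dC$ for $-\Delta+|x|^2$: substituting $A\ge dC-B$ into the Pohozaev relation gives $\lambda dC \ge (d-4)dC + 8B > (d-4)dC$, hence $\lambda > d-4$, which is exactly where $d\ge 4$ enters.

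I expect the main obstacle to be the behavior at $r\to\infty$: ruling out, for the shooting value $\lambda_*$, a positive solution that grows without decaying, and confirming the precise decay rate needed for membership in $\mathcal{E}$. The first-event classification is designed to sidestep the hardest version of this difficulty by reducing the separatrix to the monotone-decreasing case, where the limit argument and the linearized asymptotics settle the decay; the remaining care lies in the transversality and openness of $P$ and in the asymptotic matching that yields the Gaussian tail.
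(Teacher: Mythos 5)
Your proposal follows essentially the same architecture as the paper's proof: unique local solvability at the regular-singular point $r=0$ (the paper's Lemma \ref{lemma-1} does this via the Emden--Fowler substitution and a Volterra fixed point; your Picard iteration on the self-adjoint form $(r^{d-1}f')'=r^{d-1}[(r^2-\lambda)f-f^3]$ is equivalent), global continuation by exactly the paper's Lyapunov function (Lemma \ref{lemma-2}), a shooting trichotomy identical to the sets $I_+$, $I_-$, $I_0$ of Section \ref{sec-proof-1}, openness via the implicit function theorem, and the same endgame: your identity $(d-4)A+(d+4)B=\lambda d C$ is precisely \eqref{eq:poh_id1} combined with \eqref{pohozaev}, your lower bound reproduces \eqref{bound1} (Proposition \ref{prop-2}), and your test against $e^{-r^2/2}$ is Proposition \ref{prop-1}. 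One genuinely different, and arguably more economical, choice is your treatment of non-emptiness: you only need $N$ to contain large $\lambda$ (rigorizable by Sturm comparison, since $(r^{d-1}f')'\leq -(\lambda-1)r^{d-1}f$ while $f>0$ and $r\leq 1$) and $P$ to contain $\lambda$ slightly above $-b^2$ (rigorizable since then $r^2-\lambda-f^2\geq r^2-\epsilon$ while $0<f\leq b$, forcing $r^{d-1}f'$ to increase), deferring the sharp window $(d-4,d)$ entirely to the final integral identities. The paper instead proves the stronger inclusions $[d,\infty)\subset I_+$ (Lemma \ref{lemma-plus}, via the gauge $g=e^{r^2/2}f$) and $(-\infty,0]\subset I_-$ (Lemma \ref{lemma-minus}, via two integral identities excluding zeros for $\lambda\leq 0$ plus the lower bound $f\geq c>0$, which itself invokes the decay asymptotics and Proposition \ref{prop-2}); your softer claims suffice for the $\inf$-construction, but as written they are heuristics, not proofs.

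Two points are genuine soft spots. First, your transversality claim for $P$ (``a genuine sign change of $f'_\lambda$ persists'') and, more importantly, the exhaustiveness of your first-event classification are unproved: a priori the first critical point could be degenerate, $f'(r_0)=f''(r_0)=0$ (a horizontal inflection), which is neither a zero-crossing nor a turn-around, and then $\lambda_*\notin N\cup P$ would not yield strict monotonicity of $f_{\lambda_*}$. The paper closes this by differentiating \eqref{eq}: at such a point $f'''(r_0)=2r_0 f(r_0)>0$, incompatible with $f'\leq 0$ on both sides of $r_0$; hence first critical points are non-degenerate minima, giving both openness of $I_-$ and the trichotomy. It is a two-line fix, but it is needed. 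Second, the Gaussian tail is not a ``matching'' remark: in the paper it is Lemma \ref{lemma-0}, proved by compactifying $r=\infty$ in a three-dimensional system, applying Levinson-type asymptotic theory and a two-component fixed-point argument to show that \emph{every} solution with $f,f'\to 0$ obeys \eqref{asymptotics-infinity}; this is what makes Lemma \ref{lemma-decay} (hence $\mathfrak{u}_b\in\mathcal{E}$, and also the vanishing of boundary terms in your Pohozaev identity) legitimate, and it constitutes a substantial fraction of the paper's proof. You correctly identify this as the main obstacle, but formal linearization alone does not rule out that the decaying orbit picks up the dominant (growing) solution of the linear equation; the quantitative fixed-point control is where the real work lies.
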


\begin{rem}
		\label{rem-unique-1}
	Uniqueness of $\lambda$ in Theorem \ref{theorem-1} for each given $b > 0$ is an open problem.	
\end{rem}

\begin{rem}
	We believe that the shooting argument used to prove 
	Theorem \ref{theorem-1} can be generalized to prove the existence of 
	the $n$-th excited state with $n$ nodes on $\mathbb{R}^+$ 
	for some $\lambda \in (\lambda_n-4,\lambda_n)$, where $\lambda_n := d+4n$ is the $n^{\rm th}$ eigenvalue of the linear problem, $n \in \mathbb{N}$. 
	Such solutions were also considered in \cite{Selem2011}.
\end{rem}

Next, we analyze the behavior of the family of ground states 
parameterized by $b := \mathfrak{u}_b(0)$ as $b \rightarrow \infty$. The existence of the limiting singular solution $f_{\infty}$ for a unique value of $\lambda = \lambda_{\infty}$ such that $\mathfrak{u}_b \to f_{\infty}$ in $\mathcal{E}$ and $\lambda(b) \to \lambda_{\infty}$ as $b \to \infty$, 
was established in \cite{Selem2013}. The limiting singular solution 
$f_{\infty}$ is defined by the following divergent behavior:
\begin{equation}
\label{rate-divergence}
f_{\infty}(r) = \frac{\sqrt{d-3}}{r} \left[ 1 + \mathcal{O}(r^2) \right] \qquad \mbox{\rm as} \quad r \to 0.
\end{equation}
If $f_{\infty} \in C^2(0,\infty)$ and $f_{\infty}$ decays to zero at infinity fast enough, then $f_{\infty} \in \mathcal{E}$ for $d \geq 5$. Convergence $\mathfrak{u}_b \to f_{\infty}$ in $\mathcal{E}$ and $\lambda(b) \to \lambda_{\infty}$ as $b \to \infty$ was shown in \cite{Selem2013} similarly 
to \cite{MP} where the analogous problem was analyzed for the stationary focusing nonlinear Schr\"odinger equation in a ball and without a harmonic potential. 

The limiting singular solution $f_{\infty}$ can be introduced by the change of variables $f(r) = r^{-1} F(r)$, where $F(r)$ is defined as a solution to the following initial value problem:
\begin{equation}
\label{statGPsingular}
\left\{ \begin{array}{ll}
F''(r) + \frac{d-3}{r} F'(r) - \frac{d-3}{r^2} F(r) - r^2 F(r) + \lambda F(r) + \frac{1}{r^2} F(r)^3 = 0,\quad & r > 0, \\
F(0) = \sqrt{d-3}, \quad F'(0)=0. & \end{array} \right.
\end{equation}
For each $\lambda \in \mathbb{R}$, there exists the unique global classical
solution to the initial value problem (\ref{statGPsingular}), moreover, 
there exists a value of $\lambda$ denoted as $\lambda_{\infty}$ such that the corresponding 
solution $F$ decays to zero at infinity. This decaying solution $F$ gives the limiting 
singular solution $f_{\infty}$ after the transformation 
$f(r) = r^{-1} F(r)$. The following theorem was proven in \cite{Selem2013}.

\begin{theorem}
	\label{theorem-Selem}
	Fix $d \geq 5$. There exists a value of $\lambda \in (0,d)$, labeled as $\lambda_{\infty}$, such that the unique classical 
	solution $F \in C^2(0,\infty)$ to the initial-value problem (\ref{statGPsingular}) with $\lambda = \lambda_{\infty}$ satisfies $F(r) > 0$ and $F'(r) < 0$ for every $r > 0$ and $F(r) \to 0$ as $r \to \infty$ such that $f_{\infty}(r) = r^{-1} F(r)$ belongs to $\mathcal{E}$.
\end{theorem}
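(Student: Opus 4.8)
The plan is to treat the initial-value problem \eqref{statGPsingular} by a shooting argument in the parameter $\lambda$, after first settling the local theory at the singular endpoint $r=0$. Writing $F(r) = \sqrt{d-3} + G(r)$ and using that the cubic term balances the singular linear term exactly at $F = \sqrt{d-3}$, one finds that $G$ solves a linear equation whose indicial exponents are the roots of $s^2 + (d-4)s + 2(d-3) = 0$. For every $d \geq 5$ both roots have negative real part (the discriminant $d^2 - 16d + 40$ changes sign at $d = 8 \pm 2\sqrt{6}$, giving complex roots for $5 \le d \le 12$ and two negative real roots for $d \geq 13$), so both homogeneous solutions blow up as $r \to 0$. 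Hence the requirement that $F$ be bounded at the origin selects a \emph{unique} solution, which I would construct by a contraction-mapping argument in a weighted space on $[0,r_0]$; matching the inhomogeneity $(\lambda - r^2)\sqrt{d-3}$ produces the expansion $F(r) = \sqrt{d-3} + a r^2 + \mathcal{O}(r^4)$ with $a = -\lambda\sqrt{d-3}/[2(2d-5)]$, exactly the behavior recorded in \eqref{rate-divergence}. Standard continuation then yields a unique $F(\cdot;\lambda) \in C^2(0,\infty)$ for each $\lambda$, once finite-$r$ blow-up is excluded by an a priori bound on compact intervals.

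Next I would set up the shooting dichotomy by defining
\[
\mathcal{B} = \{\lambda \in \mathbb{R} : F(\cdot;\lambda) \text{ has a zero in } (0,\infty)\}, \qquad
\mathcal{A} = \{\lambda \in \mathbb{R} : F(\cdot;\lambda) > 0 \text{ on } (0,\infty) \text{ and } F \not\to 0\}.
\]
Continuous dependence of $F(\cdot;\lambda)$ on $\lambda$ — which must be propagated through the singular point using the weighted estimates from the local construction — shows $\mathcal{B}$ is open, since a first zero of $F$ is necessarily transversal ($F' < 0$ there, by uniqueness for the regular ODE on $r > 0$). Showing $\mathcal{A}$ open is more delicate and amounts to exhibiting a trapping region: once $F$ lies above a suitable positive barrier it cannot later decay, and this property is stable under perturbation of $\lambda$. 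The two sets are disjoint by construction. Since $\mathbb{R}$ is connected, two disjoint open sets that are both nonempty cannot cover it, so the complement $\{\lambda : F > 0 \text{ on } (0,\infty),\ F \to 0\}$ is nonempty — which is precisely the assertion of the theorem, modulo locating the endpoints.

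The bounds $0$ and $d$ come from comparison arguments. For $\lambda \le 0$ the coefficient $a \ge 0$, so $F$ is initially nondecreasing; I would upgrade this, using the sign structure of the right-hand side of \eqref{statGPsingular}, to show that $F$ stays bounded away from zero and hence cannot decay, so $(-\infty,0] \subseteq \mathcal{A}$ and therefore $\lambda_\infty > 0$. For the upper bound I would use Sturm comparison with the linear harmonic oscillator: in the variable $f = r^{-1}F$ the linear part of \eqref{statGPsingular} is $-\Delta_d + r^2$, whose lowest radial eigenvalue is $d$ with positive eigenfunction $e^{-r^2/2}$; for $\lambda \ge d$ the additional focusing term $+f^3$ renders the equation strictly more oscillatory than this critical linear problem, forcing a zero of $f$ (hence of $F$), so $[d,\infty) \subseteq \mathcal{B}$ and $\lambda_\infty < d$. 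The common boundary value then lies in $(0,d)$.

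Finally, for $\lambda = \lambda_\infty$ the solution is positive on $(0,\infty)$ and does not grow, so $F(r) \to 0$ as $r \to \infty$. A matching analysis at infinity, where the dominant balance is $F'' \sim r^2 F$, shows that a decaying solution behaves like $c\,e^{-r^2/2}$; this gives $F' < 0$ for large $r$, and combined with the absence of interior critical points (again by uniqueness on $r>0$) yields $F'(r) < 0$ for all $r > 0$. The Gaussian decay of $F$, together with the mild singularity $f_\infty(r) = \sqrt{d-3}/r$ at the origin (which lands in the energy space precisely because $d \geq 5$), then places $f_\infty = r^{-1}F$ in $\mathcal{E}$ as defined in \eqref{energy-space}. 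I expect the main obstacle to be the openness of $\mathcal{A}$ and the accompanying dichotomy — proving that a solution which stays positive but fails to grow must decay to zero, ruling out oscillation around or convergence to a positive value — since the non-autonomous, non-variational structure of \eqref{statGPsingular} (the natural energy functional has an indefinite monotonicity defect) obstructs a clean Lyapunov-function argument.
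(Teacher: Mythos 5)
Your outline reproduces the paper's overall shooting architecture, and several pieces are correct and essentially match the paper: the local analysis at $r=0$ (your indicial equation $s^2+(d-4)s+2(d-3)=0$ is exactly (\ref{char-eq-roots}), both roots have negative real part for $d\geq 5$, and your coefficient $a=-\lambda\sqrt{d-3}/[2(2d-5)]$ agrees with (\ref{lambda-solution-behavior}); the paper selects the same unique solution as the one-dimensional unstable manifold of $(\sqrt{d-3},0,0)$ in the system (\ref{3d-system-capital}), and your weighted contraction is an equivalent functional-analytic route in the spirit of Lemma \ref{lemma-1}); transversality of a first zero of $F$; and the bound $[d,\infty)\subset\mathcal{B}$, since your Sturm/Wronskian comparison with $e^{-r^2/2}$ is the same device as the paper's substitution $g(r)=e^{r^2/2}f(r)$ in Lemma \ref{lemma-plus}. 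The genuine gap is exactly where you flagged it: openness of $\mathcal{A}$, and your proposed fix would fail. There is no forward-invariant ``positive barrier'': from $\frac{d}{dr}\left[r^{d-3}F'\right]=r^{d-5}F\left[(d-3)+r^2(r^2-\lambda)-F^2\right]$, the flux reverses sign as soon as $F$ overshoots the envelope $\sqrt{(d-3)+r^2(r^2-\lambda)}$, so non-decaying solutions grow \emph{with oscillations} (this is visible in the paper's description of Figure \ref{fig:fsol}), repeatedly dipping back below any fixed barrier; a set defined by the asymptotic behavior ``$F>0$ and $F\not\to 0$'' cannot be shown open by a trapping region alone. The paper circumvents this entirely: its third set $I_-$ is defined by a \emph{finite-$r$ event} (the first interior zero of the derivative), so openness is an implicit-function-theorem statement, albeit requiring the careful degenerate-case analysis $F'(r_0)=F''(r_0)=0$ (including the sub-case $F'''(r_0)=0$, $F''''(r_0)>0$); the good set $I_0$ is then closed with monotonicity built in, and the decay $F\to 0$ at $\lambda_\infty$ is \emph{proved afterwards} via the integral estimate ruling out $F\to c>0$ and the far-field construction of Lemma \ref{lemma-0}, which also supplies the rate $F\sim Cr^{(\lambda-d+2)/2}e^{-r^2/2}$ needed for $f_\infty\in\mathcal{E}$. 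Note also that your stated reason for pessimism is wrong: the paper does have a clean Lyapunov function, $\Lambda$ in (\ref{Lyapunov-again}), strictly decreasing by (\ref{Lyapunov-again-rate}); since $\Lambda=\frac{1}{2}(f')^2\geq 0$ at any zero of $f$, the open finite-$r$ condition $\Lambda(r_1)<0$ excludes all later zeros --- precisely the kind of robust criterion an openness proof for your $\mathcal{A}$ would need.

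Two further concrete defects. First, your monotonicity step is incorrect as stated: ``absence of interior critical points by uniqueness'' does not follow, since uniqueness for the ODE forbids only the double zero $F(r_0)=F'(r_0)=0$; a critical point $F'(r_0)=0$ with $F(r_0)>0$ is perfectly consistent with uniqueness (it is the defining event of $I_-$). To get $F'(r)<0$ for all $r>0$ at $\lambda_\infty$ you need either the paper's structure (where $F'<0$ is part of the definition of $I_0$) or the moving-plane/maximum-principle argument of \cite{LiNi}, which the paper invokes for exactly this implication. Second, for $\lambda\leq 0$ your ``sign structure'' argument is too thin to keep $F$ away from zero: the paper needs Pohozaev-type integral identities, whose boundary terms at $r=0$ (e.g.\ $r^d[f'(r)]^2\to 0$) vanish only for $d\geq 5$, combined with the nonexistence result of Proposition \ref{prop-2} and the fast decay of Lemma \ref{lemma-0}, to exclude both a zero of $f=r^{-1}F$ and decay to zero. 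With your set $\mathcal{A}$ replaced by the paper's derivative-based $I_-$ and these two steps repaired, the rest of your plan goes through.
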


\begin{rem} 
		\label{rem-unique-2}
	Uniqueness of the value	of $\lambda_{\infty}$ in Theorem \ref{theorem-Selem} was claimed in \cite[Section 4]{Selem2013} by analyzing the behavior of the quotient between two hypothetical solutions of (\ref{statGPsingular}) for two different values of $\lambda$. However, we believe the proof is incorrect, see Remark \ref{remark-error} below.
\end{rem}

\begin{rem}
	The proof of Theorem \ref{theorem-1} is similar to the proof of Theorem  \ref{theorem-Selem} in \cite{Selem2013} but we have to work with the different initial-value problem (\ref{statGPshoot}) compared to (\ref{statGPsingular}). We also prove the fast decay to zero at infinity  and this allows us to simplify some arguments from \cite{Selem2013}. For the reader's convenience, we also provide a simpler proof of Theorem \ref{theorem-Selem} by using our technique.
\end{rem}

Finally, we consider the convergence of $\lambda(b) \to \lambda_{\infty}$ as $b \to \infty$ depending on the dimension $d \geq 5$, which was not explored in \cite{SK2012,Selem2013}. We show under a technical non-degeneracy assumption that the solution curve has an oscillatory (snaking) behavior for $5\leq d\leq 12$ and a monotone behavior for $d \geq 13$. The following theorem presents the corresponding result.

\begin{theorem}
	\label{theorem-2}
	Assume that $\lambda_{\infty}$ is given by Theorem \ref{theorem-Selem} 
	and Assumptions \ref{assumption-1} and \ref{assumption-2} are satisfied.
	Then, there exists $b_0\in [0,\infty)$ such that for every $b>b_0$ the value of $\lambda$ in Theorem  \ref{theorem-1}, denoted by $\lambda(b)$, is uniquely defined near $\lambda_{\infty}$ such that $\lim\limits_{b \to \infty} \lambda(b) = \lambda_{\infty}$.
	Moreover, for $5 \leq d \leq 12$, there exist constants $A_{\infty} > 0$ and $\delta_{\infty} \in \mathbb{R}$ such that
	\begin{equation}
	\label{snake}
	\lambda(b) - \lambda_{\infty} \sim A_{\infty} b^{-\beta} \sin(\alpha \ln b + \delta_{\infty}) \quad \mbox{\rm as} \quad b \to \infty,
	\end{equation}
	where
	\begin{equation}
	\label{alpha-beta}
	\alpha = \frac{\sqrt{-d^2 + 16 d - 40}}{2}, \quad \beta = \frac{d-4}{2},
	\end{equation}
	whereas for $d \geq 13$, there exists $B_{\infty} > 0$ such that 
		\begin{equation}
	\label{monotone}
	\lambda(b) - \lambda_{\infty} \sim B_{\infty} b^{\kappa_+} \quad \mbox{\rm as} \quad b \to \infty,
	\end{equation}
	where
	\begin{equation}
	\label{kappa-plus}
	\kappa_+ = -\frac{d-4}{2} + \frac{\sqrt{d^2 - 16 d + 40}}{2}.
	\end{equation}
\end{theorem}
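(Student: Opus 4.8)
The plan is to linearize the radial dynamics around the singular solution $f_\infty$ after passing to Emden--Fowler variables that make the scaling structure of the equation manifest. Setting $f(r)=r^{-1}w(t)$ with $t=\ln r$ transforms the radial equation in (\ref{statGPshoot}) into
\[
\ddot{w}+(d-4)\dot{w}-(d-3)w+w^{3}=e^{4t}w-\lambda e^{2t}w,
\]
whose non-autonomous right-hand side decays as $t\to-\infty$. The singular solution $f_\infty(r)=\sqrt{d-3}/r$ corresponds to the fixed point $w_*=\sqrt{d-3}$ of the autonomous part, and linearizing there gives the characteristic exponents $\nu_{\pm}=\tfrac{1}{2}\bigl(-(d-4)\pm\sqrt{d^{2}-16d+40}\bigr)$. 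These are exactly the source of the stated dichotomy: for $5\le d\le 12$ they are complex with $\Re\nu_{\pm}=-\beta$ and $\Im\nu_{\pm}=\pm\alpha$, while for $d\ge 13$ they are real and negative with $\nu_{+}=\kappa_{+}$ as in (\ref{kappa-plus}). Since $\Re\nu_{\pm}<0$ for $d>4$, the point $w_*$ is a sink of the autonomous flow, so the regular trajectory emanating from $w=0$ as $t\to-\infty$ is attracted to $w_*$; this is the dynamical reincarnation of the convergence $\mathfrak{u}_b\to f_\infty$.

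I would then exploit the scaling symmetry $f\mapsto s\,f(s\,\cdot)$, which in the Emden--Fowler variable is the translation $w_b(t)=w_{b_0}(t+\ln(b/b_0))$ for the autonomous flow, so that increasing $b$ merely shifts the profile by $\ln b$ in $t$ while the non-autonomous terms $e^{4t},e^{2t}$ are suppressed by the factors $b^{-4},b^{-2}$. Using the three families of solutions---the regular family from $w=0$, the singular fixed point $w_*$, and the decaying family at $r=\infty$---I would formulate the shooting (matching) condition that selects $\lambda(b)$ as the value for which the regular solution joins the decaying outer solution. Linearizing this condition about $(\lambda_\infty,f_\infty)$ and invoking Assumptions \ref{assumption-1} and \ref{assumption-2} (non-degeneracy of the singular solution and non-vanishing of the relevant projection), the implicit function theorem yields a unique $\lambda(b)$ near $\lambda_\infty$ for all $b>b_0$ with $\lambda(b)\to\lambda_\infty$.

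The asymptotics then follow by tracking the leading deviation of the regular solution from $w_*$, which along the stable directions behaves like $e^{\nu_{\pm}t}$; evaluated at the scaling shift $t\sim\ln b$ this produces a factor $b^{\nu_{\pm}}$. For real exponents ($d\ge 13$) the slower-decaying mode $\nu_{+}=\kappa_{+}$ dominates and gives the monotone law $\lambda(b)-\lambda_\infty\sim B_\infty b^{\kappa_{+}}$, whereas for complex exponents ($5\le d\le 12$) the conjugate pair $b^{-\beta\pm i\alpha}$ combines into the oscillatory law $\lambda(b)-\lambda_\infty\sim A_\infty b^{-\beta}\sin(\alpha\ln b+\delta_\infty)$. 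The main obstacle I anticipate is making the matching rigorous uniformly in $b$: one must control the solution through the transition region where the linearization about $w_*$ ceases to be valid and the non-autonomous terms become comparable to the deviation, and show that the shooting functional depends on $b$ precisely through the factor $b^{\nu_{\pm}}$ with a nonzero coefficient. This is where the non-degeneracy Assumptions \ref{assumption-1} and \ref{assumption-2} do the essential work, guaranteeing that the leading mode is genuinely excited ($A_\infty,B_\infty\neq 0$) and that no cancellation degrades the stated decay rate.
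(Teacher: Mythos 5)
Your proposal is correct and follows essentially the same route as the paper's proof: your Emden--Fowler variable $w$ is the paper's $\Psi$, your autonomous system is the truncated equation (\ref{eq-psi-truncated}) whose heteroclinic orbit $\Theta$ is constructed in Lemma \ref{lemma-Psi-inf}, the translation by $\ln b$ with suppression of the $e^{2t},e^{4t}$ terms is Lemma \ref{lemma-Psi}, and the matching of the three solution families ($\Psi_b$, $\Psi_\lambda$, $\Psi_C$) via the implicit function theorem under Assumptions \ref{assumption-1} and \ref{assumption-2} is exactly how Lemmas \ref{lemma-spiral-Psi} and \ref{lemma-spiral-Psi-C} (and their monotone counterparts, Lemmas \ref{lemma-monotone-Psi} and \ref{lemma-monotone-Psi-C}) conclude. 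The uniform-in-$b$ control through the transition region that you rightly flag as the main obstacle is handled in the paper by matching at the intermediate time $t=T+(a-1)\log b$ with the window $a\in\left(0,\tfrac{4}{d}\right)$ (resp.\ $a\in(0,a_0)$), and note one point your sketch glosses over: for $d\geq 13$ the dominant mode on the $\Psi_C$ side as $t\to-\infty$ is $e^{\kappa_- t}$, not $e^{\kappa_+ t}$, so the extra non-degeneracy condition (\ref{degeneracy-3}) is needed to slave $C(b)-C_\infty$ to $\lambda(b)-\lambda_\infty$ and cancel that faster-growing mode before the $b^{\kappa_+}$ law can be extracted.
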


\begin{rem}
	In (\ref{snake}) and (\ref{monotone}), $f(b) \sim g(b)$ denotes
	the asymptotic correspondence in the sense $g(b) \to 0$ as $b \to \infty$ and $\lim\limits_{b \to \infty}\frac{|f(b)-g(b)|}{|g(b)|} = 0$.
	Moreover, the asymptotic correspondence $f(b) \sim g(b)$ can be differentiated term by term.
\end{rem}

\begin{rem}
	\label{rem-unique-3}
	If the value of $\lambda_{\infty}$ in Theorem \ref{theorem-Selem} is not unique, then for each $\lambda_{\infty}$, which is isolated under Assumptions \ref{assumption-1} and \ref{assumption-2}, there exists the solution curve of Theorem \ref{theorem-2} with the oscillatory or monotone behavior. Our numerical results indicate that $\lambda_{\infty}$ in Theorem \ref{theorem-Selem} is unique; moreover, $\lambda(b)$ in Theorem \ref{theorem-1} is unique for every $b > 0$.
\end{rem}

\begin{rem}
	The oscillatory behavior similar to the one in (\ref{snake}) was obtained in \cite{Budd_Norbury1987,Budd1989,DF} for the stationary focusing nonlinear Schr\"odinger equation in a ball and without a harmonic potential. The similarity is explained by the same linearization of the stationary equation near the origin after the Emden--Fowler transformation \cite{F}. While the previous works explore geometric methods, the main approach we undertake to prove Theorem \ref{theorem-2} is based on the functional-analytical methods. In particular, we construct three families of solutions to the same differential equation: one family extends the solution of the initial value problem (\ref{statGPshoot}) in new variables, the other family extends the solution of the initial value problem (\ref{statGPsingular}), and the third family describes solution decaying to zero at infinity. By using our methods, we see necessity of adding technical non-degeneracy assumptions (Assumptions \ref{assumption-1} and \ref{assumption-2}), which were not mentioned previously.
\end{rem}

Figure \ref{fig:blambda} illustrates the result of Theorem \ref{theorem-2} and shows the numerically computed solution curve (the graph of $\lambda$ as a function of $b$) for $d = 5$ (left) and $d = 13$ (right). In agreement with 
Theorem \ref{theorem-2}, we confirm the oscillatory behavior in the former case 
and the monotone behavior in the latter case. We also note that 
the unique value of $\lambda = \lambda(b)$ is found for every $b > 0$ in both cases (see Remarks \ref{rem-unique-1}, \ref{rem-unique-2}, and \ref{rem-unique-3}).

\begin{figure}[htp!]
	\centering
	\includegraphics[width=0.45\textwidth]{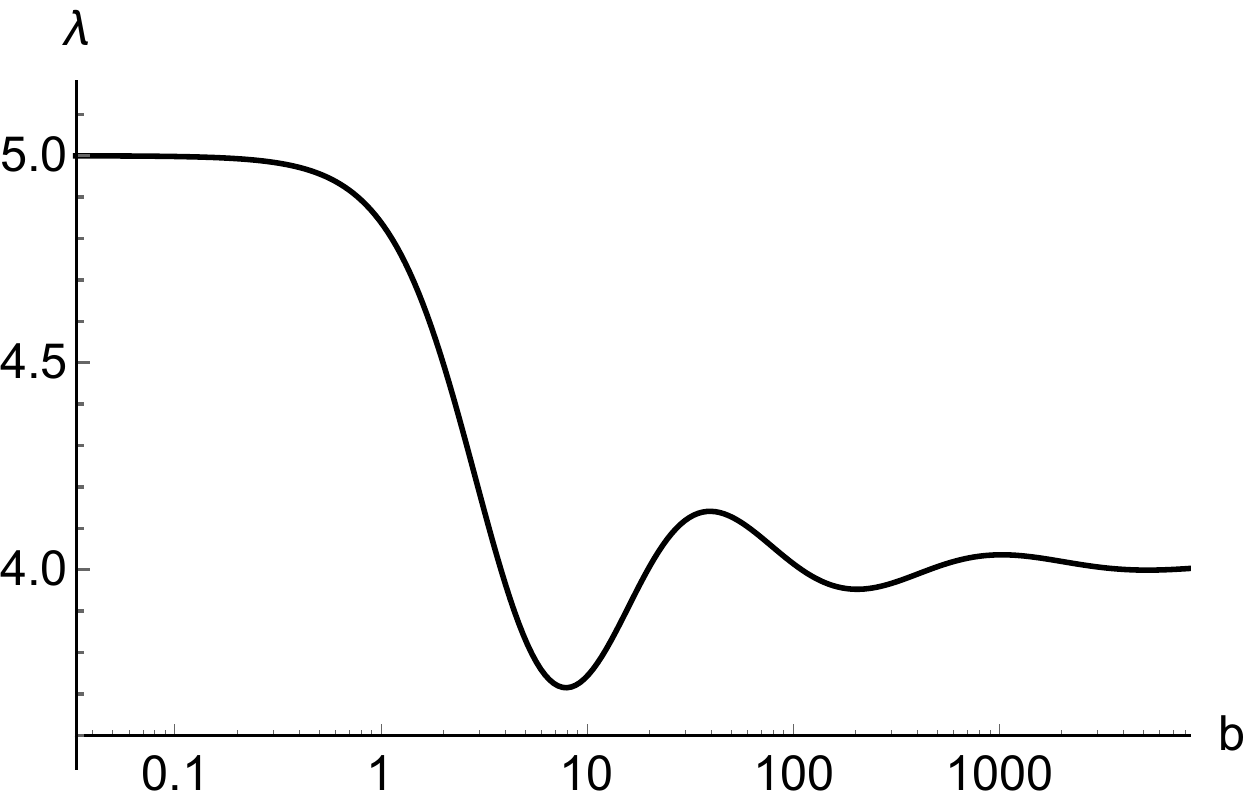}\qquad
	\includegraphics[width=0.45\textwidth]{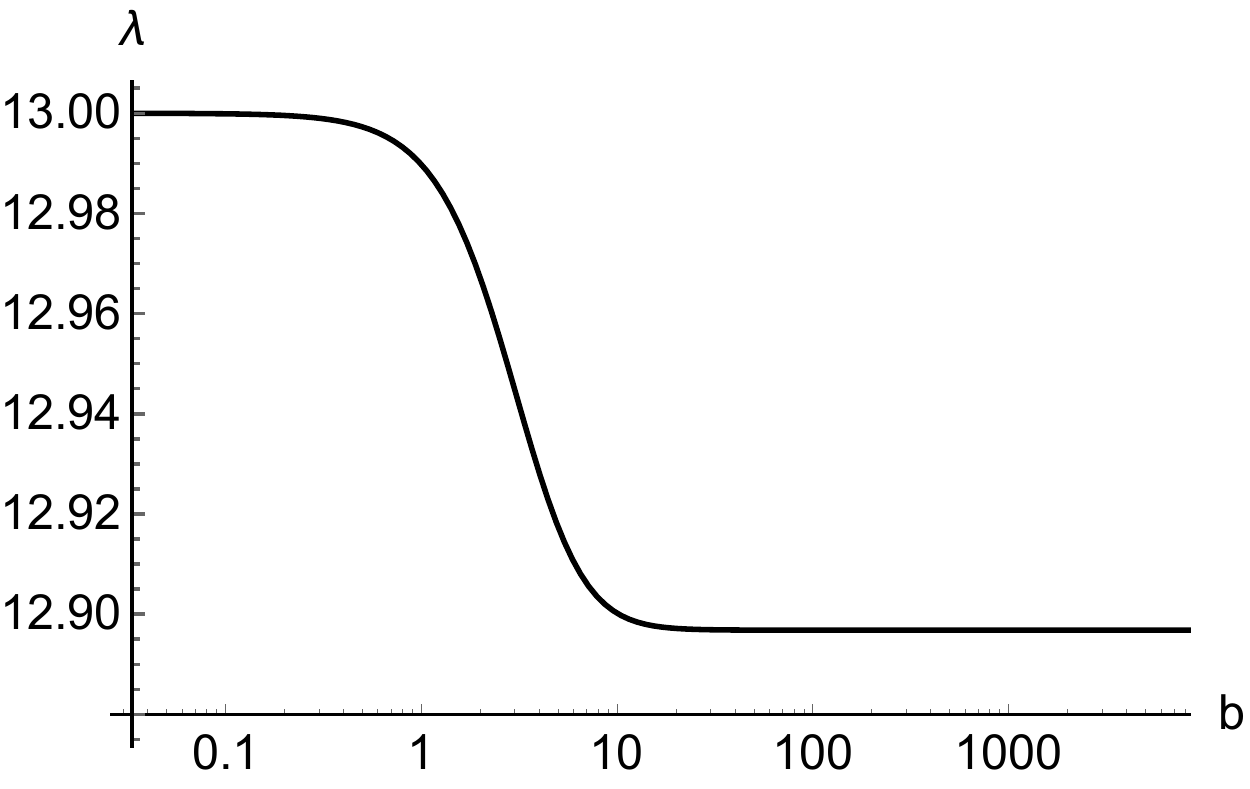}
	\caption{Graph of $\lambda$ as a function of $b$ for the ground state of the boundary-value problem (\ref{statGPrad}) for $d=5$ (left) and $d=13$ (right).}
	\label{fig:blambda}
\end{figure}

Table \ref{tab:notations} lists differential equations, their solutions, 
their asymptotic behaviors, and the relations between the solutions. 
This table helps the readers to get oriented between the differential equations 
(\ref{statGPshoot}) and (\ref{statGPsingular}) as well as 
their analogues (\ref{eq-psi}) and (\ref{eq-psi-singular}) arising 
after the Emden--Fowler transformation.

\vspace{0.2cm}

{\bf Notations.} We denote $A = \mathcal{O}(\varepsilon)$ as $\varepsilon \to 0$ 
if there exists an $\varepsilon_0 > 0$ and an $\epsilon$-independent constant $C > 0$ such that $|A| \leq C \varepsilon$ if $\varepsilon \in (0,\varepsilon_0)$. 

We denote the space of square integrable distributions by $L^2_r$ and equip it with the inner product
$\langle \cdot, \cdot \rangle_{L^2_r}$ and the induced norm $\| \cdot \|_{L^2_r}$, where 
\[
\| u \|_{L^2_r} := \left( \int_0^{\infty} |u(r)|^2 r^{d-1} dr \right)^{1/2}.
\]
The Lebesgue space $L^p_r$ is introduced similarly for $1 \leq p < \infty$, whereas  the space of bounded functions $L^{\infty}_r$ is equipped with the standard supremum norm 
$$\| u \|_{L^{\infty}_r} = \sup\limits_{r \in [0,\infty)} |u(r)|.$$ 
Finally, $H^1(\mathbb{R}^d)$ and $L^{2,1}(\mathbb{R}^d)$ are standard 
Sobolev and weighted Lebesgue spaces of functions in $\mathbb{R}^d$ given by 
$H^1(\mathbb{R}^d) := \{ f \in L^2(\mathbb{R}^d) : f' \in L^2(\mathbb{R}^d)\}$ and $L^{2,1}(\mathbb{R}^d) : \{ f \in L^2(\mathbb{R}^d) : xf \in L^2(\mathbb{R}^d)\}$.

\vspace{0.1cm}

\begin{table}[h]
	\centering
	\begin{tabular}{|c|c|l|l|}
		\hline
		Equation  & Solution & Asymptotic behavior & Relations between the solutions\\
		\hline
		(\ref{statGPshoot}) & $f(r)$ & $f(r) = b + \mathcal{O}(r^2) \;\; \mbox{\rm as} \;\; r \to 0$ & $f \in \mathcal{E}$ if $\lambda = \lambda(b)$:  $f(r) = \mathfrak{u}(r)$\\
		\hline
		(\ref{statGPsingular}) & $F(r)$ & $F(r) = \sqrt{d-3} + \mathcal{O}(r^2) \;\; \mbox{\rm as} \;\; r \to 0$& $f_{\infty} \in \mathcal{E}$ if $\lambda = \lambda_{\infty}$: $f_{\infty}(r) = r^{-1} F(r)$\\
		\hline
		(\ref{eq-psi}) & $\psi(t)$ & $\psi(t) = b + \mathcal{O}(e^{2t}) \;\; \mbox{\rm as} \;\; t \to -\infty$& $\psi(t) = f(e^t)$\\
		\hline
		(\ref{eq-psi-singular})  & $\Psi_b(t)$ & $\Psi_b(t) = b e^t + \mathcal{O}(e^{3t}) \;\; \mbox{\rm as} \;\; t \to -\infty$& $\Psi_b(t) = e^t \psi(t) = e^t f(e^t)$ \\
		\hline
		(\ref{eq-psi-singular})  & $\Psi_\lambda(t)$ & $\Psi_{\lambda}(t) = \sqrt{d-3} + \mathcal{O}(e^{2t}) \;\; \mbox{\rm as} \;\; t \to -\infty$& $\Psi_{\lambda}(t) = F(e^{t})$\\
		\hline  (\ref{eq-psi-singular}) & $\Psi_C(t)$ & $\Psi_C(t) \sim C e^{\frac{\lambda - d + 2}{2} t} e^{-\frac{1}{2} e^{2t}} \;\; \mbox{\rm as} \;\; t \to +\infty$& $\Psi_C(t) = \Psi_{\lambda}(t)$ if $\lambda = \lambda_{\infty}$, $C = C_{\infty}$\\
		\hline
	\end{tabular}
	\vspace{0.1cm} 
	\caption{Table of differential equations and solutions used in this paper.}
	\label{tab:notations}
\end{table}

{\bf Organization of the paper.}  Section \ref{sec-preliminary} describes well-known results about the boundary-value problem (\ref{statGPrad}). Section \ref{sec-ivp} reports analysis of solutions to the initial-value problem (\ref{statGPshoot}). Sections \ref{sec-proof-1} and \ref{sec-proof-2} contain the proof of Theorems \ref{theorem-1} and \ref{theorem-Selem} respectively. The proof of Theorem \ref{theorem-2} is developed in Section \ref{sec-proof-3}. Section \ref{sec-conclusion} 
gives a summary and describes open problems.

\section{Preliminary results}
\label{sec-preliminary}

Here we collect together three well-known results regarding existence of 
nontrivial solutions to the boundary-value problem (\ref{statGPrad}) in the energy space $\mathcal{E}$.

\begin{proposition}
\label{prop-1}
For every $d\geq 1$ and $\lambda \in [d,\infty)$,
no solutions of the boundary-value problem (\ref{statGPrad}) exist in $\mathcal{E}$.
\end{proposition}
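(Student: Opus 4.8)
The plan is to test the stationary equation against the linear ground state $\mathfrak{u}_0(r) = e^{-r^2/2}$, which satisfies $-\Delta \mathfrak{u}_0 + |x|^2 \mathfrak{u}_0 = d\, \mathfrak{u}_0$. Suppose, for contradiction, that for some $\lambda \in [d,\infty)$ there exists a solution $\mathfrak{u} \in \mathcal{E}$ to the boundary-value problem (\ref{statGPrad}). Rewriting the equation as $-\Delta \mathfrak{u} + |x|^2 \mathfrak{u} - \lambda \mathfrak{u} - \mathfrak{u}^3 = 0$, I would multiply by $\mathfrak{u}_0$ and integrate over $\mathbb{R}^d$, equivalently integrate over $r \in (0,\infty)$ against the weight $r^{d-1}$ in the $L^2_r$ pairing.

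The key step is the symmetric integration by parts $\int (-\Delta \mathfrak{u})\, \mathfrak{u}_0\, dx = \int \mathfrak{u}\, (-\Delta \mathfrak{u}_0)\, dx$. Substituting $-\Delta \mathfrak{u}_0 = d\,\mathfrak{u}_0 - |x|^2 \mathfrak{u}_0$, the two $|x|^2$ contributions cancel and the test identity collapses to
\[
(d-\lambda) \int_{\mathbb{R}^d} \mathfrak{u}\, \mathfrak{u}_0\, dx = \int_{\mathbb{R}^d} \mathfrak{u}^3\, \mathfrak{u}_0\, dx .
\]
Both integrals converge and are strictly positive: since $\mathfrak{u}_0$ is Gaussian (hence in every $L^p_r$) and $\mathfrak{u} \in \mathcal{E}$, Cauchy--Schwarz gives $\int \mathfrak{u}\, \mathfrak{u}_0\, dx \leq \|\mathfrak{u}\|_{L^2_r}\,\|\mathfrak{u}_0\|_{L^2_r} < \infty$, while H\"older's inequality with exponents $\tfrac{4}{3}$ and $4$ gives $\int \mathfrak{u}^3\, \mathfrak{u}_0\, dx \leq \|\mathfrak{u}\|_{L^4_r}^3\,\|\mathfrak{u}_0\|_{L^4_r} < \infty$, and positivity follows from $\mathfrak{u} > 0$, $\mathfrak{u}_0 > 0$. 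For $\lambda > d$ the left-hand side is strictly negative while the right-hand side is strictly positive, and for $\lambda = d$ the left-hand side vanishes while the right-hand side is still strictly positive; either way we reach a contradiction, which proves the proposition.

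The main obstacle is to justify that the boundary terms produced by the integration by parts actually vanish. In radial variables these are proportional to $r^{d-1}\bigl(\mathfrak{u}'(r)\,\mathfrak{u}_0(r) - \mathfrak{u}(r)\,\mathfrak{u}_0'(r)\bigr)$ evaluated in the limits $r \to 0^+$ and $r \to \infty$. At the origin the prefactor $r^{d-1}$ together with the boundedness and regularity of the classical solution (in particular $\mathfrak{u}'(0)=0$) forces the term to zero. At infinity the super-Gaussian decay of $\mathfrak{u}_0$ and $\mathfrak{u}_0'(r) = -r\,e^{-r^2/2}$ dominates any at-most-polynomial growth of $\mathfrak{u}$ and $\mathfrak{u}'$; such growth bounds are available a priori from $\mathfrak{u} \in \mathcal{E}$ together with the differential equation, which controls $\mathfrak{u}'$ in terms of $\mathfrak{u}$. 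Making these decay and growth estimates precise is the only genuinely technical part of the argument; once they are in place the algebraic contradiction above is immediate.
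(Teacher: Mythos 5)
Your proof is correct and follows essentially the same route as the paper: the paper likewise projects the equation \eqref{statGPrad} onto the linear ground state $\mathfrak{u}_0(r)=e^{-r^2/2}$, integrates by parts using $-\Delta \mathfrak{u}_0+|x|^2\mathfrak{u}_0 = d\,\mathfrak{u}_0$, and obtains $d-\lambda = \langle \mathfrak{u}_0,\mathfrak{u}^3\rangle_{L^2_r}/\langle \mathfrak{u}_0,\mathfrak{u}\rangle_{L^2_r} > 0$, forcing $\lambda < d$. Your additional discussion of the vanishing boundary terms and of convergence via Cauchy--Schwarz and H\"older is a legitimate filling-in of details the paper leaves implicit, not a different argument.
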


\begin{proof}
It is well known (see, e.g., \cite{He}) that the operator $L_0 := -\Delta + |x|^2$ is self-adjoint in $L^2(\mathbb{R}^d)$. The ground state of $L_0$ is given up to a normalization by the Gaussian function $\mathfrak{u}_0(r) = e^{-\frac{1}{2}r^2}$ and corresponds to the smallest eigenvalue $\lambda_0 = d$. The linear ground state $\mathfrak{u}_0$ satisfies the following boundary value problem:
\begin{equation}
\label{eig-L0}
\left\{ \begin{array}{ll}
\mathfrak{u}_0''(r) + \frac{d-1}{r} \mathfrak{u}_0'(r) - r^2 \mathfrak{u}_0(r) = -d \mathfrak{u}_0(r), \quad & r > 0, \\
\mathfrak{u}_0(r) > 0, \qquad \qquad e'_0(r) < 0, & r > 0, \\
\lim\limits_{r \to 0} \mathfrak{u}_0(r) < \infty, \quad  
\lim\limits_{r \to \infty} \mathfrak{u}_0(r) = 0. & \end{array} \right.
\end{equation}
By projecting \eqref{statGPrad} to $\mathfrak{u}_0$ and integrating by parts with the use of (\ref{eig-L0}), we obtain:
\begin{equation*}
    -d\langle \mathfrak{u}_0, \mathfrak{u} \rangle _{L^2_r} + \lambda\langle \mathfrak{u}_0, \mathfrak{u} \rangle_{L^2_r} + \langle \mathfrak{u}_0, \mathfrak{u}^3 \rangle_{L^2_r} = 0
\end{equation*}
which implies 
\begin{equation*}
    d-\lambda = \frac{\langle \mathfrak{u}_0, \mathfrak{u}^3 \rangle_{L^2_r}}{\langle \mathfrak{u}_0, \mathfrak{u} \rangle_{L^2_r}}.
\end{equation*}
Since $\langle \mathfrak{u}_0, \mathfrak{u} \rangle_{L^2_r} > 0$ and $\langle \mathfrak{u}_0, \mathfrak{u}^3 \rangle_{L^2_r} > 0$, we must have $\lambda<d$ for every solution $\mathfrak{u} \in \mathcal{E}$ of the boundary-value problem \eqref{statGPrad}.
\end{proof}

\begin{proposition}
\label{prop-2}
For every $d \geq 4$ and $\lambda \in (-\infty,d-4]$,
no solutions of the boundary-value problem (\ref{statGPrad}) exist in $\mathcal{E}$.
\end{proposition}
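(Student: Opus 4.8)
The plan is to combine two integral identities for any hypothetical solution $\mathfrak{u} \in \mathcal{E}$ of (\ref{statGPrad}): the energy (Nehari) identity obtained by testing the equation against $\mathfrak{u}$, and the Pohozaev (virial) identity obtained by testing against the dilation field $x \cdot \nabla \mathfrak{u} = r \mathfrak{u}'(r)$. Eliminating the quartic term between them collapses the problem to a single linear relation among the three quadratic quantities
\[
A := \int_{\mathbb{R}^d} |\nabla \mathfrak{u}|^2\,dx, \qquad B := \int_{\mathbb{R}^d} |x|^2 \mathfrak{u}^2\,dx, \qquad N := \int_{\mathbb{R}^d} \mathfrak{u}^2\,dx,
\]
all of which are finite and, for $\mathfrak{u} \not\equiv 0$, positive by definition (\ref{energy-space}). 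The threshold $d-4$ then emerges from the ground-state energy $\lambda_0 = d$ of $L_0 = -\Delta + |x|^2$ already used in Proposition \ref{prop-1}.

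First I would record the two identities. Testing (\ref{statGP}) against $\mathfrak{u}$ gives
\[
A + B - \lambda N - Q = 0, \qquad Q := \int_{\mathbb{R}^d} \mathfrak{u}^4\,dx.
\]
Testing against $x \cdot \nabla \mathfrak{u}$ and using the standard computations $\int (-\Delta \mathfrak{u})(x\cdot\nabla\mathfrak{u}) = -\tfrac{d-2}{2} A$, $\int |x|^2 \mathfrak{u}\,(x\cdot\nabla\mathfrak{u}) = -\tfrac{d+2}{2} B$, $\int \mathfrak{u}\,(x\cdot\nabla\mathfrak{u}) = -\tfrac{d}{2} N$, and $\int \mathfrak{u}^3 (x\cdot\nabla\mathfrak{u}) = -\tfrac{d}{4} Q$ yields the virial identity
\[
-\frac{d-2}{2} A - \frac{d+2}{2} B + \frac{\lambda d}{2} N + \frac{d}{4} Q = 0.
\]
Substituting $Q = A + B - \lambda N$ from the first identity into the second and simplifying produces the clean relation
\[
\lambda d N = (d-4) A + (d+4) B = (d-4)(A+B) + 8 B.
\]

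The concluding step is a spectral inequality. Since the bottom of the spectrum of $L_0 = -\Delta + |x|^2$ is $\lambda_0 = d$ (recalled in the proof of Proposition \ref{prop-1}), the quadratic-form bound $A + B \geq d N$ holds for every $\mathfrak{u}$ in the form domain. Because $d \geq 4$ makes $d-4 \geq 0$, while $B > 0$ and $N > 0$ for any nontrivial $\mathfrak{u}$, the relation above forces
\[
\lambda d N = (d-4)(A+B) + 8B \geq (d-4)\, d N + 8 B > (d-4)\, d N,
\]
hence $\lambda > d-4$. Thus no solution of (\ref{statGPrad}) can exist in $\mathcal{E}$ when $\lambda \leq d-4$, which is the claim.

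The main obstacle is the rigorous justification of the virial identity, where the integrations by parts generate boundary terms at $r = 0$ and $r = \infty$. At $r=0$ these vanish because $\mathfrak{u}$ and $\mathfrak{u}'$ are bounded and $d \geq 2$. At infinity the terms are schematically $r^d (\mathfrak{u}')^2$, $r^{d+2}\mathfrak{u}^2$, $r^d \mathfrak{u}^4$, and $r^{d-1}\mathfrak{u}'\mathfrak{u}$, which $\mathcal{E}$-membership alone does not control pointwise. I would handle this by performing the computation on a truncated interval $[\epsilon, R]$ and passing to the limit: since the integrand $g(r) = r^{d-1}\big[(\mathfrak{u}')^2 + (1+r^2)\mathfrak{u}^2 + \mathfrak{u}^4\big]$ is integrable (its integral is a combination of $A$, $B$, $N$, $Q$), one has $\liminf_{R\to\infty} R\,g(R) = 0$, so there is a sequence $R_n \to \infty$ along which every boundary term vanishes (the cross term being dominated via $2|\mathfrak{u}'\mathfrak{u}| \leq (\mathfrak{u}')^2 + \mathfrak{u}^2$). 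The interior integrals converge to $A, B, N, Q$ by finiteness, so both identities hold in the limit. Alternatively, one may invoke the Gaussian-type decay $\mathfrak{u}(r) \lesssim e^{-r^2/2}$ forced by the confining potential, which makes all boundary terms vanish outright.
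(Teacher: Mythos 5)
Your proof is correct and takes essentially the same route as the paper's: the paper derives the same two identities (testing against $\mathfrak{u}$ and against the dilation field, giving (\ref{eq:poh_id1})--(\ref{eq:poh_id2})), combines them into the Pohozaev identity (\ref{pohozaev}), and then inserts the ground-state bound (\ref{rayleigh}) for $L_0 = -\Delta + |x|^2$, arriving at exactly your conclusion $\lambda \geq d-4 + \frac{8}{d}\,\|r\mathfrak{u}\|_{L^2_r}^2/\|\mathfrak{u}\|_{L^2_r}^2 > d-4$. The only differences are cosmetic: you eliminate the quartic term $Q$ first rather than the gradient term, which sidesteps the paper's division by $d-4$ and its separate handling of the case $d=4$, and you spell out the truncation argument for the boundary terms at $r\to\infty$ that the paper leaves implicit.
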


\begin{proof}
It follows from multiplication of (\ref{statGPrad}) by $r^{d-1} \mathfrak{u}$ that if $\mathfrak{u} \in \mathcal{E}$, then
\begin{equation}\label{eq:poh_id1}
    \norm{\mathfrak{u}'}^2_{L^2_r} + \norm{r \mathfrak{u}}^2_{L^2_r}-\lambda\norm{\mathfrak{u}}^2_{L^2_r}-\norm{\mathfrak{u}}^4_{L^4_r} = 0.
\end{equation}
Similarly, it follows from multiplication of (\ref{statGPrad}) by $r^d \mathfrak{u}'(r)$ and integration by parts that 
\begin{equation}\label{eq:poh_id2}
    (d-2)\norm{\mathfrak{u}'}^2_{L^2_r}+(d+2)\norm{r \mathfrak{u}}^2_{L^2_r}-\lambda d \norm{\mathfrak{u}}^2_{L^2_r} -\frac{1}{2}d\norm{\mathfrak{u}}^4_{L^4_r} = 0.
\end{equation}
Combining \eqref{eq:poh_id1} and \eqref{eq:poh_id2} yields the Pohozaev identity \cite{P}:
\begin{equation}
\label{pohozaev}
    4\norm{r \mathfrak{u}}^2_{L^2_r}-2\lambda\norm{\mathfrak{u}}^2_{L^2_r}+\frac{1}{2}\left(d-4\right)\norm{\mathfrak{u}}^4_{L^4_r} = 0.
\end{equation}
Hence, no nonzero solution $\mathfrak{u} \in \mathcal{E}$ exists if $\lambda\leq 0$ and $d\geq 4$.

Furthermore, since $d$ is the lowest eigenvalue of $L_0 = -\Delta + r^2$, we obtain similarly to \cite{BN}:
\begin{equation}\label{rayleigh}
d \| \mathfrak{u} \|_{L^2}^2 \leq \| \mathfrak{u}'\|_{L^2}^2+\|r \mathfrak{u}\|_{L_2}^2 = \lambda \| \mathfrak{u} \|_{L^2}^2 + \| \mathfrak{u} \|_{L^4}^4.
\end{equation}
If $d \neq 4$, then $\| \mathfrak{u} \|_{L^4}^4$ can be expressed by using  (\ref{pohozaev}), after which inequality (\ref{rayleigh})  yields
\begin{equation}\label{bound1}
\lambda \geq d-4 +\frac{8}{d} \frac{\|r \mathfrak{u}\|_{L_2}^2}{\| \mathfrak{u}\|_{L_2}^2},
\end{equation}
hence no nonzero solution $\mathfrak{u} \in \mathcal{E}$ exists if $\lambda \leq d-4$.
\end{proof}

\begin{proposition}
\label{prop-3}
For every $d\geq 1$, there exists a unique solution of the boundary-value problem (\ref{statGPrad}) in $\mathcal{E} \cap L^{\infty}$
for $\lambda \in (d - \delta, d)$ with sufficiently small $\delta$ such that $\| \mathfrak{u} \|_{L^{\infty}_r} \to 0$ as $\lambda \to d$.
\end{proposition}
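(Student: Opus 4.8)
The plan is to treat the boundary-value problem (\ref{statGPrad}) as a bifurcation problem from the simple lowest eigenvalue $\lambda_0 = d$ of the linear operator $L_0 = -\Delta + |x|^2$ and to construct the small-amplitude branch by a Lyapunov--Schmidt reduction (the rigorous version of the formal computation behind (\ref{continuation})). Recall that $L_0$ is self-adjoint on $L^2_r(\mathbb{R}^+)$ with compact resolvent, so its radial spectrum is discrete; the lowest eigenvalue $\lambda_0 = d$ is simple with the positive eigenfunction $\mathfrak{u}_0(r) = e^{-r^2/2}$ satisfying (\ref{eig-L0}), and the next radial eigenvalue is $d + 4$. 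Writing the stationary equation as $F(\lambda, u) := (L_0 - \lambda) u - u^3 = 0$, I would first fix a functional setting in which $F$ is a smooth map and $F_u(\lambda, 0) = L_0 - \lambda$ is Fredholm of index zero; a convenient choice is to work on the domain of $L_0$ (the weighted Sobolev space associated with the form $\|u'\|_{L^2_r}^2 + \|ru\|_{L^2_r}^2$), which embeds into $L^\infty_r \cap \mathcal{E}$ and on which $u \mapsto u^3$ is smooth.

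First I would decompose $u = a \mathfrak{u}_0 + w$ with $w \perp \mathfrak{u}_0$ in $L^2_r$ and $a \in \mathbb{R}$. Since $L_0 - \lambda$ is boundedly invertible on the orthogonal complement $\{\mathfrak{u}_0\}^\perp$ for $\lambda$ in a neighborhood of $d$ (using the spectral gap to $d + 4$), the projection of $F = 0$ onto $\{\mathfrak{u}_0\}^\perp$ can be solved by the implicit function theorem for a unique small $w = w(a, \lambda)$, which is cubic to leading order, hence $w = \mathcal{O}(a^3)$. Substituting into the one-dimensional projection onto $\mathfrak{u}_0$ and dividing by $a \neq 0$ yields the bifurcation equation
\begin{equation*}
(d - \lambda)\, \| \mathfrak{u}_0 \|_{L^2_r}^2 = a^2\, \langle \mathfrak{u}_0, \mathfrak{u}_0^3 \rangle_{L^2_r} + \mathcal{O}(a^4).
\end{equation*}
A direct Gaussian computation gives $\langle \mathfrak{u}_0, \mathfrak{u}_0^3 \rangle_{L^2_r} / \| \mathfrak{u}_0 \|_{L^2_r}^2 = 2^{-d/2}$, so $\lambda = d - 2^{-d/2} a^2 + \mathcal{O}(a^4)$, which reproduces (\ref{continuation}) and shows the branch is subcritical, consistent with Proposition \ref{prop-1}. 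Since $\lambda$ is strictly decreasing in $a$ for small $a > 0$, this relation inverts to a unique $a = a(\lambda) > 0$ for each $\lambda \in (d - \delta, d)$ with $\delta$ small, giving $\| u \|_{L^\infty_r} \to 0$ as $\lambda \to d$ and the claimed uniqueness of the bifurcating solution.

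It then remains to verify that the constructed $u$ solves (\ref{statGPrad}) in the required sense. Membership $u \in \mathcal{E} \cap L^\infty_r$ and decay to zero at infinity follow from $u \in D(L_0)$ together with elliptic regularity and the Gaussian decay of functions in the domain of $L_0$; positivity holds because $u = a \mathfrak{u}_0 + \mathcal{O}(a^3)$ with $\mathfrak{u}_0 > 0$ bounded away from zero on compact sets, while the correction is of higher order in $a$.

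The main obstacle I anticipate is establishing the strict monotonicity $\mathfrak{u}'(r) < 0$ for all $r > 0$, since the leading profile satisfies $\mathfrak{u}_0'(r) = -r\, e^{-r^2/2}$, which degenerates both at $r = 0$ and as $r \to \infty$; consequently closeness to $\mathfrak{u}_0$ in norm does not by itself control the sign of $\mathfrak{u}'$ near these two endpoints. I would treat the two regimes separately, combining a strong (weighted $C^1$) topology in the bifurcation construction with a phase-plane / maximum-principle analysis of the radial ODE near the origin and near infinity, where the linearization forces the sign of $\mathfrak{u}'$ once positivity and decay are known. A secondary technical point is the precise choice of function space that simultaneously guarantees the Fredholm property of $L_0 - \lambda$ and smoothness of the cubic nonlinearity, which I would fix at the outset.
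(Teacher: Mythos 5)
Your overall strategy (Lyapunov--Schmidt reduction from the simple lowest eigenvalue $\lambda_0 = d$, solving the complementary part by the implicit function theorem using the spectral gap to $d+4$, and inverting the reduced equation $\lambda = d - 2^{-d/2}a^2 + \mathcal{O}(a^4)$) is exactly the route the paper takes, and your Gaussian computation $\langle \mathfrak{u}_0, \mathfrak{u}_0^3\rangle_{L^2_r}/\|\mathfrak{u}_0\|_{L^2_r}^2 = 2^{-d/2}$ is correct and consistent with (\ref{continuation}). However, there is a genuine gap at precisely the point you set aside as ``a secondary technical point'': the claim that the domain of $L_0$ ``embeds into $L^\infty_r \cap \mathcal{E}$ and on which $u \mapsto u^3$ is smooth'' is false in the dimensions this paper is chiefly about. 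The graph-norm domain of $L_0 = -\Delta + |x|^2$ sits inside $H^2(\mathbb{R}^d)$ with weights; the weights improve decay at infinity but not local regularity, and $H^2(\mathbb{R}^d) \hookrightarrow L^\infty(\mathbb{R}^d)$ fails for $d \geq 4$, while $H^2(\mathbb{R}^d) \hookrightarrow L^6(\mathbb{R}^d)$ fails for $d \geq 7$ (the Sobolev exponent is $2d/(d-4) < 6$ there). Hence for $d \geq 7$ the cubic map does not even send $D(L_0)$ into $L^2_r$, so the implicit function theorem cannot be invoked in your setting at all, and for $d = 4,5,6$ the reduction closes but the asserted $L^\infty_r$-smallness does not come from the function space. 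This is exactly why the paper's proof splits into two cases: for $1 \leq d \leq 4$ it cites the standard Lyapunov--Schmidt theory (Theorem 2.1 in \cite{Selem2011}), while for $d \geq 5$ it cites Theorem 5 in \cite{SK2012}, where the nonlinear term is first compactified (truncated) so that the reduction applies, and a Moser iteration then provides an a priori $L^\infty_r$ bound on the bifurcating solution, showing a posteriori that the truncation is inactive and that $\| \mathfrak{u} \|_{L^\infty_r} \to 0$ as $\lambda \to d$. Without this regularization-plus-iteration device (or an equivalent), your argument proves the proposition only in the subcritical/critical range and misses the supercritical dimensions $d \geq 5$ which the statement must cover; note also that local uniqueness among all small-$L^\infty_r$ solutions likewise requires this a priori control, since one must know every such solution lies in the neighborhood where the reduction is valid.

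Conversely, the difficulty you single out as the main obstacle --- strict monotonicity $\mathfrak{u}'(r) < 0$ near $r=0$ and $r=\infty$ --- is not where the work lies: once positivity, boundedness and decay of the bifurcating solution are established, radial monotone decrease follows from the moving-planes argument of \cite{LiNi} already invoked in the introduction of the paper, so no separate phase-plane analysis is needed.
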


\begin{proof}
For $1 \leq d \leq 4$, the proof follows by the standard Lyapunov--Schmidt theory (see Theorem 2.1 in \cite{Selem2011} and
references therein). For $d \geq 5$, the proof follows by the compactification of the nonlinear term
for the standard Lyapunov--Schmidt theory and by the Moser's iteration argument to control the $L^{\infty}$-norm of
the bifurcating solution and thus the nonlinear term (see Theorem 5 in \cite{SK2012} and references therein).
\end{proof}

\section{Existence of solutions to the initial-value problem (\ref{statGPshoot})}
\label{sec-ivp} 

Here we consider the differential equation 
\begin{equation}\label{eq}
f''(r) + \frac{d-1}{r} f'(r) -r^2 f(r) + \lambda f(r) + f(r)^3 = 0, \quad r > 0,
\end{equation}
and prove several results regarding existence of classical solutions to this 
differential equation.

The first result shows that the additional condition
$f'(0) = 0$ does not over-determine the initial-value problem (\ref{statGPshoot}) at the singularity point $r = 0$ as long as 
the classical solution $f(r)$ to the differential equation (\ref{eq}) 
is bounded as $r \to 0$.

\begin{lemma}
\label{prop-4}
For every $d\geq 1$ and every $\lambda \in \mathbb{R}$, assume that there exists a classical solution $f \in C^2(0,r_0)$, $r_0 > 0$ to the differential equation
(\ref{eq}) such that $\displaystyle f(0) := \lim_{r \to 0} f(r) < \infty$. Then, $$
\displaystyle f'(0) := \lim_{r \to 0} f'(r) = 0.
$$
\end{lemma}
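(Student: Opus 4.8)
The plan is to put the differential equation \eqref{eq} into divergence (Sturm--Liouville) form so that the singular drift $\frac{d-1}{r}f'$ is absorbed into the weight $r^{d-1}$. Writing $\phi(r) := r^{d-1} f'(r)$ and $g(r) := r^2 f(r) - \lambda f(r) - f(r)^3$, equation \eqref{eq} is equivalent to $\phi'(r) = r^{d-1} g(r)$ on $(0,r_0)$. The only input from the hypothesis is that $\lim_{r\to 0} f(r)$ is finite: this makes $f$, and hence $g$, bounded on some interval $(0,r_1)$, so that $s \mapsto s^{d-1}g(s)$ is absolutely integrable near $0$.

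First I would integrate this identity from $0$. Fixing $r_* \in (0,r_0)$ and using the convergence of $G(r) := \int_0^r s^{d-1} g(s)\,ds$, I get $\phi(r) = L + G(r)$ on $(0,r_0)$ for a constant $L$, so that $L = \lim_{r \to 0} r^{d-1} f'(r)$ exists and is finite. The bound $|g| \le C$ near $0$ gives $|G(r)| \le \tfrac{C}{d}\, r^{d}$, i.e. $G(r) = \mathcal{O}(r^{d})$, and therefore $f'(r) = L\, r^{1-d} + \mathcal{O}(r)$ as $r \to 0$. In the language of the regular singular point $r = 0$, whose indicial exponents are $0$ and $2-d$, the constant $L$ is precisely the amplitude of the singular $r^{2-d}$ branch.

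The crux --- and the step I expect to be the main obstacle --- is to show $L = 0$, using finiteness of $f(0)$ as the selection principle. Integrating $f'$ over $(\epsilon, r_*)$ yields $f(r_*) - f(\epsilon) = L \int_\epsilon^{r_*} s^{1-d}\,ds + \int_\epsilon^{r_*} s^{1-d} G(s)\,ds$, where the second integral converges as $\epsilon \to 0$ because its integrand is $\mathcal{O}(s)$. For $d \ge 2$ the first integral diverges as $\epsilon \to 0$ (like $\ln(r_*/\epsilon)$ when $d = 2$ and like $\epsilon^{2-d}$ when $d \ge 3$), whereas the left-hand side tends to the finite limit $f(r_*) - f(0)$; hence $L$ must vanish. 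Putting $L = 0$ back gives $f'(r) = r^{1-d} G(r) = \mathcal{O}(r) \to 0$, which is the assertion.

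It remains to record the borderline case: the divergence of $\int_0 s^{1-d}\,ds$ that forces $L = 0$ is available exactly for $d \ge 2$. For $d = 1$ the drift term is absent, \eqref{eq} is regular at $r = 0$, and the above already shows $\lim_{r\to 0} f'(r) = L$ exists; the value $f'(0) = 0$ is then understood through the even-reflection symmetry of the radial reduction $r = |x|$, under which admissible solutions extend to even functions. I do not expect any analytic difficulty beyond the uniform bound $|g|\le C$ near $0$, which is exactly what the hypothesis $f(0) < \infty$ supplies; the rest is one integration and the elementary divergence of $\int_0 s^{1-d}\,ds$.
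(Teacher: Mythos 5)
Your proof is correct, and it shares the paper's skeleton --- rewrite \eqref{eq} in the divergence form $\frac{d}{dr}\left[r^{d-1}f'(r)\right] = r^{d-1}\left[r^2 f(r)-\lambda f(r)-f(r)^3\right]$ and integrate from the origin --- but it executes the crucial step more completely than the paper does. The paper asserts outright that integrability of the right-hand side gives $\lim_{r\to 0} r^{d-1}f'(r)=0$ and then evaluates $\lim_{r\to 0}f'(r)$ by L'Hospital's rule; integrability by itself only yields that $r^{d-1}f'(r)$ has \emph{some} finite limit $L$, and your argument supplies the missing selection principle: if $L\neq 0$, then $f'(r)=L r^{1-d}+\mathcal{O}(r)$ integrates to a divergent contribution ($\ln(r_*/\epsilon)$ for $d=2$, $\epsilon^{2-d}$ for $d\geq 3$), contradicting $f(0)<\infty$; and once $L=0$, the direct bound $f'(r)=r^{1-d}G(r)=\mathcal{O}(r)$ gives the conclusion with no need for L'Hospital at all. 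You are also right to flag $d=1$ as a genuine borderline: there the hypotheses as literally stated do not force $f'(0)=0$ (a local solution with $f(0)=b$ and $f'(0)=c\neq 0$ satisfies them), and the paper's argument does not cover this case either --- for $d=1$ the quotient has denominator $r^{d-1}\equiv 1$, so there is no indeterminate form, and the asserted vanishing of $\lim_{r\to 0} r^{d-1}f'(r)$ is precisely the conclusion being proved. Your resolution via even reflection of radial solutions is the natural reading, and the gap is harmless for the paper, which only invokes this lemma alongside results requiring $d\geq 3$. In short: same route, but your version closes a step the paper leaves implicit and honestly delimits the range of $d$ for which the stated hypotheses suffice.
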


\begin{proof}
One can rewrite the differential equation \eqref{eq} in the self-adjoint form:
\begin{equation}\label{eq:self_adj}
\frac{d}{dr}\left[r^{d-1}f'(r)\right] = 
r^{d-1}\left[ r^2 f(r) -\lambda f(r) - f(r)^3 \right].
\end{equation}
The right-hand side of \eqref{eq:self_adj} is integrable as $r\to 0$ if $f$ is bounded near $r=0$. Then $\displaystyle \lim_{r\to 0}r^{d-1}f'(r)=0$ and integration of \eqref{eq:self_adj} on $[0,r]$ yields
\begin{equation}
    f'(r) = \frac{1}{r^{d-1}}\int_0^r s^{d-1}\left[s^2f(s)-\lambda f(s) - f(s)^3\right]ds.
\end{equation}
Since $\displaystyle\lim_{r\to 0}f'(r)$ is an indeterminate form $\left[\frac{0}{0}\right]$, we can apply L'Hospital's rule and obtain 
\begin{equation*}
\lim_{r\to 0}f'(r) = \lim_{r\to 0} \frac{r^{d-1}\left[r^2f(r)-\lambda f(r) - f(r)^3\right]}{\left(d-1\right)r^{d-2}} = \frac{1}{d-1}\lim_{r\to 0}r\left[r^2f(r)-\lambda f(r) - f(r)^3\right] = 0,
\end{equation*}
since $f$ is bounded near $r=0$. Hence, $\displaystyle f'(0):=\lim_{r\to 0 }f'(r)=0$.
\end{proof}

\begin{rem}
	A similar result but for $d \geq 5$ can be stated about the initial-value problem (\ref{statGPsingular}). If $F(0) = \sqrt{d-3}$ for a classical solution $F \in C^2(0,r_0)$ with $r_0 > 0$, then $F'(0) = 0$. Indeed, the differential equation in the initial-value problem (\ref{statGPsingular}) can be written in the self-adjoint form:
	$$
	\frac{d}{dr} \left[ r^{d-3} F'(r) \right] = r^{d-5} \left[ (d-3) F(r) - F(r)^3 - \lambda r^2 F(r) + r^4 F(r) \right]
	$$
	The right-hand side is integrable for $d \geq 5$ so that integration gives 
	$$
	F'(r) = \frac{1}{r^{d-3}} \int_0^r s^{d-5} \left[ (d-3) F(s) - F(s)^3 - \lambda s^2 F(s) + s^4 F(s) \right] ds
	$$
	By using the L'Hospital's rule twice, we get if $F(0) = \sqrt{d-3}$:
\begin{eqnarray*}
	\lim_{r \to 0} F'(r) & = & \lim_{r\to 0} \frac{(d-3) F(r) - F(r)^3 - \lambda r^2 F(r) + r^4 F(r)}{(d-3) r} \\
	& = & \lim_{r\to 0} \frac{(d-3) - 3 F(r)^2}{(d-3)} F'(r) \\
	& = & -2 \lim_{r \to 0} F'(r),
\end{eqnarray*}
	so that $\lim\limits_{r\to 0} F'(r) = 0$.
\end{rem}

Singularity at $r = 0$ of the differential equation (\ref{eq}) is unfolded 
using the following Emden--Fowler transformation \cite{F}: 
\begin{equation}
\label{transformation-r-t}
r = e^t, \qquad f(r) = \psi(t), \qquad f'(r) = e^{-t} \psi'(t).
\end{equation} 
By chain rule, the second-order differential equation (\ref{eq}) for $f(r)$ becomes
\begin{equation}
\label{eq-psi}
\psi''(t) + (d-2) \psi'(t) = -e^{2t} \left( \lambda \psi(t) + \psi(t)^3 \right) + e^{4t} \psi(t), \quad t \in \mathbb{R}.
\end{equation} 
The next result guarantees that there exists a unique local classical solution to the initial-value problem (\ref{statGPshoot}). The proof is developed from analysis of the existence of the bounded solutions of the differential equation (\ref{eq-psi}) as $t \to -\infty$.

\begin{lemma}
\label{lemma-1}
For every $d \geq 3$, $\lambda \in \mathbb{R}$, and $b > 0$, there exists $r_0 > 0$
and a unique classical solution $f \in C^2(0,r_0)$ to the initial-value problem (\ref{statGPshoot})
such that $f(r) > 0$ and $f'(r) < 0$ for $r \in (0,r_0)$.
\end{lemma}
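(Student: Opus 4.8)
The plan is to recast the singular initial-value problem \eqref{statGPshoot} as a Volterra integral equation with a \emph{bounded} kernel and to solve it by the Banach contraction principle on a short interval $[0,r_0]$. The only obstruction to a direct application of Picard--Lindel\"of is the singular coefficient $\frac{d-1}{r}$ at $r=0$, and the whole point of the argument is that this singularity is harmless once the equation is put in the self-adjoint form used in the proof of Lemma \ref{prop-4}.

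Concretely, I would start from
\[
\frac{d}{dr}\left[r^{d-1}f'(r)\right]=r^{d-1}\left[r^2 f(r)-\lambda f(r)-f(r)^3\right],
\]
integrate twice using $\lim_{r\to 0}r^{d-1}f'(r)=0$ and $f(0)=b$, and interchange the order of integration. For $d\geq 3$ this produces the fixed-point equation $f=Tf$ with
\[
(Tf)(r):=b+\frac{1}{d-2}\int_0^r\left[s-r^{2-d}s^{d-1}\right]\left[s^2 f(s)-\lambda f(s)-f(s)^3\right]ds.
\]
The kernel $K(r,s)=\frac{s}{d-2}\left(1-(s/r)^{d-2}\right)$ is nonnegative and satisfies $K(r,s)\leq r_0$ for $0\leq s\leq r\leq r_0$, so it is bounded rather than singular. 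Using that the cubic term is Lipschitz on bounded sets, one checks that $T$ maps the closed ball $B=\{f\in C([0,r_0]):\|f-b\|_{L^\infty}\leq b/2\}$ into itself and is a contraction, provided $r_0$ is small depending on $b,\lambda,d$. Banach's theorem then yields a unique fixed point $f\in C([0,r_0])$, which by construction stays in $[b/2,3b/2]$, hence $f(r)>0$.

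It remains to bootstrap regularity and extract the sign information. Continuity of $f$ makes the right-hand side of $f=Tf$ differentiable, so $f\in C^1$, and then $f\in C^2(0,r_0)$ follows from the differential equation \eqref{eq}; the initial conditions $f(0)=b$ and $f'(0)=0$ hold by construction (the latter is also Lemma \ref{prop-4}). For the monotonicity I would use the first integral
\[
f'(r)=\frac{1}{r^{d-1}}\int_0^r s^{d-1}\left[s^2 f(s)-\lambda f(s)-f(s)^3\right]ds,
\]
whose integrand tends to $-b(\lambda+b^2)$ as $s\to 0$; this gives the leading behaviour $f'(r)\sim -\frac{b(\lambda+b^2)}{d}\,r$ as $r\to 0^+$, so that $f'(r)<0$ on $(0,r_0)$ after shrinking $r_0$ if necessary, provided $\lambda+b^2>0$ (which in particular covers the range $\lambda\in(d-4,d)$, $d\geq 4$, used later). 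Uniqueness within $C^2(0,r_0)$ is then automatic, since any classical solution bounded near $0$ satisfies the same integral equation and the contraction has a unique fixed point.

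The main obstacle is purely the singular coefficient at $r=0$: everything reduces to verifying that the effective kernel $K(r,s)$ is bounded after the self-adjoint reduction and the interchange of integrals, and to pinning down the sign of the leading coefficient of $f'$ through $\operatorname{sign}(\lambda+b^2)$. This is exactly the analytic shadow of the Emden--Fowler picture: under $r=e^t$ the equation becomes \eqref{eq-psi}, whose linear part $\psi''+(d-2)\psi'$ carries the two homogeneous modes $1$ and $e^{-(d-2)t}$; the second blows up as $t\to-\infty$ and is excluded by boundedness, so the bounded solution is selected by the constant mode $\psi\equiv b$, which is precisely the selection that the finite value $f(0)=b$ encodes in the $r$-variable.
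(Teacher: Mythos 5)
Your proposal is correct and is, in substance, the paper's own proof in different variables: under the substitution $t=\log r$, $t'=\log s$, your operator $T$ with kernel $K(r,s)=\frac{s}{d-2}\bigl(1-(s/r)^{d-2}\bigr)$ is \emph{identical} to the paper's Volterra operator $A$ in (\ref{volterra-psi}), whose kernel is $\frac{1}{d-2}\bigl[1-e^{-(d-2)(t-t')}\bigr]$ acting on the nonlinearity $F(\psi,t')=-e^{2t'}(\lambda\psi+\psi^3)+e^{4t'}\psi$. The contraction on a small ball around the constant $b$, the $C^0\to C^1\to C^2$ bootstrap, and uniqueness among solutions bounded near the singular point all correspond line by line; the only cosmetic difference is that the paper contracts in $L^{\infty}(-\infty,t_0)$ after the Emden--Fowler transformation while you contract in $C([0,r_0])$ directly, which is arguably more elementary since it avoids introducing the $t$-variable at this stage.

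The one substantive point of divergence is the sign of $f'$, and there you are more careful than the paper. The lemma asserts $f'<0$ on $(0,r_0)$ for \emph{every} $\lambda\in\mathbb{R}$, and the paper concludes $\psi'(t)<0$ from $\psi'(t)\approx -(\lambda b+b^3)d^{-1}e^{2t}$ (cf.\ (\ref{asymptotic-psi-prime})) without comment; this deduction, exactly like yours, requires $\lambda b+b^3>0$, i.e.\ your condition $\lambda+b^2>0$. For $\lambda\le -b^2$ the monotonicity claim genuinely fails: at $\lambda=-b^2$ one computes from the first integral that $f(r)=b+\frac{b}{4(d+2)}\,r^4+\mathcal{O}(r^6)$, so $f$ \emph{increases} near the origin, and for $\lambda<-b^2$ one has $f'(r)\sim -\frac{b(\lambda+b^2)}{d}\,r>0$. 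So your explicitly restricted statement is the correct one, and, as you note, it covers the range $\lambda\in(d-4,d)$ where the ground states live; the restriction does mildly affect the paper's partition (\ref{partition}) for $\lambda\le -b^2$ (membership in $I_-$ presupposes an initial interval with $f'<0$), but repairing that is a matter for Lemma \ref{lemma-minus}, not for this lemma. One last small remark: your derivation of the leading behaviour $f'(r)\sim -\frac{b(\lambda+b^2)}{d}r$ from the first integral is actually cleaner than the paper's displayed uniform bound $\| \psi' + (\lambda b + b^3) d^{-1} e^{2t}\|_{L^{\infty}(-\infty,t_0)} \leq C_2 e^{4t_0}$, which as written only controls the error by a constant and should be the pointwise estimate $\mathcal{O}(e^{4t})$ for the sign conclusion to follow; your version supplies exactly that.
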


\begin{proof}
By using the Emden--Fowler transformation (\ref{transformation-r-t}), the
initial conditions $f(0) = b$ and $f'(0) = 0$ in the initial-value 
problem (\ref{statGPshoot}) become the following boundary conditions
\begin{equation}
\label{bc}
\left\{ \begin{array}{l} \psi(t) \to b, \\ \psi'(t) \to 0, \end{array} \right. \quad \mbox{\rm as } \;\; t \to -\infty.
\end{equation}
By the method of variation of parameters, we rewrite the differential equation (\ref{eq-psi}) with the boundary conditions
(\ref{bc}) as the following Volterra's integral equation:
\begin{equation}
\label{volterra-psi}
\psi(t) = A(\psi)(t) := b + \frac{1}{d-2} \int_{-\infty}^t \left[ 1 - e^{-(d-2) (t - t')} \right] F(\psi(t'),t') d t',
\end{equation}
where $F(\psi,t) := -e^{2t} \left( \lambda \psi + \psi^3 \right) + e^{4t} \psi$. The integral operator $A$ is considered on $\psi$ in
the Banach space $L^{\infty}(-\infty,t_0)$, where $-\infty < t_0 \ll -1$. It follows from (\ref{volterra-psi}) that 
$$
\| A(\psi) \|_{L^{\infty}} \leq b +
\left[ \frac{1}{2d} \left(|\lambda| + \| \psi \|_{L^{\infty}}^2 \right) + \frac{1}{4(d+2)} e^{2t_0}
\right] \| \psi \|_{L^{\infty}} e^{2t_0},
$$
and
\begin{eqnarray*}
\| A(\psi) - A(\phi) \|_{L^{\infty}} & \leq & \left[ \frac{1}{2d} \left(|\lambda| + (\| \psi \|_{L^{\infty}}
+ \| \phi \|_{L^{\infty}})^2 \right) + \frac{1}{4(d+2)} e^{2t_0} \right] \| \psi - \phi \|_{L^{\infty}} e^{2 t_0}.
\end{eqnarray*}
If $t_0$ is a sufficiently large negative number, then
$A : B_{2b} \to B_{2b}$ is a contraction operator in the ball $B_{2b} \subset L^{\infty}(-\infty,t_0)$
of a fixed radius $2b > 0$. By Banach's fixed-point theorem, there exists the unique solution
$\psi \in B_{2b} \subset L^{\infty}(-\infty,t_0)$ to the integral equation (\ref{volterra-psi}).

Since $F(\psi(\cdot),\cdot) \in  L^1(-\infty,t_0)$ if $\psi \in L^{\infty}(-\infty,t_0)$, the fixed point of the integral equation \eqref{volterra-psi} is in $C^0(-\infty, t_0)$. Since $F(\psi(\cdot),\cdot) \in  C^0(-\infty,t_0)$ if $\psi \in C^0(-\infty, t_0)$, the fixed point of the integral equation \eqref{volterra-psi} is in $C^1(-\infty,t_0)$, so that differentiation of (\ref{volterra-psi}) yields
\begin{eqnarray}
\label{psi-prime}
\psi'(t) = \int_{-\infty}^t e^{-(d-2) (t - t')} F(\psi(t'),t') d t'.
\end{eqnarray}
Finally, since $F(\psi(\cdot),\cdot) \in C^1(-\infty,t_0)$ if 
$\psi \in C^1(-\infty, t_0)$, the fixed point of the integral equation \eqref{volterra-psi} is in
$C^2(-\infty,t_0)$. By the chain rule, this implies that $f \in C^2(0,r_0)$ for small $r_0 > 0$.

By continuity of the solution, we have $\psi(t) > 0$ for $t \in (-\infty,t_0)$ if $t_0$ is a sufficiently large negative number. The transformation formula $f(r) = \psi(t)$ yields $f(r) > 0$ for $r \in (0,r_0)$ with small positive $r_0$. Furthermore, thanks to the bound
$$
\| \psi - b \|_{L^{\infty}(-\infty,t_0)} \leq C_1 e^{2 t_0}
$$
with some $C_1 > 0$, it follows from (\ref{psi-prime}) that
\begin{eqnarray*}
\| \psi' + (\lambda b + b^3) d^{-1} e^{2t} \|_{L^{\infty}(-\infty,t_0)} \leq C_2 e^{4 t_0},  
\end{eqnarray*}
for some $C_2 > 0$.
Hence $\psi'(t) < 0$ for $t \in (-\infty,t_0)$ if $t_0$ is a large negative number. By the transformation formula $f'(r) = e^{-t} \psi'(t)$, this yields $f'(r) < 0$ for $r \in (0,r_0)$ with small $r_0 > 0$. 
\end{proof}

\begin{rem}
The solution $\psi \in C^2(-\infty,t_0)$ in Lemma \ref{lemma-1} satisfies the asymptotic expansion 
\begin{equation}
\label{asymptotic-psi-prime}
\psi(t) = b - \frac{\lambda b + b^3}{2d} e^{2t} + \mathcal{O}(e^{4t}) \quad \mbox{\rm as} \quad t \to -\infty.
\end{equation}
This expansion implies that
\begin{equation}
\label{limit-f-1}
\lim_{r \to 0} f'(r) = \lim_{t \to -\infty} e^{-t} \psi'(t) = 0
\end{equation}
and
\begin{equation}
\label{limit-f-2}
\lim_{r \to 0} f''(r) = \lim_{t \to -\infty} e^{-2t} \left[ \psi''(t) - \psi'(t) \right] = -(\lambda b + b^3) d^{-1}.
\end{equation}
where the first limit is in agreement with Lemma \ref{prop-4}.
\end{rem}

\begin{rem}
	The proof of Lemma \ref{lemma-1} is based on classical fixed-point arguments, which is the main technical tool used in the rest of this paper.
\end{rem}

Another solution to the same differential equation (\ref{eq}) can be constructed 
from the condition that $f(r), f'(r) \to 0$ as $r \to \infty$. In order to construct such decaying solutions, we reformulate the second-order equation (\ref{eq-psi}) as the following three-dimensional dynamical system:
\begin{equation}
\label{3d-system}
\left\{ \begin{array}{l}
x' = 2x, \\
\psi' = \varphi, \\
\varphi' = (2-d) \varphi - x (\lambda \psi + \psi^3) + x^2 \psi,
\end{array} \right.
\end{equation}
where $x(t) := e^{2t}$ and the prime stands for the derivative in $t$.
The following lemma identifies the admissible behavior of classical 
solutions to the differential equation (\ref{eq}) such that 
$f(r), f'(r) \to 0$ as $r \to \infty$.

\begin{lemma}
\label{lemma-0}
For every $d \geq 1$ and every $\lambda \in \mathbb{R}$, there exists $r_0 > 0$ and a one-parameter family of classical solutions $f \in C^2(r_0,\infty)$ to the differential equation (\ref{eq}) such that $f(r), f'(r) \to 0$ as $r \to \infty$. Moreover, 
\begin{equation}
    \label{asymptotics-infinity}
f(r) \sim C r^{\frac{\lambda - d}{2}} e^{-\frac{1}{2} r^2} \quad \mbox{\rm as} \quad r \to \infty,
\end{equation}
for some $C \in \mathbb{R}$, where $f(r) \sim g(r)$ is the asymptotic correspondence which can be differentiated.
\end{lemma}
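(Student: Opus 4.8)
The plan is to isolate the decaying branch of (\ref{eq}) by factoring out its expected leading behavior and then solving a Volterra integral equation by contraction, exactly in the spirit of Lemma~\ref{lemma-1} but integrating from $r=+\infty$ instead of from $r=0$. For large $r$ the underlying linear equation $f''+\frac{d-1}{r}f'-r^2f+\lambda f=0$ is a perturbation of the harmonic oscillator, with a two-dimensional solution space spanned by a decaying solution $\sim r^{\gamma}e^{-r^2/2}$ and a growing solution $\sim r^{-\gamma-d}e^{+r^2/2}$, where $\gamma:=\frac{\lambda-d}{2}$; substituting $r^{\gamma}e^{-r^2/2}$ into (\ref{eq}) shows that the coefficient of the leading power cancels precisely when the exponent equals $\gamma$, which already explains the exponent in (\ref{asymptotics-infinity}). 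Equivalently one may pass to the Emden--Fowler variable (\ref{transformation-r-t}) and the system (\ref{3d-system}); the family constructed here is the one recorded as $\Psi_C$ in Table~\ref{tab:notations}.

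Concretely, I would set $g(r):=r^{\gamma}e^{-r^2/2}$ and substitute $f(r)=g(r)v(r)$ into (\ref{eq}). Using $\mathcal{L}g=\frac{\gamma(\gamma+d-2)}{r^2}g$ for the linear operator $\mathcal{L}f:=f''+\frac{d-1}{r}f'-r^2f+\lambda f$, the equation for $v$ becomes
\[
v''+\left(\frac{2\gamma+d-1}{r}-2r\right)v'+\frac{\gamma(\gamma+d-2)}{r^2}\,v+r^{2\gamma}e^{-r^2}\,v^3=0.
\]
The integrating factor $\mu(r):=r^{2\gamma+d-1}e^{-r^2}$ recasts the first two terms as $\mu^{-1}(\mu v')'$, so that integrating once from $r$ to $+\infty$ (where $\mu v'\to 0$) and then a second time yields the fixed-point equation
\[
v(r)=C-\int_r^{\infty}\frac{1}{\mu(s)}\int_s^{\infty}\mu(\tau)\left[\frac{\gamma(\gamma+d-2)}{\tau^2}\,v(\tau)+\tau^{2\gamma}e^{-\tau^2}\,v(\tau)^3\right]d\tau\,ds=:\mathcal{A}_C(v)(r).
\]
Here the homogeneous solution $v'\propto\mu^{-1}=r^{-(2\gamma+d-1)}e^{r^2}$, which corresponds to the growing mode $\sim e^{+r^2/2}$, has been discarded by choosing the integration constant so that $v'$ decays; the single remaining free constant $C\in\mathbb{R}$ parameterizes the family.

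To solve $v=\mathcal{A}_C(v)$ I would apply the Banach fixed-point theorem on a ball $\{\,\|v-C\|_{L^\infty(r_0,\infty)}\le 1\,\}$ in $C^0(r_0,\infty)$. The key estimate is the Laplace-type bound $\int_s^\infty\mu(\tau)\tau^{-2}\,d\tau\sim\frac12 s^{2\gamma+d-4}e^{-s^2}$ (and the analogous, super-exponentially smaller bound for the cubic term, where $\mu(\tau)\tau^{2\gamma}e^{-\tau^2}\sim\tau^{4\gamma+d-1}e^{-2\tau^2}$), which after multiplication by $\mu(s)^{-1}=s^{-(2\gamma+d-1)}e^{s^2}$ gives $|v'(s)|=\mathcal{O}(s^{-3})$, hence an operator whose perturbation of the constant $C$ and whose Lipschitz constant are $\mathcal{O}(r_0^{-2})$. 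Choosing $r_0=r_0(C)$ large (uniformly for $|C|$ in any bounded set) makes $\mathcal{A}_C$ a contraction, producing a unique $v\in C^0(r_0,\infty)$ with $v\to C$; a standard bootstrap through the representations of $v'$ and $v''$ returns $v\in C^2(r_0,\infty)$, so that $f=gv\in C^2(r_0,\infty)$ with $f,f'\to 0$ as $r\to\infty$. Finally $f\sim Cg$ gives (\ref{asymptotics-infinity}), and differentiating the integral representation once shows $f'(r)=g'(r)v(r)+g(r)v'(r)\sim C g'(r)\sim -C\,r^{\gamma+1}e^{-r^2/2}$, i.e. the asymptotics may be differentiated term by term.

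The main obstacle is precisely the growing mode $e^{+r^2/2}$ created by the potential term $-r^2f$: a naive iteration of (\ref{eq}) (or of (\ref{3d-system})) toward $r=+\infty$ would excite it and destroy decay. The device that overcomes this is to integrate the first-order form of the $v$-equation \emph{from infinity}, which projects onto the decaying branch and simultaneously builds the Gaussian weight into the kernel, making the cubic nonlinearity super-exponentially small so that the whole map contracts for large $r_0$.
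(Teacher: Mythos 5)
Your proposal is correct, and it proves the lemma by a genuinely different route than the paper. The paper works in the Emden--Fowler frame: it rewrites (\ref{eq}) as the three-dimensional system (\ref{3d-system}), compactifies $x=e^{2t}$ through $y=1/x$ and the new time $\tau=\frac12(e^{2t}-1)$, identifies orbits approaching the origin with the two-dimensional stable-center manifold (with $y(0)=1$ uniquely fixed), extracts the decaying rate $\tau^{(\lambda-d)/4}e^{-\tau}$ from Levinson's asymptotic theorem (Theorem 8.1 in Coddington--Levinson) applied to the nonautonomous linearization, and then closes a fixed-point argument for the pair $(\tau^{-1}\psi_+,\psi_-)$ in $(L^1\cap L^{\infty})\times L^{\infty}$, the family being parameterized by $c=\lim_{\tau\to\infty}\psi_-(\tau)$ as in (\ref{volterra-minus}). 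You instead stay in the radial variable and perform a Liouville--Green factorization $f=r^{\gamma}e^{-r^2/2}v$ with $\gamma=\frac{\lambda-d}{2}$, exploiting the exact identity $\mathcal{L}\bigl(r^{\gamma}e^{-r^2/2}\bigr)=\gamma(\gamma+d-2)r^{-2}\,r^{\gamma}e^{-r^2/2}$, which I verified: the $\mathcal{O}(1)$ coefficient $\lambda-2\gamma-d$ cancels precisely at this $\gamma$. The integrating factor $\mu=r^{2\gamma+d-1}e^{-r^2}$ then turns the $v$-equation into a double integral from infinity, solvable by contraction in $L^{\infty}(r_0,\infty)$ with the free parameter $C=\lim_{r\to\infty}v(r)$. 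Your Laplace-type estimates check out: $\mu(s)^{-1}\int_s^{\infty}\mu(\tau)\tau^{-2}d\tau=\mathcal{O}(s^{-3})$, and the cubic kernel $\tau^{4\gamma+d-1}e^{-2\tau^2}$ is super-exponentially subordinate after division by $\mu(s)$, so the map perturbs the constant $C$ by $\mathcal{O}(r_0^{-2})$ and contracts for $r_0$ large, uniformly for $|C|$ in bounded sets. What your route buys: it bypasses the compactification, the invariant-manifold step and Levinson's theorem altogether, and it yields the sharper quantitative form $f=C r^{\gamma}e^{-r^2/2}\left(1+\mathcal{O}(r^{-2})\right)$ together with the differentiated asymptotics via $f'=g'v+gv'$ with $gv'=\mathcal{O}(r^{\gamma-3}e^{-r^2/2})$; in the paper the constant enters only through the correspondence $C=2^{-\frac{\lambda-d}{4}}e^{\frac12}c$.

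One caveat worth recording. The paper's proof is phrased as a conditional --- \emph{if} $f(r),f'(r)\to 0$ as $r\to\infty$, \emph{then} (\ref{asymptotics-infinity}) holds --- and that classification direction (every decaying solution lies on the constructed branch), which the paper obtains from the stable-center-manifold argument with $y(0)=1$ fixed, is exactly what Lemmas \ref{lemma-minus} and \ref{lemma-decay} invoke later. Your contraction constructs the one-parameter family, which is all the literal statement asks, but it does not by itself exclude a decaying solution off your branch. To support the later applications you would need to add the short dichotomy argument you already hint at: once $f$ is small the cubic term is a relatively small perturbation of the linear equation, whose second solution grows like $r^{-\gamma-d}e^{+r^2/2}$ (equivalently $v'\propto\mu^{-1}$), so a variation-of-parameters/Wronskian argument shows that any admixture of the growing mode is incompatible with $f\to 0$.
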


\begin{proof}
The limit $r \to \infty$ corresponds to the limit $t \to +\infty$
due to the transformation (\ref{transformation-r-t}). 
If $f(r), f'(r) \to 0$ as $r \to \infty$, then 
$x(t) \to \infty$, $\psi(t) \to 0$, and $\varphi(t)/\sqrt{x(t)} \to 0$ as $t \to +\infty$.
We introduce the following transformation of variables:
\begin{equation}
    \label{3d-transformation}
x(t) = \frac{1}{y(\tau)}, \quad \psi(t) = \psi(\tau), \quad \varphi(t) = \frac{\phi(\tau)}{y(\tau)},
\end{equation}
where $\tau$ is the new time variable defined by the chain rule $dt = y(\tau) d\tau$.
For convenience, we do not change the notation for $\psi$ that now depends on $\tau$.
By integrating $dt = y(\tau) d \tau$ or equivalently, $d \tau = x(t) dt$ with the initial condition $\tau = 0$ at $t = 0$, we obtain
\begin{equation}
    \label{tau-t}
\tau = \frac{1}{2} \left( e^{2t} - 1 \right).
\end{equation}
Substitution of (\ref{3d-transformation}) into (\ref{3d-system}) yields the dynamical system
\begin{equation}
    \label{3d-system-infinity}
\left\{ \begin{array}{l}
\dot{y} = -2y^2, \\
\dot{\psi} = \phi, \\
\dot{\phi} = \psi - y \left( d \phi + \lambda \psi + \psi^3\right),
\end{array} \right.
\end{equation}
where the dot denotes the derivative in $\tau$.

The only equilibrium point of system (\ref{3d-system-infinity}) is $(y,\psi,\phi) = (0,0,0)$.
Linearization of system (\ref{3d-system-infinity}) at $(0,0,0)$ yields eigenvalues $\{-1,0,1\}$,
which implies that the orbits approaching $(0,0,0)$ as $\tau \to \infty$ belongs
to the two-dimensional stable-center manifold. Moreover, since
$x(t) = e^{2t}$, the transformation (\ref{3d-transformation}) suggests that
\begin{equation}
    \label{y-explicit}
y(\tau) = \frac{1}{1 + 2\tau} = \frac{1}{2\tau} + Y(\tau),
\end{equation}
where $Y \in L^1(\tau_0,\infty)$ for any $\tau_0 > 0$. Since $y(0) = 1$ is uniquely determined,
we are only looking for a unique orbit on the two-dimensional stable-center manifold that
approaches $(0,0,0)$ as $\tau \to \infty$. By the theorem on invariant manifolds,
this manifold is tangential to the stable-center manifold of the linearized system.
Therefore, we study the analytical representation of solutions to the linearized system.
With the help of (\ref{y-explicit}), the linearized system is written in the form:
\begin{equation}
\label{eq:3d-system-infinity-linearized}
\frac{d}{d\tau}
    \begin{pmatrix} \psi \\  \phi \end{pmatrix}
= \left[A + V\left(\tau\right) + R\left(\tau\right)\right]
\begin{pmatrix}    \psi \\    \phi \end{pmatrix},
\end{equation}
where
\begin{equation*}
    A =
    \begin{pmatrix}
        0 & 1 \\
        1 & 0
    \end{pmatrix}, \quad
    V\left(\tau\right)= -\frac{1}{2\tau}
    \begin{pmatrix}
        0 & 0 \\
        \lambda & d
    \end{pmatrix}, \quad
    R\left(\tau\right)= - Y(\tau)
    \begin{pmatrix}
        0 & 0 \\
        \lambda & d
    \end{pmatrix}.
\end{equation*}
The eigenvalues of $A$ are $\mu_{\pm}=\pm 1$ with the eigenvectors $(1, \pm 1)$. Solving the characteristic equation
for $A + V(\tau)$, we obtain the eigenvalues of $A + V(\tau)$ denoted by $\nu_{\pm}(\tau)$ in the form:
\begin{equation}
\label{eigenvales-A-V}
\nu_{\pm}(\tau) = -\frac{d}{4 \tau} \pm \sqrt{1 - \frac{\lambda}{2\tau} + \frac{d^2}{16\tau^2}}
= \mu_{\pm} - \frac{d \pm \lambda}{4 \tau} + \nu^{(R)}_{\pm}(\tau),
\end{equation}
where $\nu_{\pm}^{(R)} \in L^1(\tau_0,\infty)$ for any $\tau_0 > 0$. By Theorem 8.1 on p.92 in \cite{CL},
for which the assumptions $V', R \in L^1\left(\tau_0,\infty\right)$ are satisfied, there exist
two linearly independent classical solutions $(\psi_{\pm},\phi_{\pm})$ of the linearized system
\eqref{eq:3d-system-infinity-linearized} satisfying the limit
\begin{equation}
    \lim_{\tau \to \infty}
    \begin{pmatrix}
        \psi_{\pm} \\
        \phi_{\pm}
\end{pmatrix}
e^{-\int_{\tau_0}^\tau \nu_{\pm}\left(\tau'\right)d\tau'} =
\begin{pmatrix}1 \\ \pm 1\end{pmatrix}.
\end{equation}
Thanks to the leading order of the eigenvalues in (\ref{eigenvales-A-V}),
the upper sign corresponds to the unstable solution and the lower sign corresponds
to the stable solution. Since we are looking for the stable solution,
we adopt the decomposition of $(\psi,\phi)$ over the eigenvectors of $A$ together with
the time-dependent factor which follows from the integration
\begin{equation*}
    e^{\int_{\tau_0}^\tau \nu_-\left(\tau'\right)d\tau'} = C(\tau_0) \tau^{\frac{\lambda-d}{4}} e^{-\tau} \left[ 1 + \mathcal{O}(\tau^{-1}) \right] \quad \mbox{\rm as} \quad \tau \to \infty,
\end{equation*}
where the positive constant $C(\tau_0)$ depends on $\tau_0$. Hence we write
\begin{equation}
    \label{3d-asymptotics}
    \psi(\tau) = \tau^{\frac{\lambda - d}{4}} e^{-\tau} \left[ \psi_+(\tau) + \psi_-(\tau) \right], \quad
    \phi(\tau) = \tau^{\frac{\lambda - d}{4}} e^{-\tau} \left[ \psi_+(\tau) - \psi_-(\tau) \right],
\end{equation}
where $(\psi_+,\psi_-)$ are new variables satisfying the following system of equations:
\begin{equation}
    \label{F-plus-minus}
\left\{ \begin{array}{l}
\dot{\psi}_+ = 2 \psi_+ -
 \lambda(2 \tau)^{-1} \psi_+
+ (d-\lambda) (4 \tau)^{-1} \psi_- - H(\psi_+,\psi_-,\tau), \\
\dot{\psi}_- =
(d + \lambda) (4 \tau)^{-1} \psi_+ + H,
\end{array} \right.
\end{equation}
where
\[
H(\psi_+,\psi_-,\tau) := \frac{1}{2} Y(\tau) \left[ (\lambda + d) \psi_+ + (\lambda - d) \psi_- \right]
+ \frac{1}{2} y(\tau) \tau^{\frac{\lambda-d}{2}} e^{-2\tau} \left(\psi_++\psi_-\right)^3.
\]
If $\psi_+, \psi_- \in L^{\infty}(\tau_0,\infty)$ for $\tau_0 > 0$, then
$H(\psi_+(\cdot),\psi_-(\cdot),\cdot) \in L^1(\tau_0,\infty)$ due to $Y \in L^1(\tau_0,\infty)$. This suggests that
the remainder terms in the $H$-function remain small along the solution
satisfying $(\psi,\phi) \to (0,0)$ as $\tau \to \infty$. In order to make this analysis precise,
we integrate the first equation of system (\ref{F-plus-minus}) subject to the boundary condition
$\lim\limits_{\tau \to \infty} e^{-2\tau} \psi_+(\tau) = 0$ and obtain the integral equation:
\begin{equation}
    \label{volterra-plus}
\psi_+(\tau) = \int_{\tau}^{\infty} e^{-2(\tau'-\tau)}
\left[ \frac{\lambda}{2 \tau'} \psi_+(\tau') + \frac{\lambda - d}{4 \tau'} \psi_-(\tau') +
H(\psi_+(\tau'),\psi_-(\tau'),\tau') \right] d \tau'.
\end{equation}
On the other hand, integrating the second equation of system (\ref{F-plus-minus})
subject to the boundary condition $\lim\limits_{\tau \to \infty} \psi_-(\tau) = c$ for
an arbitrary constant $c \in \mathbb{R}$ yields another integral equation:
\begin{equation}
    \label{volterra-minus}
\psi_-(\tau) = c - \int_{\tau}^{\infty} \left[
\frac{\lambda + d}{4 \tau'} \psi_+(\tau') + H(\psi_+(\tau'),\psi_-(\tau'),\tau') \right] d \tau'.
\end{equation}
It is clear from the integral equation (\ref{volterra-minus}) 
that $\psi_+ \in L^{\infty}(\tau_0,\infty)$ is not sufficient for
$\psi_- \in L^{\infty}(\tau_0,\infty)$. Therefore, we consider the Banach space $L^1(\tau_0,\infty) \cap L^{\infty}(\tau_0,\infty)$ for $\tau^{-1} \psi_+(\tau)$
and $L^{\infty}(\tau_0,\infty)$ for $\psi_-(\tau)$, where  $1 \ll \tau_0 < \infty$.
This suggest that one can obtain $\tilde{\psi}_+(\tau) := \tau^{-1} \psi_+(\tau)$ and $\psi_-(\tau)$ 
from solutions to the system of fixed-point equations:
\begin{equation}
\label{sys-fixed-points}
\tilde{\psi}_+ = A_+(\tilde{\psi}_+,\psi_-), \qquad 
\psi_- = A_-(\tilde{\psi}_+,\psi_-),
\end{equation}
where
$$
A_+(\tilde{\psi}_+,\psi_-)(\tau) := \frac{1}{\tau}
\int_{\tau}^{\infty} e^{-2(\tau'-\tau)}
\left[ \frac{\lambda}{2} \tilde{\psi}_+ + \frac{\lambda - d}{4 \tau'} \psi_- +  H(\tilde{\psi}_+,\psi_-,\tau') \right] d \tau'
$$
and
$$
A_-(\tilde{\psi}_+, \psi_-)(\tau) := c - \int_{\tau}^{\infty} \left[
\frac{\lambda + d}{4} \tilde{\psi}_+ + H(\tilde{\psi}_+,\psi_-,\tau') \right] d \tau',
$$
with $H(\tilde{\psi}_+,\psi_-,\tau)$ being redefined in new variables by
$$
H = \frac{1}{2} Y(\tau) \left[ (\lambda + d) \tau \tilde{\psi}_+ + (\lambda - d) \psi_- \right]
+ \frac{1}{2} y(\tau) \tau^{\frac{\lambda-d}{2}} e^{-2\tau} \left( \tau \tilde{\psi}_+ + \psi_- \right)^3.
$$

We proceed with fixed-point estimates similarly to the proof of Lemma \ref{lemma-1}. By using the Young inequality
for convolution integrals, we estimate the first and third term in $A_+$ as follows:
\begin{eqnarray*}
& \| \frac{1}{\tau}
\int_{\tau}^{\infty} e^{-2(\tau'-\tau)}
 [ \frac{\lambda}{2} \tilde{\psi}_+ +  H(\tilde{\psi}_+,\psi_-,\tau') ] d \tau' \|_{L^1 \cap L^{\infty}} \\
& \leq
\| \tau^{-1} \|_{L^{\infty}} \| e^{-2\tau} \|_{L^1(0,\infty)}
\left[ \frac{|\lambda|}{2} \| \tilde{\psi}_+ \|_{L^1 \cap L^{\infty}} + \| H(\tilde{\psi}_+,\psi_-,\cdot) \|_{L^1 \cap L^{\infty}} \right],
\end{eqnarray*}
where all norms are defined on $(\tau_0,\infty)$ with $\tau_0 \gg 1$ except for $\| e^{-2\tau} \|_{L^1(0,\infty)} = \frac{1}{2}$. In addition, we estimate
\begin{eqnarray*}
\| H(\tilde{\psi}_+,\psi_-,\cdot) \|_{L^1 \cap L^{\infty}} & \leq &
\frac{|\lambda| + d}{2} \left( \| \tau Y(\tau) \|_{L^{\infty}} \| \tilde{\psi}_+ \|_{L^1 \cap L^{\infty}}
+ \| Y \|_{L^1 \cap L^{\infty}} \| \psi_- \|_{L^{\infty}} \right) \\
&& + \frac{1}{2} \| y(\tau) \tau^{\frac{\lambda-d}{2}} e^{-2\tau} (\tau \tilde{\psi}_+ + \psi_-)^3 \|_{L^1 \cap L^{\infty}},
\end{eqnarray*}
with $Y \in L^1 \cap L^{\infty}$, $\tau Y \in L^{\infty}$ from (\ref{y-explicit})
and $y(\tau) \tau^{\frac{\lambda-d}{2}} e^{-2\tau}$ being exponentially small on $(\tau_0,\infty)$ with $\tau_0 \gg 1$.
For the second term in $A_+$, we use both the Young and Cauchy--Schwarz inequalities in order to obtain:
\begin{eqnarray*}
\| \frac{1}{\tau}
\int_{\tau}^{\infty} e^{-2(\tau'-\tau)}
\frac{(\lambda - d)}{4 \tau'} \psi_-  d \tau' \|_{L^1 \cap L^{\infty}} \leq \frac{|\lambda| + d}{4} \| \tau^{-1} \|_{L^2 \cap L^{\infty}}
\| e^{-2\tau} \|_{L^1(0,\infty)} \| \tau^{-1} \|_{L^2 \cap L^{\infty}} \| \psi_- \|_{L^{\infty}}.
\end{eqnarray*}
Finally, we estimate $A_-$ as follows:
\begin{eqnarray*}
\| A_-(\tilde{\psi}_+, \psi_-) \|_{L^{\infty}} & \leq & |c| +  \frac{|\lambda| + d}{4} \| \tilde{\psi}_+ \|_{L^1} +
\| H(\tilde{\psi}_+,\psi_-,\cdot) \|_{L^1}.
\end{eqnarray*}
If $\tau_0$ is a sufficiently large positive number and if 
\begin{equation}
\label{set-G-F}
\| \tilde{\psi}_+ \|_{L^1 \cap L^{\infty}} + \| \psi_- \|_{L^{\infty}} \leq 2 |c|
\end{equation}
then the previous bounds imply that 
$$
\| A(\tilde{\psi}_+,\psi_-) \|_{L^1 \cap L^{\infty}} + \| A_-(\tilde{\psi}_+,\psi_-) \|_{L^{\infty}} \leq 2 |c|,
$$
due to smallness of $\| \tau Y \|_{L^{\infty}}$, $\| Y \|_{L^1 \cap L^{\infty}}$, $\| \tau^{-1}\|_{L^2 \cap L^{\infty}}$, 
and $\| y(\tau) \tau^{\frac{\lambda-d}{2}} e^{-2\tau} \|_{L^1 \cap L^{\infty}}$ if $\tau_0 \gg 1$.
In addition, by similar estimates, it is easy to prove that $(A_+,A_-)$ is a contraction operator
in the set (\ref{set-G-F}) if $\tau_0 \gg 1$. By Banach's fixed-point theorem, there exists the unique solution for
$\tilde{\psi}_+ \in L^1(\tau_0,\infty) \cap L^{\infty}(\tau_0,\infty)$ and $\psi_- \in L^{\infty}(\tau_0,\infty)$
to the system of integral equations (\ref{sys-fixed-points}) in the set (\ref{set-G-F}). From $\tilde{\psi}_+$, we obtain $\psi_+$ by 
$\psi_+(\tau) = \tau \tilde{\psi}_+(\tau)$. Furthermore, bootstrapping arguments similar to those in the proof of Lemma \ref{lemma-1} gives smoothness of $\psi_+$ and $\psi_-$ on $(\tau_0,\infty)$. Thanks to the integrability of $\tau^{-1} \psi_+$ and continuity of $\psi_+$, we have
$\psi_+(\tau) \to 0$ as $\tau \to \infty$. 

By unfolding the transformations (\ref{transformation-r-t}), (\ref{3d-transformation}), and (\ref{3d-asymptotics}),
we obtain that if $f(r), f'(r) \to 0$, then $f(r)$ satisfies the asymptotic behavior (\ref{asymptotics-infinity}), where
$$
C := 2^{-\frac{\lambda - d}{4}} e^{\frac{1}{2}} \; c
$$
and $c := \lim\limits_{\tau \to \infty} \psi_-(\tau)$ is defined in the integral equation (\ref{volterra-minus}).
\end{proof}

The following lemma guarantees global continuation of classical solutions to the differential equation (\ref{eq}) from $r = 0$ to $r \to \infty$ and 
from $r \to \infty$ to $r = 0$.

\begin{lemma}
	\label{lemma-2}
	For every $d \geq 1$ and $\lambda \in \mathbb{R}$, if $f \in C^2(0,r_0)$ is a solution of Lemma \ref{lemma-1} for some  $r_0 \in (0,\infty)$, then $f \in C^2(0,\infty)$ and if $f \in C^2(r_0,\infty)$ is a solution of Lemma \ref{lemma-0} for some $r_0 \in (0,\infty)$, then $f \in C^2(0,\infty)$.
\end{lemma}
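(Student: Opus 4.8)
The plan is to reduce the statement to a standard ODE continuation argument. Away from the singular point $r=0$, the differential equation \eqref{eq} can be written as $f'' = -\frac{d-1}{r} f' + (r^2-\lambda) f - f^3$ with a right-hand side that is smooth in $(r,f,f')$ on $(0,\infty)\times\mathbb{R}^2$, so the Cauchy--Lipschitz theorem yields a unique solution on a maximal open interval, and the only way continuation can fail at a finite $r_*>0$ is through blow-up $|f(r)|+|f'(r)|\to\infty$. Thus it suffices to produce, on every compact subinterval $[\epsilon,R]\subset(0,\infty)$, an a priori bound on $|f|+|f'|$ that rules out such blow-up; the solution of Lemma \ref{lemma-1} is then continued forward from $(0,r_0)$ and the solution of Lemma \ref{lemma-0} is continued backward from $(r_0,\infty)$, so that in both cases $f\in C^2(0,\infty)$.

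The key tool is the energy
\begin{equation*}
E(r) := \tfrac12 (f'(r))^2 + \tfrac{\lambda}{2} f(r)^2 + \tfrac14 f(r)^4 - \tfrac{r^2}{2} f(r)^2,
\end{equation*}
for which multiplication of \eqref{eq} by $f'$ gives the balance law $E'(r) = -\frac{d-1}{r}(f')^2 - r f^2 \le 0$ for $d\ge 1$ and $r>0$. Two features matter. First, the quartic term makes $E$ coercive in $f$: on $[\epsilon,R]$ one has $E \ge \tfrac18 f^4 - \tfrac12(R^2+|\lambda|)^2$, so a bound on $E$ controls $f^4$, and then $(f')^2 = 2E - \tfrac12 f^4 - \lambda f^2 + r^2 f^2$ shows it also controls $(f')^2$. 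Second, the sign of $E'$ makes the forward direction immediate: continuing Lemma \ref{lemma-1} to increasing $r$, one has $E(r)\le E(r_1)$ for $r_1<r\le R$, so $f$ and $f'$ are bounded on $[r_1,R]$ by an $R$-dependent constant, which precludes blow-up at any finite $R$ and yields continuation to $[r_1,\infty)$.

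The hard part is the backward direction, where $E$ is only \emph{increasing} as $r$ decreases, so monotonicity gives no upper bound; moreover the destabilizing harmonic term $-\tfrac{r^2}{2} f^2$ and the damping coefficient $\frac{d-1}{r}$ both act against us as $r\downarrow 0$. I would close this gap with a Gronwall argument. On $[\epsilon,r_1]$ with $\epsilon>0$ fixed, the coercivity estimates give $f^2 \le E + C$ and $(f')^2 \le C E + C$ for constants depending on $(\epsilon,r_1,\lambda,d)$, and substituting these into
\begin{equation*}
\frac{d}{d(-r)} E = \frac{d-1}{r}(f')^2 + r f^2 \le \frac{d-1}{\epsilon}(f')^2 + r_1 f^2
\end{equation*}
produces a differential inequality of the form $\frac{d}{d(-r)}E \le C_1 E + C_2$. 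Gronwall's inequality then bounds $E$, and hence $|f|+|f'|$, uniformly on $[\epsilon,r_1]$. Since $\epsilon>0$ is arbitrary, the maximal backward existence interval must extend down to $0$, so the solution of Lemma \ref{lemma-0} continues to $(0,r_1]$ and therefore to all of $(0,\infty)$.

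Finally, I would note that on the resulting interval $(0,\infty)$ the equation expresses $f''$ as a continuous function of $(r,f,f')$, so $f\in C^2(0,\infty)$ (indeed $f\in C^\infty$ by bootstrapping), completing the proof. The only real subtlety to verify carefully is that the restoring sign of the cubic nonlinearity, encoded in the coercive quartic term of $E$, is exactly what converts the otherwise unfavorable energy balance in the backward direction into a linear Gronwall inequality.
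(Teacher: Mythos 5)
Your proposal is correct and takes essentially the same route as the paper: the identical energy/Lyapunov function $\Lambda(f,f',r)=\tfrac12 (f')^2+\tfrac12(\lambda-r^2)f^2+\tfrac14 f^4$ with the balance law $\Lambda'=-\tfrac{d-1}{r}(f')^2-rf^2\le 0$, monotonicity for the forward continuation, and a Gronwall-type differential inequality for the backward continuation, with coercivity of the quartic term converting energy bounds into bounds on $|f|+|f'|$. The only (cosmetic) difference is in the backward step: the paper absorbs the singular coefficient $\tfrac{d-1}{r}$ exactly via the integrating factor $r^{A_0}$, yielding $\Lambda(f(r),f'(r),r)\le \left(\tfrac{r_0}{r}\right)^{A_0}\left[\Lambda(f(r_0),f'(r_0),r_0)+\tfrac{B_0}{A_0}\right]$ uniformly on $(0,r_0]$ in one stroke, whereas you freeze the coefficient at $\tfrac{d-1}{\epsilon}$ and run a standard constant-coefficient Gronwall on $[\epsilon,r_1]$ for each $\epsilon>0$ before letting $\epsilon\downarrow 0$ --- both arguments are sound and equivalent in substance.
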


\begin{proof}
	Let us introduce the Lyapunov function in the form:
	\begin{equation}
	\label{Lyapunov-again}
	\Lambda(f,f',r) := \frac{1}{2} (f')^2 + \frac{1}{2} (\lambda - r^2) f^2 + \frac{1}{4} f^4.
	\end{equation}
It follows from (\ref{eq}) and (\ref{Lyapunov-again}) that
	\begin{equation}
	\label{Lyapunov-again-rate}
	\frac{d}{dr} \Lambda(f,f',r) = - \frac{d-1}{r} (f')^2 - r f^2 < 0,
	\end{equation}
	hence the map $r \mapsto \Lambda(f(r),f'(r),r)$ is strictly monotonically decreasing along the classical solution to the differential equation (\ref{eq}).	It follows from (\ref{Lyapunov-again}) that 
	\begin{equation}
	\label{Lyapunov-again-bound}
	\frac{1}{2} (f')^2 + \frac{1}{4} (f^2 + \lambda - r^2)^2 \leq 
	\Lambda(f,f',r) + \frac{1}{4} (\lambda - r^2)^2.
	\end{equation}
	
	Let $f \in C^2(0,r_0)$ be a solution of Lemma \ref{lemma-1} for some $r_0 > 0$ and assume that the solution blows up at a finite $R < \infty$. Since 
the map $r \mapsto \Lambda(f(r),f'(r),r)$ is decreasing, we obtain a contradiction from the bound (\ref{Lyapunov-again-bound}):
\begin{equation*}
	\frac{1}{2} (f')^2 + \frac{1}{4} (f^2 + \lambda - r^2)^2 \leq 
\Lambda(f(r_0),f'(r_0),r_0) + \frac{1}{4} (\lambda - R^2)^2 < \infty, \quad r \in [r_0,R].
\end{equation*}	
Hence, no finite $R$ exists and the classical solution continues on $(0,\infty)$.
	
Let $f \in C^2(r_0,\infty)$ be a solution of Lemma \ref{lemma-0} for some $r_0 > 0$. It follows from the fast decay of $f(r),f'(r) \to 0$ as $r \to \infty$ that 
$\Lambda(f(r),f'(r),r) \to 0$ as $r \to \infty$. It follows from (\ref{Lyapunov-again-rate}) and (\ref{Lyapunov-again-bound}) that there exist positive constants $A_0$ and $B_0$ such that 
	$$
	r \frac{d}{dr} \Lambda(f,f',r) \geq -A_0 \Lambda(f,f',r) - B_0, \quad 
	r \in (0,r_0],
	$$
	or equivalently, 
	$$
	r \frac{d}{dr} r^{A_0} \Lambda(f,f',r) \geq - B_0 r^{A_0}, \quad 
	r \in (0,r_0].
	$$
	Integration on $[r,r_0]$ yields
	$$
	\Lambda(f(r),f'(r),r) \leq \left(\frac{r_0}{r} \right)^{A_0} \left[ \Lambda(f(r_0),f'(r_0),r_0) + \frac{B_0}{A_0} \right] < \infty, \quad r \in (0,r_0].
	$$
	It follows from the bound (\ref{Lyapunov-again-bound}) that the classical solution continues on $(0,\infty)$.
\end{proof}

\section{Proof of Theorem \ref{theorem-1}}
\label{sec-proof-1}

Here we develop the shooting method for the proof of Theorem \ref{theorem-1}.

The unique global solution $f \in C^2(0,\infty)$ of the initial value problem (\ref{statGPshoot}) is given by Lemmas \ref{lemma-1} and \ref{lemma-2}. We define the following three sets:
\begin{equation}
    \label{I-plus}
    I_+ := \left\{ \lambda \in \mathbb{R} : \; \exists r_0 \in (0,\infty) : \; f(r_0) = 0, \; \mbox{\rm while} \; f(r) > 0, \;\; f'(r) < 0, \;\; r \in (0,r_0) \right\},
\end{equation}
\begin{equation}
    \label{I-minus}
    I_- := \left\{ \lambda \in \mathbb{R} : \; \exists r_0 \in (0,\infty) : \; f'(r_0) = 0, \; \mbox{\rm while} \; f(r) > 0, \;\; f'(r) < 0, \;\; r \in (0,r_0) \right\},
\end{equation}
and
\begin{equation}
    \label{I-zero}
    I_0 := \left\{ \lambda \in \mathbb{R} : \; f(r) > 0, \;\; f'(r) < 0, \;\; r \in (0,\infty) \right\}.
\end{equation}
The sets $I_+$, $I_-$, and $I_0$ depend on parameters $b$ and $d$, which are not written. We make the following partition of $\mathbb{R}$ for parameter $\lambda$:
\begin{equation}
    \label{partition}
    \mathbb{R}  = I_+ \cup I_0 \cup I_-.
\end{equation}
By uniqueness of solutions to differential equations, if $f(r_0) = f'(r_0) = 0$ for some $r_0 \in (0,\infty)$, then $f(r) = 0$ for every $r \in (0,\infty)$, hence $I_+ \cap I_- = \emptyset$. By construction, it is also true that $I_+ \cap I_0 = \emptyset$ and $I_- \cap I_0 = \emptyset$, hence the three sets are disjoint. 

In the following two lemmas, we prove that the sets $I_+$ and $I_-$ are open and non-empty. These results imply that $I_0$ in the partition (\ref{partition}) is closed and non-empty.

\begin{lemma}
\label{lemma-plus}
For every $d \geq 1$, $I_+$ is open and, moreover, $[d,\infty) \subset I_+$.
\end{lemma}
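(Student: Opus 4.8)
The plan is to treat the two assertions separately: openness of $I_+$ will follow from continuous dependence on $\lambda$ together with transversality of the zero crossing, while the inclusion $[d,\infty)\subset I_+$ will be obtained by a Sturm-type comparison of $f$ with the linear ground state $\mathfrak{u}_0(r)=e^{-r^2/2}$ from (\ref{eig-L0}).

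For openness, fix $\lambda_\ast\in I_+$ with crossing point $r_0$, so that $f(r_0)=0$ and $f>0$, $f'<0$ on $(0,r_0)$. First I would note that the crossing is transversal: if $f'(r_0)=0$ as well, uniqueness of the solution of (\ref{eq}) at $r_0$ would force $f\equiv 0$, contradicting $f(0)=b>0$; since $f'\le 0$ at $r_0$ by continuity, this gives $f'(r_0)<0$. Near $r=0$ the uniform-in-$\lambda$ estimates behind Lemma \ref{lemma-1} (in particular the leading behavior $\psi'(t)\approx -(\lambda b+b^3)d^{-1}e^{2t}$) provide a fixed $\delta>0$ and a neighborhood of $\lambda_\ast$ on which $f>0$, $f'<0$ hold on $(0,\delta]$. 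On the compact interval $[\delta,r_0]$ standard continuous dependence applies, and since $f'(\,\cdot\,;\lambda_\ast)\le -c<0$ there, a nearby $\lambda$ still has $f'<0$ up to a slightly perturbed transversal zero $r_0(\lambda)$; concatenating the two intervals yields $f>0$, $f'<0$ on $(0,r_0(\lambda))$, i.e. $\lambda\in I_+$.

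For the inclusion I would introduce the Wronskian-type quantity $W(r):=r^{d-1}\bigl(f'(r)\mathfrak{u}_0(r)-f(r)\mathfrak{u}_0'(r)\bigr)$, which satisfies $W(0)=0$ (because $f'(0)=\mathfrak{u}_0'(0)=0$) and, upon eliminating $f''$ and $\mathfrak{u}_0''$ from (\ref{eq}) and (\ref{eig-L0}),
\[
W'(r)=r^{d-1}\mathfrak{u}_0(r)\bigl[(d-\lambda)f(r)-f(r)^3\bigr].
\]
For $\lambda\ge d$ the bracket is strictly negative wherever $f>0$, so $W$ is strictly decreasing there and $W(r)<0$ as long as $f$ stays positive. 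This excludes $I_-$ immediately: at a putative critical point $r_0$ with $f(r_0)>0$ and $f'(r_0)=0$ one computes $W(r_0)=-r_0^{d-1}f(r_0)\mathfrak{u}_0'(r_0)>0$ since $\mathfrak{u}_0'<0$, contradicting $W(r_0)<0$.

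I expect the main obstacle to be excluding $I_0$. There $f>0$, $f'<0$ on all of $(0,\infty)$, so $f$ decreases to a limit; the sign $W<0$ is equivalent to $(f/\mathfrak{u}_0)'<0$, whence $f(r)\le b\,e^{-r^2/2}$, which forces the limit to be $0$. With this Gaussian decay in hand I would bound $f'$ through the self-adjoint representation $r^{d-1}f'(r)=\int_0^r s^{d-1}[(s^2-\lambda)f-f^3]\,ds$, whose integrand is absolutely integrable, to conclude that $W(r)\to 0$ as $r\to\infty$. Then $\int_0^\infty W'(r)\,dr=W(\infty)-W(0)=0$, while the integrand is strictly negative on $(0,\infty)$ — a contradiction. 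Since $\lambda\ge d$ then lies in neither $I_-$ nor $I_0$, the partition (\ref{partition}) forces $\lambda\in I_+$, giving $[d,\infty)\subset I_+$. The only delicate point is the justification of $W(\infty)=0$, which is exactly where the decay estimates do the real work.
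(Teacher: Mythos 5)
Your proof is correct, and while the openness half follows the paper's route, the inclusion $[d,\infty)\subset I_+$ is proved by a genuinely different argument. For openness you argue exactly as the paper does (transversality of the zero via uniqueness of the trivial solution, then the implicit function theorem), and you are in fact more careful than the printed proof in checking that the qualifying conditions $f>0$, $f'<0$ persist on all of $(0,r_0(\lambda))$, splitting off a neighborhood of $r=0$ where the estimates behind Lemma \ref{lemma-1} are uniform in $\lambda$ (your appeal to the sign of $\psi'(t)\approx -(\lambda b+b^3)d^{-1}e^{2t}$ quietly needs $\lambda b+b^3>0$, but this caveat is inherited from Lemma \ref{lemma-1} itself and is harmless for $\lambda\ge d$). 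For the inclusion, the paper does not use your Wronskian: it substitutes $g(r)=e^{r^2/2}f(r)$, shows $g'<0$ near $0$ when $\lambda\ge d$ and that $g'$ cannot vanish before $g$ does (a vanishing $g'$ would force $g''<0$ there via (\ref{eqg})), and then runs a first-order differential inequality on $G=-g'$ to get $g'(r)\le g'(R)<0$ beyond $R=\sqrt{(d-1)/2}$, so $g$, hence $f$, reaches zero at finite $r_0$; this verifies membership in $I_+$ directly, without invoking the trichotomy. You instead exclude $I_-$ and $I_0$ via the comparison quantity $W(r)=r^{d-1}\bigl(f'\mathfrak{u}_0-f\mathfrak{u}_0'\bigr)$, whose derivative identity $W'=r^{d-1}\mathfrak{u}_0[(d-\lambda)f-f^3]$ is correct, and then conclude through the partition (\ref{partition}). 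Your $I_0$ exclusion is in essence the projection identity of Proposition \ref{prop-1} rerun at the level of the shooting solution: integrating $W'$ over $(0,\infty)$ reproduces $\langle \mathfrak{u}_0,(d-\lambda)f-f^3\rangle_{L^2_r}=0$, but you legitimately avoid assuming $f\in\mathcal{E}$ by first extracting the Gaussian bound $f\le b\,\mathfrak{u}_0$ from $(f/\mathfrak{u}_0)'<0$ and then using the integral representation of $r^{d-1}f'$ from Lemma \ref{prop-4} to justify $W(\infty)=0$ — correctly identified as the one delicate step, and handled soundly. The trade-off: the paper's argument is constructive (it exhibits and roughly localizes the finite zero) and does not lean on the exhaustiveness of the partition, whereas yours is by elimination but unifies the $I_-$ and $I_0$ exclusions under one monotone quantity and yields the monotonicity of $f/\mathfrak{u}_0$ and the pointwise Gaussian bound as useful by-products.
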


\begin{proof}
The unique solution $f \in C^2(0,\infty)$ depends smoothly on the parameter $\lambda$ since the differential equation (\ref{eq}) is smooth in $f$ and $\lambda$. Let $f_{\lambda}$ denotes the unique $\lambda$-dependent solution 
and $r_0$ be a root of $f_{\lambda_0}$ for a fixed $\lambda_0 \in I_+$. By uniqueness of the zero
solution, if $f_{\lambda_0}(r_0) = 0$, then $f_{\lambda_0}'(r_0) \neq 0$. Since $f_{\lambda}$ is smooth in $\lambda$,
it follows from the implicit function theorem that for every $\lambda$ in an open neighborhood of $\lambda_0$
there exists $r_{\lambda}$ near $r_0$ such that $f_{\lambda}(r_{\lambda}) = 0$. Hence, the set $I_+$ is open.
It remains to prove that such $r_{\lambda} \in (0,\infty)$ exists for every $\lambda \in [d,\infty)$.

Let  $g(r)=e^{\frac{1}{2} r^2} f(r)$. Then, $g(r)$ satisfies the differential equation:
\begin{equation}\label{eqg}
  g''(r) +\left[\frac{d-1}{r}-2r\right] g'(r) + e^{-r^2} g(r)^3 + (\lambda - d) g(r) = 0,
\end{equation}
subject to the initial conditions $g(0)= b$ and $g'(0) = 0$.
By using the transformation (\ref{transformation-r-t}) and the asymptotic expansion (\ref{asymptotic-psi-prime}), we obtain with the chain rule
\begin{eqnarray*}
e^{-\frac{1}{2} r^2} g'(r) & = & f'(r) + r f(r) \\
& = & e^{-t} \psi'(t) + e^t \psi(t) \\
& = & \frac{b}{d} (d-\lambda - b^2) e^t + \mathcal{O}(e^{3t}) \quad \mbox{\rm as} \quad t \to -\infty.
\end{eqnarray*}
Since $\lambda \geq d$, we have $g'(r) < 0$ for some small $r > 0$.

Let $r_0 := \inf \{ r > 0 : \; g(r) = 0 \}$. We need to show that $r_0 < \infty$. First, we show that $g'(r) < 0$ for all
$r \in (0,r_0)$. Indeed, if there exists $r_1 \in (0,r_0)$ such that $g'(r_1) = 0$ and $g'(r) < 0$ for $r \in (0,r_1)$, then the differential equation (\ref{eqg})
with $\lambda \geq d$ implies that $g''(r_1) < 0$, which is impossible.
Hence, $g'(r) < 0$ for all $r \in (0,r_0)$.

It follows from (\ref{eqg}) that
$$
g''(r) \leq \left[2 r - \frac{d-1}{r}\right] g'(r), \quad r \in (0,r_0).
$$
If $r_0 \leq R := \frac{\sqrt{d-1}}{\sqrt{2}}$, we are done. Assume that $r_0 > R$ and define $G(r) := -g'(r)$. Then,
$$
G'(r) \geq \left[2 r - \frac{d-1}{r}\right] G(r), \quad r \in (R,r_0).
$$
Since $G(r) > 0$ for $r \in [R,r_0)$, we have $G(r) \geq G(R)$ for $r \in [R,r_0)$, or alternatively, $g'(r) \leq g'(R) < 0$. The case $r_0 = \infty$ is impossible since $g(r)$ must hit zero for a finite $r$. Thus, $r_0 < \infty$ for every $\lambda \in [d,\infty)$.
\end{proof}

\begin{lemma}
\label{lemma-minus}
For every $d \geq 4$, $I_-$ is open and, moreover, $(-\infty,0] \subset I_-$.
\end{lemma}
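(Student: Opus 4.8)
The plan is to follow the template of Lemma \ref{lemma-plus}, but with the roles of the zero set of $f$ and the zero set of $f'$ interchanged, and then to exclude the two competing possibilities $I_+$ and $I_0$ for $\lambda \le 0$ by a Pohozaev-type identity.

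\emph{Openness.} Fix $\lambda_0 \in I_-$ and let $r_0$ be the first critical point, so that $f'_{\lambda_0}(r_0) = 0$ while $f_{\lambda_0}(r) > 0$ and $f'_{\lambda_0}(r) < 0$ for $r \in (0,r_0)$; by uniqueness of the zero solution $f_{\lambda_0}(r_0) > 0$. The decisive step is to show $f''_{\lambda_0}(r_0) > 0$, after which the implicit function theorem applied to $f'_\lambda(r) = 0$ at $(\lambda_0, r_0)$ produces a smooth branch $r(\lambda)$ of critical points near $r_0$. Evaluating (\ref{eq}) at $r_0$ gives $f''(r_0) = f(r_0)\bigl(r_0^2 - \lambda_0 - f(r_0)^2\bigr)$, so the sign is not automatic. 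If $f''(r_0) < 0$, then $f'$ is strictly decreasing at $r_0$ and hence positive just to the left of $r_0$, contradicting $f' < 0$ on $(0,r_0)$. If $f''(r_0) = 0$, then differentiating (\ref{eq}) and using $f'(r_0) = f''(r_0) = 0$ yields $f'''(r_0) = 2 r_0 f(r_0) > 0$, so again $f'' < 0$ just to the left of $r_0$ and the same contradiction appears; hence $f''_{\lambda_0}(r_0) > 0$. To place the perturbed solutions in $I_-$, I would verify $f_\lambda > 0$ and $f'_\lambda < 0$ on $(0, r(\lambda))$: near $r=0$ this is uniform in $\lambda$ from the estimates in the proof of Lemma \ref{lemma-1}, on any compact $[\delta, r_0 - \varepsilon]$ it follows from continuous dependence (where $f'_{\lambda_0} \le -c < 0$), and on $[r_0 - \varepsilon, r(\lambda)]$ the transversality $f''_{\lambda_0}(r_0) > 0$ guarantees that $r(\lambda)$ is the \emph{first} zero of $f'_\lambda$.

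\emph{Inclusion $(-\infty,0] \subset I_-$.} By the partition (\ref{partition}) it suffices to show that no $\lambda \le 0$ lies in $I_+$ or in $I_0$. Suppose first $\lambda \in I_+$, with $f(r_0) = 0$ and $f' < 0$ on $(0,r_0)$, so that $f'(r_0) < 0$ by uniqueness. Multiplying (\ref{eq}) by $r^{d-1} f$ and by $r^d f'$ and integrating over $(0,r_0)$ gives the finite-interval analogues of (\ref{eq:poh_id1}) and (\ref{eq:poh_id2}); all boundary terms at $r=0$ vanish, and those at $r_0$ vanish because $f(r_0) = 0$, except the single surviving contribution $\tfrac12 r_0^d f'(r_0)^2$ arising from integrating $r^d f' f''$. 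Eliminating $\| f' \|_{L^2_r}^2$ exactly as in Proposition \ref{prop-2} produces the Pohozaev identity with a boundary correction,
\[
r_0^d f'(r_0)^2 + 4 \| r f \|_{L^2_r}^2 - 2 \lambda \| f \|_{L^2_r}^2 + \tfrac12 (d-4) \| f \|_{L^4_r}^4 = 0 ,
\]
where the norms are taken over $(0,r_0)$. For $d \ge 4$ and $\lambda \le 0$ every term on the left is nonnegative and $\| r f \|_{L^2_r}^2 > 0$, a contradiction; hence $I_+ \cap (-\infty,0] = \emptyset$.

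\emph{Excluding $I_0$ and conclusion.} Suppose $\lambda \in I_0$, so $f > 0$ and $f' < 0$ on all of $(0,\infty)$ and $f$ decreases to a limit $L \ge 0$. If $L > 0$, then for large $r$ the right-hand side of the self-adjoint form $\frac{d}{dr}[r^{d-1} f'] = r^{d-1} f (r^2 - \lambda - f^2)$ is bounded below by $c\, r^{d+1}$, forcing $r^{d-1} f'(r) \to +\infty$ and hence $f' > 0$ eventually, a contradiction; thus $L = 0$. A solution of (\ref{eq}) with $f, f' \to 0$ must belong to the family of Lemma \ref{lemma-0} and therefore has Gaussian decay, so $f \in \mathcal{E}$ and $f$ solves the boundary-value problem (\ref{statGPrad}). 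By Proposition \ref{prop-2} no such solution exists for $d \ge 4$ and $\lambda \le d-4$, and $\lambda \le 0 \le d-4$; hence $I_0 \cap (-\infty,0] = \emptyset$, and combining the three steps gives $(-\infty,0] \subset I_-$. The main obstacle is the openness argument, specifically ruling out the degenerate turning point $f''(r_0) = 0$ and ensuring that the critical point of the perturbed solution is still its first one; by contrast, once the boundary term $r_0^d f'(r_0)^2$ is identified with the correct (nonnegative) sign, the inclusion is immediate.
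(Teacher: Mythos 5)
Your proof is correct and follows essentially the same route as the paper: the openness argument rules out the degenerate critical point via $f'''(r_0) = 2 r_0 f(r_0) > 0$ exactly as in the paper, and your exclusion of $I_+$ (the Pohozaev identity with the boundary term $\tfrac{1}{2} r_0^d f'(r_0)^2$) and of $I_0$ (fast decay from Lemma \ref{lemma-0} plus Proposition \ref{prop-2} to rule out decay to zero, and integration of the self-adjoint form to rule out a positive limit) reproduces the paper's two steps, merely reorganized through the partition (\ref{partition}) instead of directly exhibiting a critical point from the lower bound $f(r) \geq c$. Your refinements beyond the paper are cosmetic but sound: you pin down the sign $f''(r_0) > 0$ where the paper only argues $f''(r_0) \neq 0$, and you verify explicitly that the perturbed critical point remains the first one.
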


\begin{proof}
In order to prove that $I_-$ is open, we extend the proof of Lemma \ref{lemma-plus} based on the implicit function
theorem. Let $f_{\lambda}$ denote the $\lambda$-dependent unique solution and $r_0$ be a root of $f_{\lambda_0}'$ for a fixed $\lambda_0 \in I_-$. Then, the differential
equation (\ref{eq}) implies that either $f_{\lambda_0}''(r_0) \neq 0$ 
or $f_{\lambda_0}''(r_0) = 0$ and $r_0^2 = \lambda_0 + f_{\lambda_0}(r_0)^2$.
In the latter case, since $f$ is smooth,
the derivative of the differential equation (\ref{eq}) at the point $r_0$ for which $f_{\lambda_0}'(r_0) = 0$
and $f_{\lambda_0}''(r_0) = 0$ gives $f_{\lambda_0}'''(r_0) = 2 r_0 f_{\lambda_0}(r_0) > 0$,
which is impossible if $f_{\lambda_0}'(r) < 0$ for $r \in (0,r_0)$. 
This implies that if $f'_{\lambda_0}(r_0) = 0$, then $f_{\lambda_0}''(r_0) \neq 0$. Since $f_{\lambda}$ is smooth in $\lambda$,
it follows from the implicit function theorem that for every $\lambda$ in an open neighborhood of $\lambda_0$
there exists $r_{\lambda}$ near $r_0$ such that $f_{\lambda}'(r_{\lambda}) = 0$. Hence, the set $I_-$ is open.
It remains to prove that such $r_{\lambda} \in (0,\infty)$ exists for every $\lambda \in (-\infty,0]$.

First, we show that $f(r) > 0$ for $r > 0$ if $\lambda \leq 0$. It follows from (\ref{eq}) that
$$
r^d f'(r) f''(r) + (d-1) r^{d-1} [f'(r)]^2 - r^{d+2} f(r) f'(r) + r^d f(r)^3 f'(r) + \lambda r^d f(r) f'(r) = 0.
$$
Assuming $f(R) = 0$ for some $R \in (0,\infty)$ and integrating on $[0,R]$ yields
\begin{eqnarray*}
\frac{1}{2} R^d [f'(R)]^2 + \frac{d-2}{2} \int_0^{R} r^{d-1} [f'(r)]^2 dr + \frac{d+2}{2} \int_0^{R} r^{d+1}  f(r)^2 dr \\
- \frac{d}{4} \int_0^{R} r^{d-1} f(r)^4 dr - \frac{d \lambda}{2} \int_0^{R} r^{d-1} f(r)^2 dr = 0.
\end{eqnarray*}
Similarly, integrating equation
$$
r^{d-1} f(r) f''(r) + (d-1) r^{d-2} f(r) f'(r) - r^{d+1} f(r)^2 + r^{d-1} f(r)^4 + \lambda r^{d-1} f(r)^2 = 0
$$
on $[0,R]$ with $f(R) = 0$ yields
$$
-\int_0^{R} r^{d-1} [f'(r)]^2 dr - \int_0^{R} r^{d+1}  f(r)^2 dr + \int_0^{R} r^{d-1} f(r)^4 dr + \lambda \int_0^{R} r^{d-1} f(r)^2 dr = 0.
$$
Eliminating $\int_0^{R} r^{d-1} [f'(r)]^2 dr$ from these two equations yields the constraint:
$$
\frac{1}{2} R^d [f'(R)]^2 +
2 \int_0^{R} r^{d+1}  f(r)^2 dr + \frac{d-4}{4} \int_0^{R} r^{d-1} f(r)^4 dr - \lambda \int_0^{R} r^{d-1} f(r)^2 dr = 0.
$$
If $d \geq 4$ and $\lambda \leq 0$, this constraint is never satisfied, hence no $R \in (0,\infty)$ exists and $f(r) > 0$ for every $r > 0$. Moreover, if $f \in C^2(0,\infty)$ and $f(r), f'(r) \to 0$ as $r \to \infty$,
then the fast asymptotic decay (\ref{asymptotics-infinity}) in Lemma \ref{lemma-0} implies
that $f \in L^2_r(\mathbb{R}^+)$, which is impossible if $\lambda \in (-\infty,0]$ by Proposition \ref{prop-2}.
Hence, there exists a constant $c > 0$ such that $f(r) \geq c$ for $r > 0$.

Next, we show that there exists $r_0 \in (0,\infty)$ such that $f'(r_0) = 0$. To do so, we integrate the differential equation
\begin{equation}
    \label{eq-self-adjoint-form}
\frac{d}{dr} \left[ r^{d-1} f'(r) \right] = r^{d+1} f(r) - r^{d-1} f(r)^3 - \lambda r^{d-1} f(r)
\end{equation}
on $[0,R]$ and obtain the estimate:
\begin{eqnarray*}
R^{d-1} f'(R) & = & \int_0^R r^{d+1} f(r) dr - \int_0^R r^{d-1} f(r)^3 dr - \lambda \int_0^R r^{d-1} f(r) dr \\
& \geq & \frac{c}{d+2} R^{d+2} - \frac{b^3}{d} R^d,
\end{eqnarray*}
where we have used that $\lambda \leq 0$ and $c \leq f(r) \leq b$ as long as $f'(r) < 0$. Hence for
$$
R > \left(\frac{b^3(d+2)}{d c} \right)^{1/2},
$$
we must have $f'(R) > 0$ so that there exists $r_0 \in (0,\infty)$ such that $f'(r_0) = 0$ if $\lambda \in (-\infty,0]$.
\end{proof}

It follows from Lemmas \ref{lemma-plus} and \ref{lemma-minus} that $I_0$ is closed and non-empty.
The following lemma states that the set $I_0$ in the partition (\ref{partition}) contains all values of $\lambda$ for which the unique solution $f$
to the initial-value problem (\ref{statGPshoot}) is a solution $\mathfrak{u} \in \mathcal{E}$ to the boundary-value problem (\ref{statGPrad}). 

\begin{lemma}
\label{lemma-decay} If $\lambda \in I_0$, then $f(r), f'(r) \to 0$ as $r \to \infty$ and $f \in \mathcal{E} \subset L^2_r(\mathbb{R}^+)$.
\end{lemma}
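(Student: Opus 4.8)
The plan is to exploit the fact that, for $\lambda \in I_0$, the global solution $f \in C^2(0,\infty)$ (given by Lemmas \ref{lemma-1} and \ref{lemma-2}) is positive and strictly decreasing on all of $(0,\infty)$, so that the monotone limit $L := \lim_{r\to\infty} f(r) \geq 0$ exists; I would then show in turn that $L = 0$, that $f'(r)\to 0$, and finally read off the Gaussian decay rate from Lemma \ref{lemma-0} to place $f$ in $\mathcal{E}$. For the first step I would work with the self-adjoint form (\ref{eq-self-adjoint-form}), written as $\frac{d}{dr}[r^{d-1}f'(r)] = r^{d-1} f(r)\,(r^2 - \lambda - f(r)^2)$. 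If $L > 0$, then for all sufficiently large $r$ one has $r^2 - \lambda - f^2 \geq \tfrac12 r^2$ together with $f \geq L$, so the right-hand side is bounded below by $\tfrac{L}{2} r^{d+1}$; integrating, $r^{d-1} f'(r) \to +\infty$, which forces $f'(r) > 0$ for large $r$ and contradicts $f'<0$. Hence $L = 0$ and $f(r) \to 0$.

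To obtain $f'(r)\to 0$ I would reuse the same identity. Since $f\to 0$, there is $r_*$ with $r^2 - \lambda - f^2 > 0$ for $r\ge r_*$, so $\frac{d}{dr}[r^{d-1}f']>0$ there; thus $r\mapsto r^{d-1}f'(r)$ is increasing on $[r_*,\infty)$ and, being negative, converges to some $\ell\in(-\infty,0]$. The lower bound $\frac{d}{dr}[r^{d-1}f'] \ge \tfrac12 r^{d+1} f$ for large $r$, after integration, also yields $\int^\infty r^{d+1} f(r)\,dr < \infty$. If $\ell<0$, then $f'(r)\sim \ell\, r^{-(d-1)}$ and hence $f(r)\sim \tfrac{|\ell|}{d-2}\, r^{-(d-2)}$ (using $d\ge 4$), which makes $r^{d+1}f\sim \text{const}\cdot r^{3}$ non-integrable, a contradiction. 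Therefore $\ell = 0$, i.e. $r^{d-1}f'(r)\to 0$, and since $r^{d-1}\to\infty$ this gives $f'(r)\to 0$.

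With $f(r),f'(r)\to 0$ established, the asymptotic analysis carried out in the proof of Lemma \ref{lemma-0} applies to $f$: a solution decaying to zero at infinity sits on the stable branch at infinity and therefore obeys the Gaussian asymptotics $f(r)\sim C\, r^{(\lambda-d)/2} e^{-r^2/2}$ of (\ref{asymptotics-infinity}), the correspondence being differentiable so that $f'$ decays at the same rate. Because this decay beats every polynomial weight, each of the four integrals defining $\mathcal{E}$ in (\ref{energy-space}) converges at infinity; near $r=0$ the solution is $C^2$ and bounded while $r^{d-1}$ is integrable, so there is no contribution there. Thus $f\in\mathcal{E}\subset L^2_r(\mathbb{R}^+)$, which completes the proof. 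I expect the delicate part to be the behaviour at infinity, namely establishing $L=0$ and $\ell=0$, since these are exactly the steps that rule out the constant and slowly decaying (power-law) alternatives that monotonicity alone does not exclude; once the Gaussian rate from Lemma \ref{lemma-0} is in hand, the $\mathcal{E}$-membership is a routine convergence check.
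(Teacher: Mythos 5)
Your proposal is correct, and its backbone coincides with the paper's proof: the decisive step in both is integrating the self-adjoint form (\ref{eq-self-adjoint-form}) to show that a positive limit $c = \lim_{r\to\infty} f(r)$ would force $r^{d-1}f'(r)$ to become positive (the paper integrates on $[0,R]$ using $c \leq f \leq b$ and $\lambda \in (0,d)$, you integrate from a large $r_0$ using $r^2 - \lambda - f^2 \geq \tfrac12 r^2$ — equivalent bookkeeping), and both then invoke Lemma \ref{lemma-0} to upgrade $f(r), f'(r) \to 0$ to the Gaussian rate (\ref{asymptotics-infinity}) and hence to $f \in \mathcal{E}$. Where you genuinely diverge is the step $f'(r) \to 0$: the paper disposes of it in one line ("necessarily $f'(r)\to 0$ because $[0,b] \ni f$ is compact"), which as stated is not a complete argument — a bounded monotone function need not have derivative tending to zero without further input from the equation. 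Your replacement is more careful and actually closes this gap: once $f \to 0$, the identity shows $r^{d-1}f'(r)$ is eventually increasing and negative, hence convergent to some $\ell \leq 0$; integrating the lower bound $\tfrac{d}{dr}[r^{d-1}f'] \geq \tfrac12 r^{d+1}f$ gives $\int^{\infty} r^{d+1} f\,dr < \infty$, and the power-law asymptotics $f \sim \tfrac{|\ell|}{d-2} r^{-(d-2)}$ that would follow from $\ell < 0$ contradicts this integrability, so $\ell = 0$ and $f' \to 0$. Note this also required you to reverse the paper's order (establish $f \to 0$ before $f' \to 0$, rather than after), which is the right sequencing for your argument since the bound $r^2 - \lambda - f^2 > 0$ and the representation $f(r) = -\int_r^{\infty} f'(s)\,ds$ both use $f \to 0$. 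The one point to keep in mind in both treatments is that Lemma \ref{lemma-0} is being used not merely as an existence statement for a decaying family but as the assertion that \emph{every} solution with $f, f' \to 0$ lies on the stable branch constructed there and so obeys (\ref{asymptotics-infinity}); this is what its proof delivers, and you flag it correctly by saying the solution "sits on the stable branch at infinity."
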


\begin{proof}
If $f \in C^2(0,\infty)$ satisfies $f(r) > 0$ and $f'(r) < 0$ for $r \in (0,\infty)$, then necessarily $f'(r) \to 0$ as $r \to \infty$ because $[0,b] \ni f$ is compact. Assume that $f(r) \to c$ as $r \to \infty$ with some $c \in (0,b)$. Then, integrating (\ref{eq-self-adjoint-form}) on $[0,R]$ similarly to the proof of Lemma \ref{lemma-minus} yields
\begin{eqnarray*}
R^{d-1} f'(R) & = & \int_0^R r^{d+1} f(r) dr - \int_0^R r^{d-1} f(r)^3 dr - \lambda \int_0^R r^{d-1} f(r) dr \\
& \geq & \frac{c}{d+2} R^{d+2} - \frac{b(b^2 + \lambda)}{d} R^d,
\end{eqnarray*}
where $\lambda \in (0,d)$ if $\lambda \in I_0$. Hence for
$$
R > \left(\frac{b (b^2 + \lambda) (d+2)}{d c} \right)^{1/2},
$$
we must have $f'(R) > 0$ which is a contradiction. This implies that $c = 0$, that is, $f(r) \to 0$ as $r \to \infty$.
Since $f(r), f'(r) \to 0$ as $r \to \infty$, Lemma \ref{lemma-0} implies that  $f(r)$ satisfies the fast asymptotic decay (\ref{asymptotics-infinity}) so that $f \in \mathcal{E} \subset L^2_r(\mathbb{R}^+)$ for this
$\lambda \in I_0$.
\end{proof}

We collect all individual results together as the proof of Theorem \ref{theorem-1}. \\

{\em Proof of Theorem \ref{theorem-1}.} Fix $d \geq 4$ so that all previous results can be equally applied.

By Lemmas \ref{lemma-1} and \ref{lemma-2}, there exists the unique global classical solution $f \in C^2(0,\infty)$ to the initial-value problem (\ref{statGPshoot}) for $\lambda \in \mathbb{R}$. The line $\mathbb{R}$ for the parameter $\lambda$ in the differential equation (\ref{eq}) can be partitioned into the union of three disjoint sets $I_+$, $I_-$, and $I_0$ given by (\ref{I-plus}), (\ref{I-minus}), and (\ref{I-zero}) respectively. Suitable solutions to the boundary-value problem (\ref{statGPrad})
in the function space $\mathcal{E} \subset L^2_r(\mathbb{R}^+)$ may only exist for $\lambda \in I_0$.

By Lemmas \ref{lemma-plus} and \ref{lemma-minus}, the sets $I_+$ and $I_-$ are open and non-empty, so that the set $I_0$ in the partition (\ref{partition}) is closed and non-empty. By Lemma \ref{lemma-decay}, we proved that if $\lambda \in I_0$, then the corresponding function $f \in C^2(0,\infty)$ is a solution $\mathfrak{u} \in \mathcal{E}$ to the boundary-value problem (\ref{statGPrad}). 
It follows by Propositions \ref{prop-1} and \ref{prop-2} that $I_0 \subset (d-4,d)$. \hspace{5.8cm} $\Box$

\vspace{0.25cm}

Figure \ref{fig:fsol} illustrates the shooting method used in the proof of Theorem \ref{theorem-1}. For $d = 5$ and $b = 10$, we compute numerically the unique classical solution to the initial-value problem (\ref{statGPshoot}) for three different values of $\lambda$. For a special value of $\lambda$ denoted as $\lambda(b)$, the solution gives the ground state of the boundary-value problem (\ref{statGPrad}), which implies that $\lambda(b) \in I_0$. For another value of $\lambda < \lambda(b)$ the solution does not cross the zero level but grows with some oscillations as $r \to \infty$. Therefore, there is $r_0 \in (0,\infty)$ such that $f'(r_0) = 0$ 
and this $\lambda \in I_-$. For yet another value of $\lambda > \lambda(b)$, the solution crosses the zero level (and becomes large negative with some oscillations) so that there is $r_0 \in (0,\infty)$ such that $f(r_0) = 0$ and this $\lambda \in I_+$. We have confirmed numerically 
that the value of $\lambda(b) \in I_0$ is unique for every $b > 0$ 
as is stated in Remark \ref{rem-unique-3}. 

\begin{figure}[htp!]
	\centering
	\includegraphics[width=0.45\textwidth]{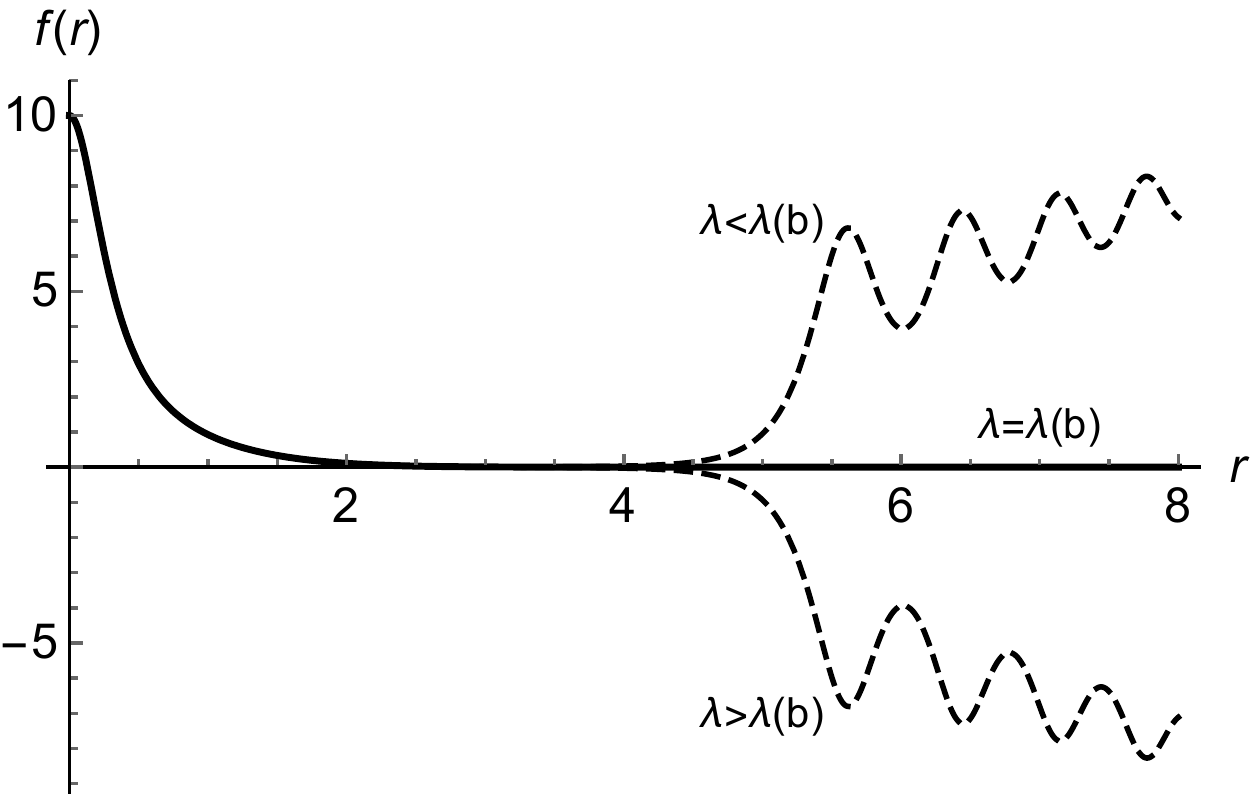}
	\caption{Plot of the unique solution $f$ satisfying the initial-value problem (\ref{statGPshoot}) for $d=5$, $b=10$, and three values of $\lambda$. For $\lambda = \lambda(b)$, the solution $f$ satisfies the boundary-value problem (\ref{statGPrad}).}
	\label{fig:fsol}
\end{figure}

\section{Proof of Theorem \ref{theorem-Selem}}
\label{sec-proof-2}

Here we explain how the shooting method can be applied to the proof of Theorem \ref{theorem-Selem}. Our arguments basically reproduce the approach in \cite{Selem2013} with some important modifications. 

By using the transformation 
\begin{equation}
\label{transformaton-singular}
r = e^t, \quad F(r) = \Psi(t) \quad F'(r) = e^{-t} \Psi'(t),
\end{equation}
one can obtain solutions to the initial-value problem (\ref{statGPsingular}) from the second-order differential equation
\begin{equation}
    \label{eq-psi-singular}
    \Psi''(t) + (d-4) \Psi'(t) + (3-d) \Psi(t) + \Psi(t)^3
    = -\lambda e^{2t} \Psi(t) + e^{4t} \Psi(t), \quad t \in \mathbb{R}
\end{equation}
completed with the boundary conditions
\begin{equation}
\label{bc-singular}
\left\{ \begin{array}{l} \Psi(t) \to \sqrt{d-3}, \\ 
\Psi'(t) \to 0, \end{array} \right. \quad \mbox{\rm as } \;\; t \to -\infty.
\end{equation}
We denote the unique solution of the second-order equation (\ref{eq-psi-singular}) satisfying the boundary conditions (\ref{bc-singular}) by $\Psi_{\lambda}(t)$. One can prove by a simple extension of Lemma \ref{lemma-1} that this solution satisfies the asymptotic behavior:
\begin{equation}
\label{lambda-solution-behavior}
\Psi_{\lambda}(t) = \sqrt{d-3} \left[1 - \frac{\lambda}{4d - 10} e^{2t} + \mathcal{O}(e^{4t})  \right] \quad \mbox{\rm as} \;\; t \to -\infty.
\end{equation}
The solution of Theorem \ref{theorem-Selem} arises for $\lambda = \lambda_{\infty}$, 
for which $\Psi_{\infty} := \Psi_{\lambda = \lambda_{\infty}}$ decays to zero as $t \to +\infty$. The following list contains relevant details how the shooting method is modified for the proof of Theorem \ref{theorem-Selem}.

\vspace{0.25cm}

\begin{itemize}
    \item $(\sqrt{d-3},0,0)$ is an equilibrium point of the three-dimensional dynamical system 
  \begin{equation}
  \label{3d-system-capital}
  \left\{ \begin{array}{l}
  x' = 2x, \\
  \Psi' = \Phi, \\
  \Phi' = (4-d) \Phi + (d-3) \Psi - \Psi^3 - \lambda x \Psi + x^2 \Psi,
  \end{array} \right.
  \end{equation}
  where $x(t) := e^{2t}$ and the prime stands for the derivative in $t$.  
  If $d \geq 5$, the equilibrium point $(\sqrt{d-3},0,0)$ admits a one-dimensional unstable manifold and a two-dimensional stable manifold. 
  The unique local classical solution $\Psi_{\lambda}$ satisfying the differential equation (\ref{eq-psi-singular}) and the boundary conditions (\ref{bc-singular}) corresponds to the one-dimensional unstable manifold of the dynamical system (\ref{3d-system-capital}) with uniquely defined $x(t) = e^{2t}$. The existence and uniqueness of $\Psi_{\lambda}$ follows by the unstable manifold theorem. In an analogue with Lemma \ref{lemma-1}, this gives the unique solution $F \in C^2(0,r_0)$ with $F(r) > 0$ and $F'(r) < 0$ for $r \in (0,r_0)$ to the initial-value problem (\ref{statGPsingular}) for $d \geq 5$.

\vspace{0.25cm}

    \item The proof of Lemma \ref{lemma-0} does not depend on the behavior of $f(r)$ near $r = 0$ as long as $f(r), f'(r) \to 0$ as $r \to \infty$. By the transformation $F(r) = r f(r)$, if $F(r), F'(r) \to 0$ as $r \to \infty$, then $f(r), f'(r) \to 0$ as $r \to \infty$. By Lemma \ref{lemma-0}, there exists $C \in \mathbb{R}$ such that
    \begin{equation}
    \label{asymptotics-infinity-F}
F(r) \sim C r^{\frac{\lambda - d + 2}{2}} e^{-\frac{1}{2}r^2} \quad \mbox{\rm as} \quad r \to \infty.
\end{equation}

\vspace{0.25cm}

\item The proof of Lemma \ref{lemma-2} is extended to $f(r) = r^{-1} F(r)$ verbatim.

\vspace{0.25cm}

    \item For the set $I_+$ in (\ref{partition}) defined by zeros of $F$, 
    openness of $I_+$ follows from uniqueness of the zero solutions in (\ref{eq-psi-singular}) which implies that if $F(r_0) = 0$, then $F'(r_0) \neq 0$. In order to show that $[d,\infty) \subset I_+$, we define 
    $$
    e^{-\frac{1}{2}r^2} g(r) = f(r) = r^{-1} F(r)
    $$
    and 
    $$
    e^{-\frac{1}{2} r^2} g'(r) = r^{-1} F'(r) - \frac{1-r^2}{r^2} F(r),
    $$
    hence $g'(r) < 0$ for small $r > 0$. The rest of the proof of Lemma \ref{lemma-plus} applies verbatim.
    
    \vspace{0.25cm}
    
      \item For the set $I_-$ in (\ref{partition}) defined by zeros of $F'$, 
      a special care should be taken to prove that the set is open. In the special case when $F_{\lambda_0}'(r_0) = F_{\lambda_0}''(r_0) = 0$, for which 
      $d-3 - F_{\lambda_0}(r_0)^2 = (\lambda_0 - r_0^2) r_0^2 > 0$, we obtain by differentiation in $r$:
      $$
      F_{\lambda_0}'''(r_0) = 2 (2r_0 - \lambda_0 r_0^{-1}) F_{\lambda_0}(r_0),
      $$
      hence the contradiction with $F_{\lambda_0}'''(r_0) > 0$ only holds if $\lambda_0 \in (r_0^2, 2 r_0^2)$. If $\lambda_0 = 2 r_0^2$ so that $F_{\lambda_0}'''(r_0) = 0$, then we obtain by another differentiation in $r$:
      $$
      F_{\lambda_0}''''(r_0) = (6 \lambda_0 r_0^{-2} - 4) F_{\lambda_0}(r_0) = 8 F_{\lambda_0}(r_0) > 0,
      $$
      so that the minimum of $F_{\lambda}$ persists near $r_0$ when the solution is continued with respect to $\lambda$ near $\lambda_0$. If $\lambda_0 > 2 r_0^2$ and $F_{\lambda_0}'''(r_0) < 0$, 
      then $F_{\lambda_0}'(r) \leq 0$ near $r = r_0$, so that if no other extremal points exist, then $F_{\lambda_0}(r) \in [0,\sqrt{d-3}]$ and $F_{\lambda_0}'(r) \leq 0$ for all $r > 0$. However, $F_{\lambda_0}(r) \to c$ as $r \to \infty$ is impossible for $c \neq 0$ (see the next item), hence $F_{\lambda_0}(r) \to 0$ as $r \to \infty$. However, if $F_{\lambda_0} \in C^2(0,\infty)$, $F_{\lambda_0}(r) > 0$ for $r > 0$, and $F_{\lambda_0}(r) \to 0$ as $r \to \infty$, then $F_{\lambda_0}'(r) < 0$ for $r > 0$ by the arguments from \cite{LiNi}, which is a contradiction with $F_{\lambda_0}'(r_0) = 0$. Thus, either $F_{\lambda_0}'(r_0) = 0$ and $F_{\lambda_0}''(r_0) \neq 0$ 
      or $F_{\lambda_0}'(r_0) = F_{\lambda_0}''(r_0) = F_{\lambda_0}'''(r_0) = 0$ and $F_{\lambda_0}''''(r_0) > 0$, in both cases 
      the minimum of $F_{\lambda}$ persists near $r = r_0$ in $\lambda$ near $\lambda_0$. 
      
      In order to show that $(-\infty,0] \subset I_-$, we apply the proof of Lemma \ref{lemma-minus} to $f(r) = r^{-1} F(r)$, which holds due to the fast decay 
      $$
      r^{d} [f'(r)]^2 \to 0, \quad 
      r^d [f(r)]^4 \to 0, \quad 
      r^{d-1} f(r) f'(r) \to 0 \quad \mbox{\rm as} \quad r \to 0
      $$ 
      if $d \geq 5$ (no decay holds if $d = 4$). Integrating (\ref{eq-self-adjoint-form}) on $[0,R]$ with  $f(r) = r^{-1} F(r)$ for $c \leq F(r) \leq \sqrt{d-3}$ and $\lambda \leq 0$ yields
      $$
      R^{d-2} F'(R) \geq \frac{c}{d+1} R^{d+1} + (c - \sqrt{d-3}) R^{d-3},
      $$ 
      due to the fast decay $r^{d-1} f'(r) \to 0$ as $r \to 0$.
      The lower bound implies that $F'(r) > 0$ for sufficiently large $r$. Hence, $(-\infty,0] \in I_-$ if $d \geq 5$.
      
      \vspace{0.25cm}
      
    \item The proof of Lemma \ref{lemma-decay} also applies to $f(r) = r^{-1} F(r)$.  Integrating (\ref{eq-self-adjoint-form}) on $[0,R]$ with  $f(r) = r^{-1} F(r)$ for $c \leq F(r) \leq \sqrt{d-3}$ yields
    $$
    R^{d-2} F'(R) \geq \frac{c}{d+1} R^{d+1} + (c - \sqrt{d-3}) R^{d-3} - \frac{\lambda \sqrt{d-3}}{d-2} R^{d-2},
    $$ 
    which is a contradiction with $F'(r) < 0$ for sufficiently large $r$. Hence $c = 0$ and $F(r),F'(r) \to 0$ as $r \to \infty$.
\end{itemize}

\begin{rem}
	\label{remark-error}
	Uniqueness of the solution in Theorem \ref{theorem-Selem} was claimed in Section 4 of \cite{Selem2013}, however, we believe that the proof was incorrect. Indeed, assuming two solutions $\Psi_{\lambda_1}(r)$ and $\Psi_{\lambda_2}(r)$ for two values $\lambda_1$ and $\lambda_2$ in Theorem \ref{theorem-Selem}, we construct a quotient 
	$$
	\rho(t) = \frac{\Psi_1(t)}{\Psi_2(t)},
	$$
	which satisfies the differential equation
\begin{equation}
\label{rho-eq}
	\rho''(t) + \left[ d - 4 + \frac{2 \Psi_2'(t)}{\Psi_2(t)} \right] \rho'(t) + \Psi_2(t)^2 \rho(t) [\rho(t)^2 - 1] + (\lambda_1 - \lambda_2) e^{2t} \rho(t) = 0.
\end{equation}
	It follows from (\ref{lambda-solution-behavior}) that 
\begin{equation}
\label{rho-solution}
	\rho(t) = 1 - \frac{\lambda_1 - \lambda_2}{4d - 10} e^{2t} + \mathcal{O}(e^{4t}) \quad \mbox{\rm as} \quad t \to -\infty.
\end{equation}
	A rescaling of time was applied in the arguments of \cite{Selem2013} to make the last term in (\ref{rho-eq}) small but was not applied to the second term of the expansion (\ref{rho-solution}). As a result, the differential equation (\ref{rho-eq}) was replaced by a differential inequality which led to a contradiction in \cite{Selem2013}. With the proper scaling of time in both (\ref{rho-eq}) and (\ref{rho-solution}), transformation of the differential equation to a differential inequality cannot be justified.
\end{rem}

Figure \ref{fig:Fsol} illustrates the shooting method used in the proof of Theorem \ref{theorem-Selem}. The left panel shows $F(r)$ as the unique classical solution to the initial-value problem (\ref{statGPsingular}), whereas the right panel shows $\Psi(t)$ as a solution to the differential equation (\ref{eq-psi-singular}) with the boundary conditions (\ref{bc-singular}). For $d = 5$, we compute numerically the solutions for three different values of $\lambda$. For a special value of $\lambda = \lambda_{\infty}$, the solution $F$ gives the limiting singular solution 
$f_{\infty} \in \mathcal{E}$ after the transformation $f(r) = r^{-1} F(r)$. For values of $\lambda$ above (below) $\lambda_{\infty}$, the solution crosses the zero level and diverges to negative infinity (attains a minimum and diverges to positive infinity). We have found numerically that the value of $\lambda_{\infty}$ is unique (see Remark \ref{rem-unique-3}).

\begin{figure}[htp!]
	\centering
	\includegraphics[width=0.45\textwidth]{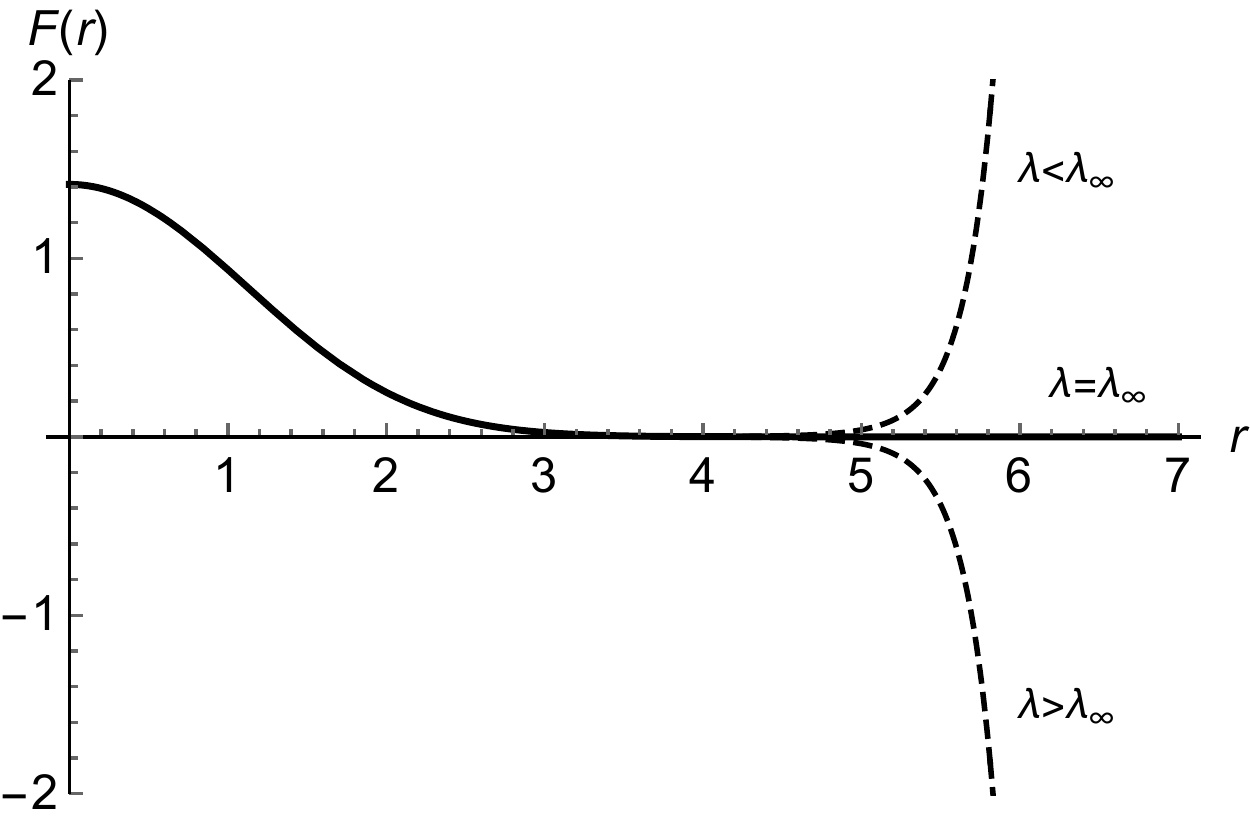} 
	\includegraphics[width=0.45\textwidth]{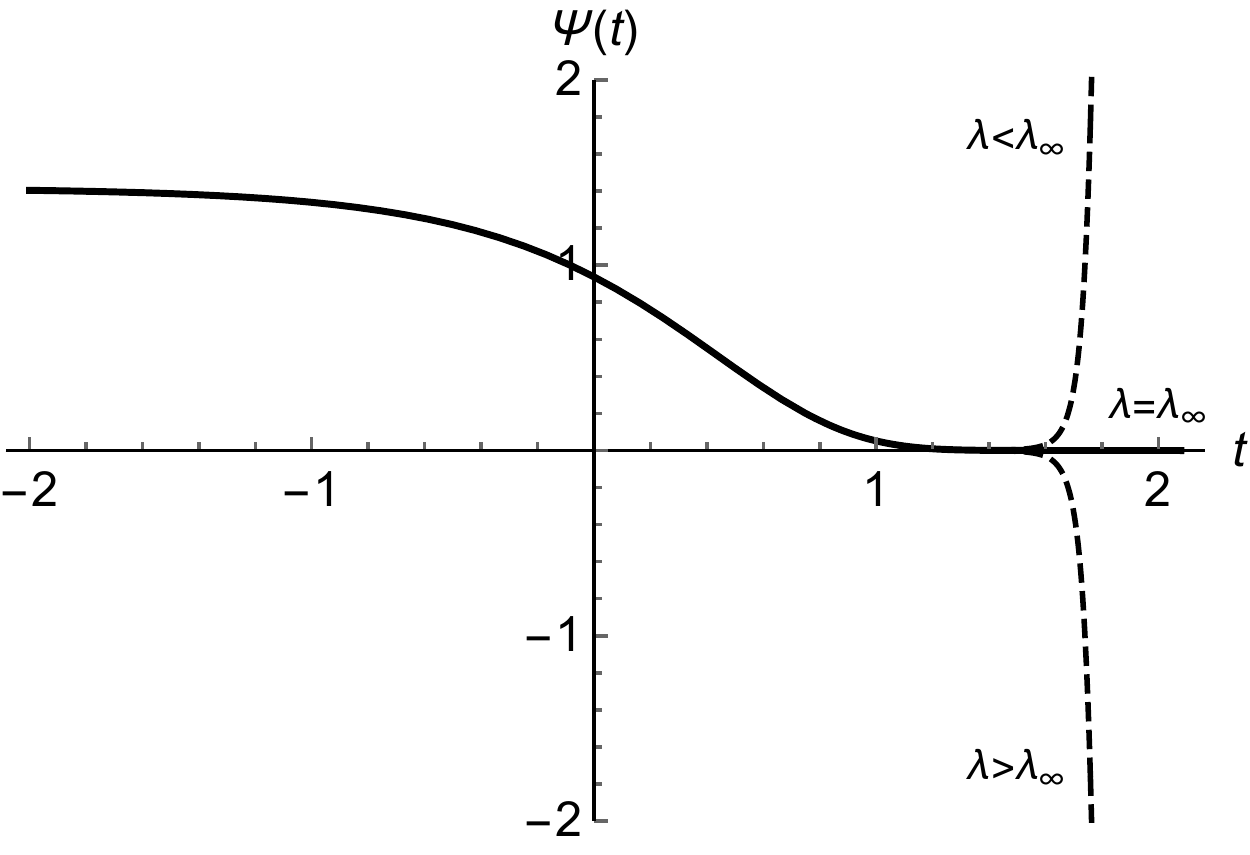}
	\caption{Plot of the unique solution $F(r)$ satisfying the initial-value problem (\ref{statGPsingular}), together with $\Psi(t)$ defined by the transformation (\ref{transformaton-singular}), for $d=5$  and three values of $\lambda$. For $\lambda = \lambda_{\infty}$, the solution gives the limiting singular solution $f_{\infty} \in \mathcal{E}$ after the transformation $f(r) = r^{-1} F(r)$. The dashed lines show solutions for values of $\lambda$ slightly deviating from $\lambda_\infty$.}
	\label{fig:Fsol}
\end{figure}

\begin{rem}
	The limiting value $\lambda_{\infty}$ can be computed semi-analytically. 
	Near the origin, the limiting singular solution $f_{\infty}$ has the form
	$$
	f_{\infty}(r)=\frac{\sqrt{d-3}}{r}\left(1+\sum_{n=1}^{\infty} c_n r^{2n}\right),
	$$
	where the coefficients of the Taylor series are explicit polynomials in $\lambda$.  The radius of convergence of this series is not big enough to guarantee the decay of $f_{\infty}$ to zero at infinity but this problem can be resolved by making the Pad\'e approximation. The results for different values of $d \geq 5$ are collected in Table \ref{tab:my_label}.
\end{rem}

\begin{table}[h]
	\centering
	\begin{tabular}{c|r|c|r|c|r|c|r}
		$d$ & $\lambda_\infty$ & $d$ & $\lambda_\infty$ & $d$ & $\lambda_\infty$ & $d$ & $\lambda_\infty$\\ \hline
		5 & 4.01036 & 9 & 8.68938 & 13 & 12.89681 & 17 & 16.96618\\
		6 & 5.27039 & 10 & 9.76437 & 14 & 13.92174 & 18 & 17.97452\\
		7 & 6.45486 & 11 & 10.82105 & 15 & 14.94073 & 19 & 18.98085\\
		8 & 7.58946 & 12 & 11.86408 & 16 & 15.95519 & 20 & 19.98563\\
	\end{tabular}
\vspace{0.25cm} 
	\caption{Approximate values of $\lambda_\infty$ for different values of $d$.}
	\label{tab:my_label}
\end{table}

\section{Proof of Theorem \ref{theorem-2}}
\label{sec-proof-3}

Here we study three particular solutions to the differential equation (\ref{eq-psi-singular}) in order to prove Theorem \ref{theorem-2}. 

One solution to (\ref{eq-psi-singular}) is defined from the boundary conditions (\ref{bc-singular}) and is denoted by $\Psi_{\lambda}$. It satisfies the asymptotic behavior (\ref{lambda-solution-behavior}) as $t \to -\infty$.

Another solution to (\ref{eq-psi-singular}) is obtained from the unique solution constructed in Lemma \ref{lemma-1} after the scaling transformation $\Psi(t) = e^t \psi(t)$, where $\psi(t)$ satisfies the differential equation (\ref{eq-psi}) and the boundary conditions (\ref{bc}). In order to distinguish this solution from $\Psi_{\lambda}$, we denote it by $\Psi_b$. It follows from (\ref{asymptotic-psi-prime}) that $\Psi_b$ satisfies the asymptotic behavior
\begin{equation}
\label{asym-b-solution}
\Psi_b(t) = b e^t - (\lambda b + b^3) (2d)^{-1} e^{3t} + \mathcal{O}(e^{5t}) \quad \mbox{\rm as} \quad t \to -\infty.
\end{equation}
The truncated (autonomous) version to the second-order equation (\ref{eq-psi-singular}) is given by 
\begin{equation}
    \label{eq-psi-truncated}
    \Theta''(t) + (d-4) \Theta'(t) + (3-d) \Theta(t) + \Theta(t)^3 = 0.
\end{equation}
With an elementary exercise, we have the following lemma.

\begin{lemma}
\label{lemma-Psi-inf}
Fix $d \geq 5$. There exists a unique orbit of the truncated equation (\ref{eq-psi-truncated}) on the phase plane
$(\Theta,\Theta')$ that connects the equilibrium points $(0,0)$ and $(\sqrt{d-3},0)$.
\end{lemma}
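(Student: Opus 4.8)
The plan is to recast \eqref{eq-psi-truncated} as the planar autonomous system $\Theta' = \Phi$, $\Phi' = (d-3)\Theta - \Theta^3 - (d-4)\Phi$, whose equilibria are $(0,0)$ and $(\pm\sqrt{d-3},0)$, and to exhibit the desired connection as a branch of the unstable manifold of the origin. First I would linearize. At $(0,0)$ the eigenvalues solve $\mu^2 + (d-4)\mu - (d-3) = 0$, whose discriminant is $(d-2)^2$, giving $\mu_+ = 1$ and $\mu_- = -(d-3)$; thus $(0,0)$ is a saddle with a one-dimensional unstable manifold, whose branch tangent to the eigenvector $(1,1)$ enters the quadrant $\Theta>0$, $\Phi>0$. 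At $(\sqrt{d-3},0)$ the eigenvalues solve $\mu^2 + (d-4)\mu + 2(d-3) = 0$, so both have real part $-(d-4)/2 < 0$; hence $(\pm\sqrt{d-3},0)$ are asymptotically stable (a spiral for $5\le d\le 12$ and a node for $d\ge 13$, matching the dichotomy of Theorem \ref{theorem-2}). I would also record the reflection symmetry $\Theta \mapsto -\Theta$ of \eqref{eq-psi-truncated}.

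The core tool is the mechanical energy $E(\Theta,\Phi) := \tfrac12 \Phi^2 + V(\Theta)$ with the double-well potential $V(\Theta) := \tfrac14\Theta^4 - \tfrac{d-3}{2}\Theta^2$, for which a direct computation using the system gives $\dot E = -(d-4)\Phi^2 \le 0$, with strict inequality whenever $\Phi \neq 0$. Since $V(0)=0$ while $V(\pm\sqrt{d-3}) = -\tfrac14(d-3)^2 < 0$, the crucial geometric fact is that the sublevel set $\{E \le 0\}$ consists of two lobes $\{\tfrac12\Phi^2 \le -V(\Theta)\}$, one in $\Theta \ge 0$ and one in $\Theta \le 0$, which are bounded (as $\Theta^2 \le 2(d-3)$ there) and meet \emph{only} at the origin, where the set pinches to the single point $(0,0)$.

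For existence I would follow the branch of the unstable manifold entering $\Theta>0$. As $t\to-\infty$ it approaches $(0,0)$, so $E\to 0$; since $\Phi>0$ along this branch near the origin, $E$ strictly decreases and becomes negative, and being nonincreasing it stays negative thereafter. Consequently the orbit is trapped in the right lobe $\{E<0,\ \Theta\ge 0\}$, hence bounded, and it cannot reach the left lobe without passing through the pinch point $(0,0)$. By the LaSalle invariance principle its $\omega$-limit set is a connected invariant subset of $\{\dot E=0\}=\{\Phi=0\}$, whose largest invariant subset is the set of equilibria; thus the $\omega$-limit is a single equilibrium. Since $E$ has a strictly negative limit, the origin (where $E=0$) is excluded, and confinement to $\Theta\ge 0$ excludes $(-\sqrt{d-3},0)$; therefore the orbit converges to $(\sqrt{d-3},0)$, producing the connection.

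For uniqueness, I would argue that any orbit connecting the two equilibria must have $(0,0)$ as its backward limit: $(\sqrt{d-3},0)$ is asymptotically stable, so no nonconstant orbit can emanate from it. Hence every connecting orbit lies on the one-dimensional unstable manifold $W^u(0,0)$, which consists of exactly two trajectories; by the reflection symmetry these are interchanged by $\Theta\mapsto-\Theta$, so one converges to $(\sqrt{d-3},0)$ and the other to $(-\sqrt{d-3},0)$. This leaves precisely one orbit joining $(0,0)$ to $(\sqrt{d-3},0)$. I expect the main obstacle to be making the trapping argument rigorous—verifying that the right lobe is forward invariant and that the orbit neither re-approaches the pinch point $(0,0)$ nor escapes through $\{E=0\}$—which is exactly what the strict dissipation $\dot E<0$ off $\{\Phi=0\}$ together with the pinched geometry of $\{E\le 0\}$ is designed to guarantee.
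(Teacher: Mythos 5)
Your proposal is correct and follows essentially the same route as the paper: both linearize at the saddle $(0,0)$, follow the branch of the one-dimensional unstable manifold entering $\Theta>0$, and use the identical energy $V(\Theta,\Theta')=\tfrac12(\Theta')^2+\tfrac12(3-d)\Theta^2+\tfrac14\Theta^4$ with dissipation $\dot V=(4-d)(\Theta')^2\le 0$ to trap the orbit and force convergence to $(\sqrt{d-3},0)$. Your two technical variants---confinement via the pinched sublevel set $\{E\le 0\}$ instead of the paper's quadrant argument with the stable curve, and LaSalle's invariance principle instead of the paper's explicit exclusion of periodic orbits---are minor refinements (arguably slightly more airtight) of the same argument, and your uniqueness step via the symmetry $\Theta\mapsto-\Theta$ matches the paper's appeal to uniqueness of the unstable curve.
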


\begin{proof}
The equilibrium point $(0,0)$ is a saddle point with two roots $\kappa_1 = 1$ and $\kappa_2 = 3 - d < 0$ of the characteristic equation
\begin{equation}
\label{kappa-negative-infinity}
\kappa^2 +  (d-4) \kappa + (3-d) = 0.
\end{equation}
By the unstable curve theorem, there exists a unique unstable curve on the 
plane $(\Theta,\Theta')$ tangential to the direction $(1,1)$, along which two orbits exist satisfying $\Theta(t) \to 0$ as $t \to -\infty$. One orbit is connected to $(0,0)$ in the first quadrant of the $(\Theta,\Theta')$-plane and the other orbit is connected to $(0,0)$ in the third quadrant. Because the stable curve is connected to $(0,0)$ in the second and fourth quadrants and the orbits of the planar system do not intersect away from the equilibrium points, the unstable orbit connected to $(0,0)$ in the first quadrant stays in the right half-plane with positive $\Theta$ and the unstable orbit connected to $(0,0)$ in the third quadrant stays in the left half-plane with negative $\Theta$. For the proof of the lemma, we only consider the former unstable orbit and introduce the energy function
$$
V(\Theta,\Theta') := \frac{1}{2} (\Theta')^2 +
\frac{1}{2} (3-d) \Theta^2 + \frac{1}{4} \Theta^4.
$$
If $\Theta(t) \in C^2(\mathbb{R})$ is a solution
to the second-order equation (\ref{eq-psi-truncated}), then
\begin{equation}
    \label{monotonicity-E}
\frac{d}{dt} V(\Theta,\Theta') = (4 - d) (\Theta')^2 \leq 0.
\end{equation}
Since $V(\Theta,\Theta')$ is bounded from below,
and its value is monotonically decreasing, the unstable orbit stays in a compact region of the right-half of the phase plane $(\Theta,\Theta')$. No periodic orbits exist in this compact region, because if $\Theta(t+T) = \Theta(t)$ is periodic with the minimal period $T > 0$, then we get contradiction with (\ref{monotonicity-E}):
$$
0 = V(\Theta,\Theta') |_{t = T} - V(\Theta,\Theta') |_{t = 0} = (4 - d) \int_0^T \left( \Theta' \right)^2 dt < 0.
$$
Hence, the unstable curve from $(0,0)$ has the limit set at the stable equilibrium point. The only stable equilibrium point in the right-half of the phase plane $(\Theta,\Theta')$ is the point $(\sqrt{d-3},0)$, hence $(0,0)$ and $(\sqrt{d-3},0)$ are connected by the unique heteroclinic orbit.
\end{proof}

Since the truncated equation (\ref{eq-psi-truncated}) is autonomous, the unique orbit of Lemma \ref{lemma-Psi-inf} can be parameterized by the time translation such that
\begin{equation}
\label{asym-Theta-solution}
\Theta(t+t_0) = e^{t+t_0} - (2d)^{-1} e^{3(t + t_0)} + \mathcal{O}(e^{5(t+t_0)}) \quad \mbox{\rm as} \quad t \to -\infty,
\end{equation}
where $t_0 \in \mathbb{R}$ is arbitrary
and $\Theta(t)$ is uniquely defined. It follows by comparing the asymptotic behaviors (\ref{asym-b-solution}) and (\ref{asym-Theta-solution}) that 
the parameter $b$ plays the same role as the translation parameter $t_0$ with the correspondence $t_0 = \log b$. The following lemma states that, when $b$ is sufficiently large, the solution $\Psi_b$ translated by $\log b$ converges to $\Theta$ on the negative half-line.

\begin{lemma}
\label{lemma-Psi}
Fix $d \geq 5$ and $\lambda \in \mathbb{R}$.
There exist $b_0 > 0$ (sufficiently large) and $C_0 > 0$ such that
the unique solution $\Psi_b$ to the second-order equation (\ref{eq-psi-singular})
with the asymptotic behavior (\ref{asym-b-solution}) satisfies
\begin{equation}
\label{bound-solution-b}
    \sup_{t \in (-\infty,0]} |\Psi_b(t-\log b) - \Theta(t)| 
    +     \sup_{t \in (-\infty,0]} |\Psi_b'(t-\log b) - \Theta'(t)| \leq C_0 b^{-2}, \quad b \geq b_0,
\end{equation}
where $\Theta$ is the uniquely defined solution to the truncated equation (\ref{eq-psi-truncated})
with the asymptotic behavior (\ref{asym-Theta-solution}) in Lemma \ref{lemma-Psi-inf}.
\end{lemma}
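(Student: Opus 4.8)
The plan is to pass to the shifted function $\tilde\Psi_b(t) := \Psi_b(t-\log b)$ and to compare it directly with $\Theta$. Substituting $t \mapsto t-\log b$ into (\ref{eq-psi-singular}) and using $e^{2(t-\log b)} = b^{-2}e^{2t}$ and $e^{4(t-\log b)} = b^{-4}e^{4t}$, one finds that $\tilde\Psi_b$ solves the truncated equation (\ref{eq-psi-truncated}) with an inhomogeneous right-hand side,
\[
\tilde\Psi_b'' + (d-4)\tilde\Psi_b' + (3-d)\tilde\Psi_b + \tilde\Psi_b^3 = \bigl(-\lambda b^{-2}e^{2t} + b^{-4}e^{4t}\bigr)\tilde\Psi_b =: \varepsilon_b(t)\,\tilde\Psi_b(t),
\]
where $\|\varepsilon_b\|_{L^{\infty}(-\infty,0]} = \mathcal{O}(b^{-2})$ because $e^{2t},e^{4t}\leq 1$ for $t\leq 0$. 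Comparing the shifted expansion (\ref{asym-b-solution}) with (\ref{asym-Theta-solution}) for $t_0=0$ shows, moreover, that $\tilde\Psi_b$ and $\Theta$ share the same leading behaviour $e^t$ as $t\to-\infty$ and differ only at the next order, $\tilde\Psi_b(t)-\Theta(t) = \mathcal{O}(b^{-2}e^{3t})$. Thus the shift by $\log b$ turns the problem into a regular perturbation of the autonomous equation (\ref{eq-psi-truncated}), with perturbation size $\mathcal{O}(b^{-2})$ both in the equation and in the matching data at $-\infty$.

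Setting $W := \tilde\Psi_b-\Theta$, the difference obeys a linear second-order equation with constant part $W''+(d-4)W'+(3-d)W$, a forcing $\varepsilon_b\tilde\Psi_b = \mathcal{O}(b^{-2}e^{3t})$, and remainder terms $-(\tilde\Psi_b^3-\Theta^3) = -(3\Theta^2+3\Theta W+W^2)W$ that are quadratic and cubic in $W$ (the leading $3\Theta^2 W$ being $\mathcal{O}(e^{2t})W$ near $-\infty$). The first step is to control $W$ on a half-line $(-\infty,T]$ for a fixed, sufficiently large negative $T$. There $\Theta$ and $\tilde\Psi_b$ are both $\mathcal{O}(e^t)$, so the coefficient $3\Theta^2$ is negligible and the homogeneous operator is effectively constant-coefficient with characteristic roots $\kappa_1=1$ and $\kappa_2=3-d$ of (\ref{kappa-negative-infinity}). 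I would recast the problem as a Volterra integral equation, integrating from $-\infty$ with the Green's function built from the modes $e^t$ and $e^{(3-d)t}$, exactly as in the proof of Lemma \ref{lemma-1}, and work in the weighted space $\{W : \sup_{t\leq T} e^{-3t}(|W|+|W'|)<\infty\}$. The crucial point is that the exact cancellation of the leading $e^t$ asymptotics places $W$ in this space, in which both homogeneous modes (including the one growing as $t\to-\infty$) are automatically excluded; a contraction argument then yields $|W(t)|+|W'(t)| \leq C_1 b^{-2} e^{3t} \leq C_1 b^{-2}$ for all $t\leq T$, with $C_1$ independent of $b$.

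The second step propagates this bound across the compact interval $[T,0]$. Here $\Theta$ is bounded, since by Lemma \ref{lemma-Psi-inf} it stays in a compact region of the phase plane, while $\tilde\Psi_b$ remains uniformly bounded by a standard bootstrap: as long as $\tilde\Psi_b$ stays within distance $1$ of $\Theta$, the coefficients of the $W$-equation are uniformly controlled, and Gronwall's inequality on the \emph{fixed} interval $[T,0]$ — with a constant of the form $e^{L|T|}$ depending only on $T$ and the $C^0$-bounds, hence independent of $b$ — gives $\sup_{[T,0]}(|W|+|W'|) \leq C_2 b^{-2}$, which for large $b$ is $\ll 1$ and closes the bootstrap. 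Combining the two steps produces (\ref{bound-solution-b}) with $C_0 = C_1+C_2$ and $b_0$ chosen large. I expect the main obstacle to be the first step: securing a bound uniform over the \emph{semi-infinite} half-line $(-\infty,T]$ while the second homogeneous mode $e^{(3-d)t}$ grows as $t\to-\infty$. This is precisely where the matching of the leading-order asymptotics of $\tilde\Psi_b$ and $\Theta$ must be invoked to annihilate that mode; the compact-interval propagation in the second step is then routine continuous dependence.
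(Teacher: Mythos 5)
Your proposal is correct, and it reaches the bound (\ref{bound-solution-b}) by a genuinely different inversion scheme than the paper's. Both proofs start identically: decompose $\Psi_b(t-\log b)=\Theta(t)+W(t)$ and observe that the translation turns (\ref{eq-psi-singular}) into the truncated equation (\ref{eq-psi-truncated}) with forcing $f_b(t)=-\lambda b^{-2}e^{2t}+b^{-4}e^{4t}$, uniformly $\mathcal{O}(b^{-2})$ on $(-\infty,0]$. The divergence is in how the linear part is inverted. The paper keeps $3\Theta^2 W$ \emph{inside} the linear operator, i.e.\ it inverts the linearization of (\ref{eq-psi-truncated}) along $\Theta$ using the explicit fundamental system $\{\Theta',\Xi\}$ normalized by the Wronskian relation (\ref{Wronskian-relation}), and closes a single fixed-point argument on all of $(-\infty,0]$ in the variable $e^{-t}W$. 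You instead invert the constant-coefficient operator at the origin (modes $e^{t}$ and $e^{(3-d)t}$ from (\ref{kappa-negative-infinity})) and treat $3\Theta^2 W=\mathcal{O}(e^{2t})W$ perturbatively, which is exactly why you must restrict to $(-\infty,T]$ with $T$ negative enough that the contraction constant $\mathcal{O}(e^{2T})$ is small, and then cross the compact interval $[T,0]$ by Gronwall with a $b$-independent constant. Your route is more elementary: it never constructs the second solution $\Xi$ or its asymptotics, and your stronger weight $e^{-3t}$ excludes \emph{both} homogeneous modes, whereas under the paper's weight $e^{-t}$ the translation mode $\Theta'\sim e^t$ remains admissible and is ruled out only by the choice of particular solution in (\ref{persistence-varphi-integral}). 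What the paper's route buys is structural: the same fundamental system $\{\Theta',\Xi\}$, together with its decay as $t\to+\infty$, is reused verbatim in Lemma \ref{lemma-spiral-Psi} to extend the estimate to windows $[0,T+a\log b]$; your constant-coefficient inversion is tied to the equilibrium at the origin and would not carry over there, though that is beyond the present lemma. One step you should make explicit: a priori the true difference satisfies only $W(t)=\mathcal{O}(e^{3t})$ with a constant depending on $b$, so to identify $W$ with the fixed point constructed in the ball of radius $\mathcal{O}(b^{-2})$ you need uniqueness for the Volterra equation in the whole weighted space — obtained, say, by contracting on $(-\infty,T']$ with $T'$ sufficiently negative and then invoking ODE uniqueness up to $t=T$. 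This is routine, and the paper is equally terse on the corresponding point, but in a written-out version it should be recorded.
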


\begin{proof}
By translating $t$, we rewrite (\ref{eq-psi-singular}) for $\Psi_b(t)$ in the form:
\begin{equation}
    \label{eq-psi-singular-translated}
    \Psi''(t) + (d-4) \Psi'(t) + (3-d) \Psi(t) + \Psi(t)^3
    = -\lambda b^{-2} e^{2(t + \log b)} \Psi(t) + b^{-4} e^{4(t + \log b)} \Psi(t).
\end{equation}
The solution $\Psi_b(t-\log b)$ is decomposed near the uniquely defined solution $\Theta(t)$
to the truncated equation (\ref{eq-psi-truncated}) by using $\Psi_b(t-\log b) = \Theta(t) + \Upsilon(t)$,
where $\Upsilon(t)$ satisfies the persistence problem
\begin{equation}
\label{persistence-varphi}
L \Upsilon = f_b (\Theta + \Upsilon) + N(\Theta,\Upsilon),
\end{equation}
where
\begin{eqnarray*}
(L \Upsilon)(t) & = & \Upsilon''(t) + (d-4) \Upsilon'(t) + (3-d) \Upsilon(t) + 3 \Theta(t)^2 \Upsilon(t), \\
f_b(t) & = & -\lambda b^{-2} e^{2t} + b^{-4} e^{4t}, \\
N(\Theta, \Upsilon) & = & -3 \Theta \Upsilon^2 - \Upsilon^3.
\end{eqnarray*}
There exist two linearly independent solutions $\Theta'(t)$ and $\Xi(t)$ of the homogeneous equation $L \Upsilon = 0$,
where $\Theta'(t)$ is due to translation of the truncated equation (\ref{eq-psi-truncated}) and
$\Xi(t)$ is the linearly independent solution satisfying the Wronskian relation 
from Liouville's theorem:
\begin{equation}
\label{Wronskian-relation}
W(\Theta',\Xi)(t) := \Theta'(t) \Xi'(t) - \Theta''(t) \Xi(t) = W_{\infty} e^{(4-d) t},
\end{equation}
where $W_{\infty}$ is an arbitrary nonzero constant. For unique normalization
of $\Xi(t)$, we can just set $W_{\infty} = 1$. Since $\Theta'(t)$ decays to zero as $t \to -\infty$ according to 
$$
\Theta'(t) = e^t + \mathcal{O}(e^{3t}) \quad \mbox{\rm as} \quad t \to -\infty,
$$
it follows from the integration of (\ref{Wronskian-relation}) 
with $W_{\infty} = 1$ that $\Xi(t)$ grows as $t \to -\infty$ according to
$$
\Xi(t) = (2-d)^{-1} e^{(3-d)t} + \mathcal{O}(e^{(5-d)t}) \quad \mbox{\rm as} \quad t \to -\infty.
$$
By solving the second-order differential equation (\ref{persistence-varphi})
with the variation of parameters, we obtain the integral equation for $\Upsilon(t)$:
\begin{equation}
\label{persistence-varphi-integral}
\Upsilon(t) = - \int_{-\infty}^t e^{(d-4) t'} \left[ \Xi(t') \Theta'(t) - \Xi(t) \Theta'(t') \right]
\left[ f_b(t') (\Theta(t') + \Upsilon(t')) + N(\Theta(t'),\Upsilon(t')) \right] dt',
\end{equation}
where the choice of integration from $-\infty$ to $t$ ensures that $\Upsilon(t)$ does not grow as $t \to -\infty$
along the solution $\Xi(t)$ and does not introduce the additional translation in time along the solution
$\Theta'(t)$. In order to prove existence of small solutions to the integral equation
(\ref{persistence-varphi-integral}) on $(-\infty,0]$ for large $b$, we introduce $\tilde{\Upsilon}(t) := e^{-t} \Upsilon(t)$,
so that
\begin{equation}
\label{bound-on-bound}
\sup_{t \in (-\infty,0]} |\Upsilon(t)| \leq \sup_{t \in (-\infty,0]} |\tilde{\Upsilon}(t)|.
\end{equation}
Then, $\tilde{\Upsilon}(t)$ is found from the integral equation
\begin{equation}
\label{persistence-varphi-tilde}
\tilde{\Upsilon}(t) = - \int_{-\infty}^t K(t,t')
\left[ f_b(t') \left( e^{-t'} \Theta(t') + \tilde{\Upsilon}(t') \right) +
e^{2 t'} N(e^{-t'} \Theta(t'),\tilde{\Upsilon}(t')) \right] dt',
\end{equation}
where
\begin{eqnarray*}
K(t,t') = \left[ e^{(d-3) t'} \Xi(t') \right] \left[ e^{-t} \Theta'(t) \right] - e^{(d-2)(t'-t)}
\left[ e^{(d-3) t} \Xi(t) \right] \left[ e^{-t'} \Theta'(t') \right].
\end{eqnarray*}
Thanks to the exponential rates of $\Theta'(t)$ and $\Xi(t)$ as $t \to -\infty$,
there exists a positive constant $A_0$ such that
$$
\sup_{t \in (-\infty,0], t' \in (-\infty,0]} |K(t',t)| \leq A_0.
$$
It is also clear that 
$$
\sup_{t \in (-\infty,0]} |f_b(t)| \leq (|\lambda| + 1) b^{-2}, \quad b \geq 1,
$$
so that the inhomogeneous term of the integral equation (\ref{persistence-varphi-tilde})
is small if $b$ is large. By the same fixed-point iterations as in the proof of Lemma \ref{lemma-1},
it follows that there exists a sufficiently large $b_0$ such that for every $b \geq b_0$
there exists the unique solution $\tilde{\Upsilon}$ to the integral equation (\ref{persistence-varphi-tilde})
in a closed subset of the Banach space $L^{\infty}(-\infty,0)$ 
satisfying the bound
\begin{equation}
\label{bound-on-bound-too}
\sup_{t \in (-\infty,0]} |\tilde{\Upsilon}(t)| \leq C_0 b^{-2},
\end{equation}
where $C_0 > 0$ is a suitable chosen constant. Bounds (\ref{bound-on-bound}) and (\ref{bound-on-bound-too}) yield the first bound in (\ref{bound-solution-b}). Since $\tilde{\Upsilon} \in C^1(-\infty,0)$ 
by bootstrapping arguments similar to those in the proof of Lemma \ref{lemma-1}, the second bound in (\ref{bound-solution-b}) follows by differentiating (\ref{persistence-varphi-tilde}) in $t$ 
and using bound (\ref{bound-on-bound-too}).
\end{proof}

\begin{rem}
It follows from the integral equation (\ref{persistence-varphi-tilde}) with the account of exponential
rates of $\Xi(t)$ and $\Theta'(t)$ that $\tilde{\Upsilon}(t) = \lambda b^{-2} (2d)^{-1} e^{2t} + \mathcal{O}(e^{4t})$
as $t \to -\infty$, in agreement with the asymptotic expansions (\ref{asym-b-solution}) and (\ref{asym-Theta-solution})
for $\tilde{\Upsilon}(t) = e^{-t} [\Psi_b(t-\log b) - \Theta(t)]$.
\end{rem}

In addition to the solutions $\Psi_{\lambda}$ and $\Psi_b$ to the differential equation (\ref{eq-psi-singular}), which are defined from the behavior as $t \to - \infty$, we define the third solution to (\ref{eq-psi-singular}) 
from the decaying behavior as $t \to +\infty$. This solution to (\ref{eq-psi-singular}) is denoted by $\Psi_C(t)$. Its existence follows from a modification of the result of Lemma \ref{lemma-0}. 

\begin{lemma}
\label{lemma-Psi-asymptotics}
Fix $d \geq 1$ and $\lambda \in \mathbb{R}$. There exists a one-parameter family of solutions to the second-order equation (\ref{eq-psi-singular}) denoted by $\Psi_C(t)$ 
such that $\Psi_C(t),\Psi_C'(t) \to 0$ as $t \to +\infty$ and
    \begin{equation}
    \label{asymptotics-infinity-psi}
\Psi_C(t) \sim C e^{\frac{\lambda - d + 2}{2} t} e^{-\frac{1}{2} e^{2t}} \quad \mbox{\rm as} \quad t \to +\infty,
\end{equation}
for some $C \in \mathbb{R}$, where the asymptotic correspondence can be differentiated. Moreover, the solution $\Psi_C(t)$ is extended globally for every $t \in \mathbb{R}$.
\end{lemma}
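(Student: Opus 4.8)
The plan is to reduce the statement entirely to the already-proved Lemma~\ref{lemma-0} by means of the pointwise substitution $f(r) = r^{-1} F(r)$ combined with the Emden--Fowler transformation (\ref{transformaton-singular}). First I would record the algebraic fact that $f = r^{-1} F$ converts the differential equation for $F$ in (\ref{statGPsingular}) into the differential equation (\ref{eq}) for $f$: inserting $F = rf$, $F' = f + rf'$, $F'' = 2f' + rf''$ and simplifying, the singular contribution $\frac{d-3}{r}F'$ combines with the $\frac{d-3}{r^2}F$ term so that both cancel, and after dividing by $r$ the result is exactly (\ref{eq}). Consequently $\Psi(t) = F(e^t) = e^t f(e^t)$ solves (\ref{eq-psi-singular}) whenever $f$ solves (\ref{eq}), which is precisely the correspondence recorded in Table~\ref{tab:notations}.

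Next I would invoke Lemma~\ref{lemma-0}, which for every $d \geq 1$ and $\lambda \in \mathbb{R}$ produces $r_0 > 0$ and a one-parameter family of solutions $f \in C^2(r_0,\infty)$ to (\ref{eq}) with $f(r), f'(r) \to 0$ as $r \to \infty$ and the differentiable asymptotics (\ref{asymptotics-infinity}); the free parameter is the constant $C$ there. Setting $F := rf$ and $\Psi_C(t) := F(e^t)$, the asymptotic (\ref{asymptotics-infinity-psi}) then follows in two steps. Since multiplication by $r$ shifts the algebraic power, (\ref{asymptotics-infinity}) yields $F(r) \sim C\, r^{(\lambda - d + 2)/2} e^{-r^2/2}$, which is (\ref{asymptotics-infinity-F}); substituting $r = e^t$ turns the prefactor $r^{(\lambda-d+2)/2}$ into $e^{(\lambda-d+2)t/2}$ and the Gaussian into $e^{-e^{2t}/2}$, giving (\ref{asymptotics-infinity-psi}). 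Because the correspondence in Lemma~\ref{lemma-0} may be differentiated, the same remains true after the two smooth changes of variable.

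I would then verify the decay conditions $\Psi_C(t), \Psi_C'(t) \to 0$ as $t \to +\infty$. From $\Psi_C(t) = F(e^t)$ and $\Psi_C'(t) = e^t F'(e^t)$, with $F' = f + rf'$, it suffices to observe that the Gaussian factor in (\ref{asymptotics-infinity}) dominates every algebraic power: differentiating (\ref{asymptotics-infinity}) term by term gives $f'(r) \sim -C\, r^{(\lambda-d+2)/2} e^{-r^2/2}$, so that $F(r)$, $F'(r)$, and the additional factors of $r = e^t$ all vanish as $r \to \infty$. Hence $\Psi_C, \Psi_C' \to 0$ as $t \to +\infty$.

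Finally, for the global extension to all $t \in \mathbb{R}$ I would apply Lemma~\ref{lemma-2} in the form already noted in Section~\ref{sec-proof-2}, namely that its proof extends verbatim to $f(r) = r^{-1} F(r)$: the Lyapunov function (\ref{Lyapunov-again}) controls $(f, f')$ and forbids finite-$r$ blow-up as $r$ decreases from $r_0$ towards $0$, so $f$, and hence $F = rf$, is defined on all of $(0,\infty)$, whence $\Psi_C(t) = F(e^t)$ extends to every $t \in \mathbb{R}$. The only delicate point I expect is the bookkeeping of exponential versus algebraic prefactors through the successive maps $f \mapsto F = rf$ and $r = e^t$; the existence, uniqueness, and global-continuation content is inherited wholesale from Lemmas~\ref{lemma-0} and~\ref{lemma-2}, so no new fixed-point argument is needed.
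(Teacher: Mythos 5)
Your proposal is correct and follows essentially the same route as the paper's own proof: both reduce the statement to Lemma~\ref{lemma-0} via the substitution $f(r) = r^{-1}F(r)$ and the change of variables $\Psi_C(t) = e^t f(e^t)$, read off (\ref{asymptotics-infinity-psi}) from (\ref{asymptotics-infinity}), and obtain the global extension from Lemma~\ref{lemma-2}. The only difference is cosmetic: you spell out the algebraic verification that $f = r^{-1}F$ transforms the equation in (\ref{statGPsingular}) into (\ref{eq}) (which, note, means Lemma~\ref{lemma-2} applies to $f$ directly, with no need for the ``verbatim extension'' remark), whereas the paper treats this correspondence as already established.
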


\begin{proof}
By the chain rule in (\ref{transformaton-singular}), 
if $\Psi(t),\Psi'(t) \to 0$ as $t \to +\infty$, then 
$F(r), F'(r) \to 0$ as $r \to \infty$. Since $f(r) = r^{-1} F(r)$, 
this decay implies $f(r),f'(r) \to 0$ as $r \to \infty$. The precise asymptotic correspondence (\ref{asymptotics-infinity-psi})
is obtained from (\ref{asymptotics-infinity}) and $\Psi_C(t) = e^t f(e^t)$.
Global continuation of $\Psi_C$ on $\mathbb{R}$ follows by Lemma \ref{lemma-2} from the global continuation of the solution $f \in C^2(r_0,\infty)$ for some $r_0 \in (0,\infty)$ to $f \in C^2(0,\infty)$.
\end{proof}

Finally, we discuss linearization of the truncated equation (\ref{eq-psi-truncated}) at the equilibrium point $(\sqrt{d-3},0)$. 
The characteristic equation
\begin{equation}
\label{char-eq-roots}
\kappa^2 +  (d-4) \kappa + 2(d-3) = 0
\end{equation}
admits the following two roots
\begin{equation}
\label{linearization-nonzero}
\kappa_{\pm} := -\frac{1}{2}(d-4) \pm \frac{1}{2}\sqrt{d^2 - 16 d + 40}
\end{equation}
If $5 \leq d \leq 12$, the roots are complex-conjugate and can be written as 
\begin{equation}
\label{complex-roots}
\kappa_{\pm} = - \beta \pm i \alpha,
\end{equation}
where real and positive $\alpha$ and $\beta$ are given by (\ref{alpha-beta}). If $d \geq 13$, then the roots are real and negative with ordering 
\begin{equation}
\label{real-roots}
\kappa_- < \kappa_+ < 0. 
\end{equation}
The difference between 
the two cases can be observed when the solution $\psi$ of Theorem \ref{theorem-1} is transformed to the variable $\Psi$, in which case 
it becomes the intersection of the second solution $\Psi_b$ defined as $t \to -\infty$ and the third solution $\Psi_C$ for some $C = C(b) > 0$ defined as $t \to +\infty$. Both solutions satisfy the differential equation (\ref{eq-psi-singular}) for the particular value of $\lambda = \lambda(b)$.

Figure \ref{fig:psid5b6} shows both components $\psi(t)$ and $\Psi(t)$ for the solution of Theorem \ref{theorem-1} for $d = 5$ (top) and $d = 13$ (bottom) that corresponds to $b = 14000$.
Compared to the component $\psi(t)$ which is monotonically decreasing on $\mathbb{R}$, the component $\Psi(t)$ decays to zero at both infinities. 
In addition, $\Psi(t)$ develops oscillations at the intermediate range of $t$ for $d = 5$ and no oscillations for $d = 13$.

\begin{figure}[htp!]
	\centering
	\includegraphics[width=0.45\textwidth]{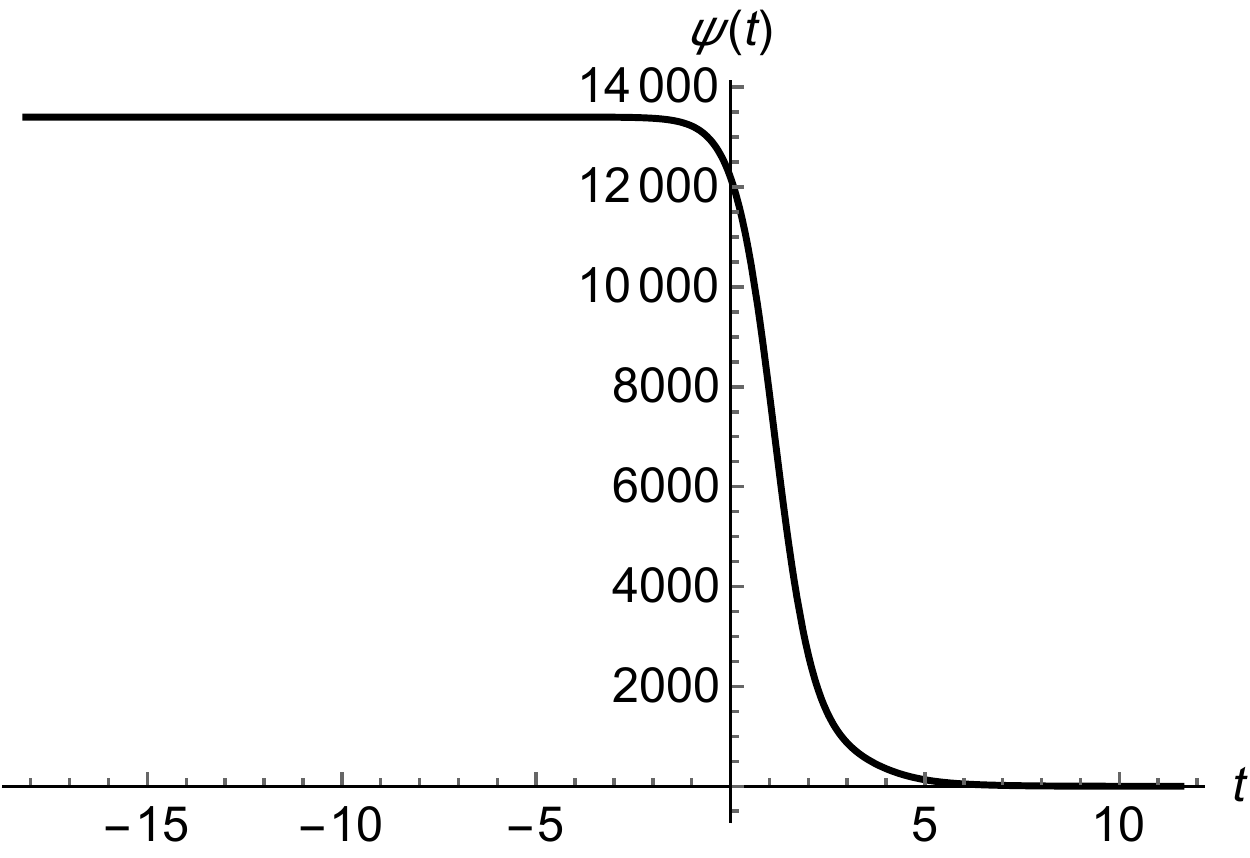}\qquad
	\includegraphics[width=0.45\textwidth]{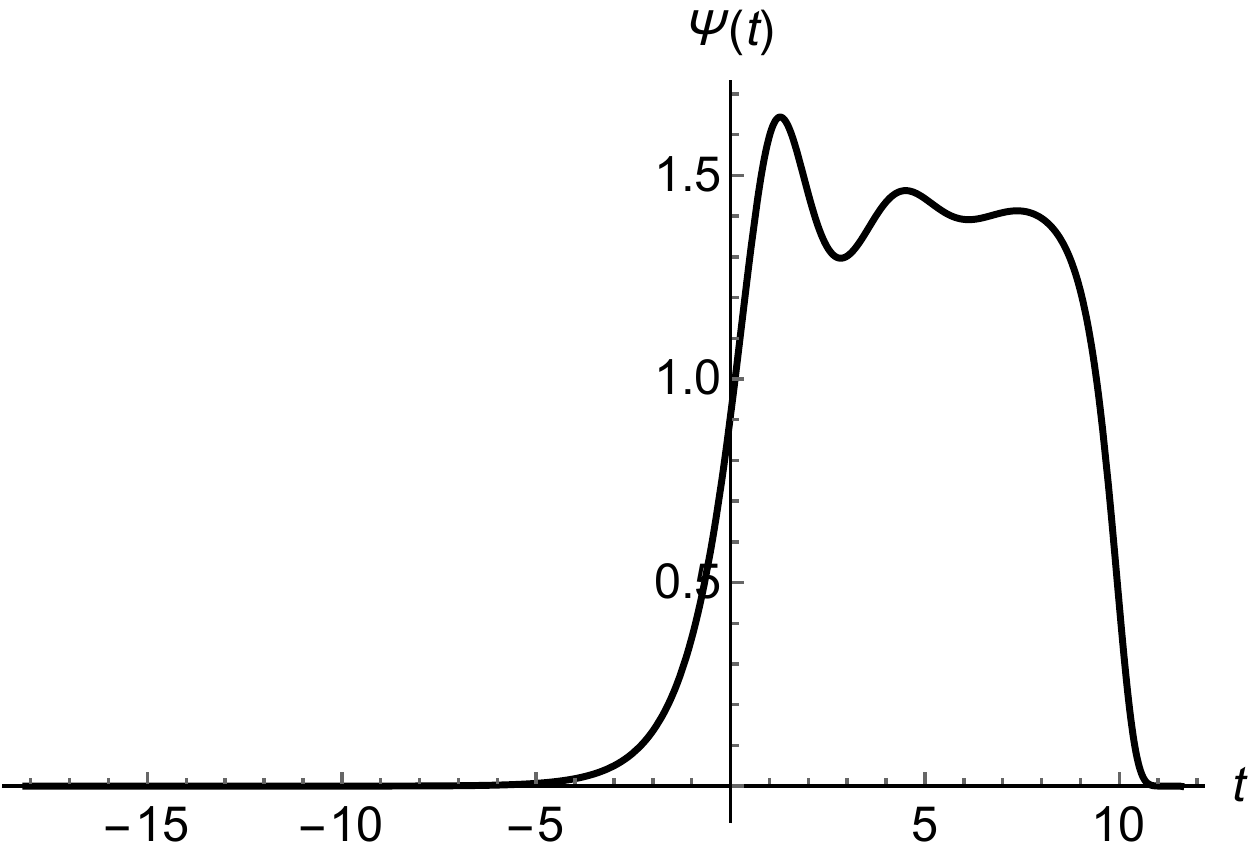}\\
	\includegraphics[width=0.45\textwidth]{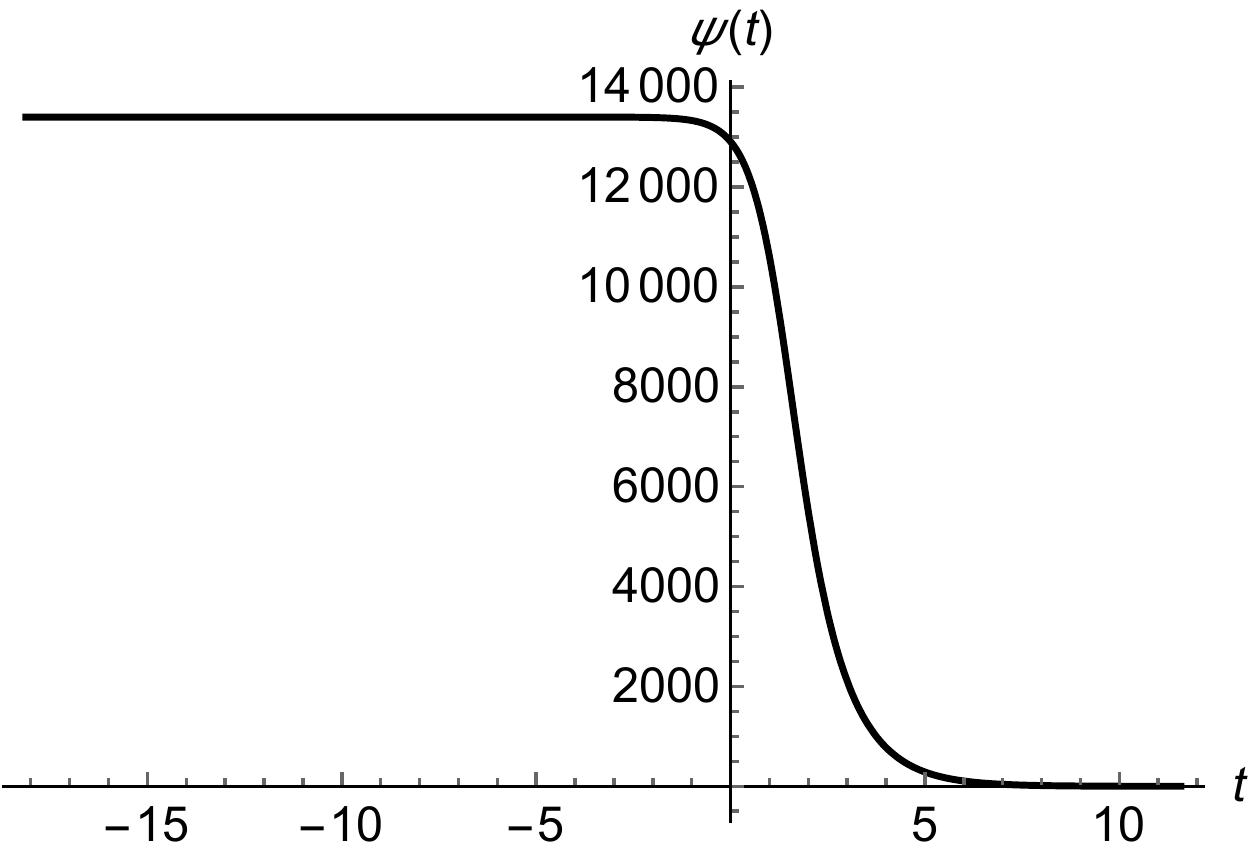}\qquad
	\includegraphics[width=0.45\textwidth]{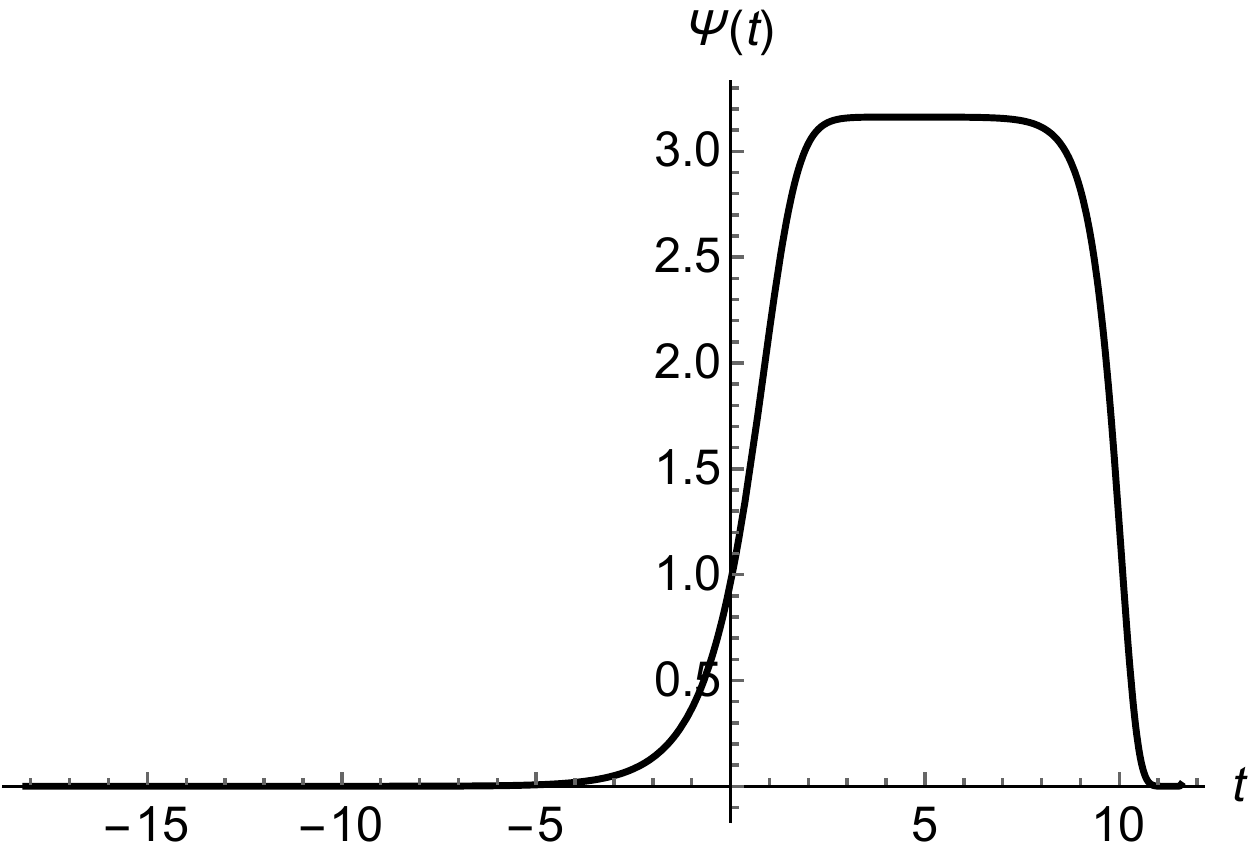}
	\caption{Components $\psi$ (left) 
		and $\Psi$ (right) for the solution of Theorem \ref{theorem-1} plotted versus $t$ for $d=5$ (top) and $d=13$ (bottom) with $b = 14000$.}
	\label{fig:psid5b6}
\end{figure}
 
Because of the difference between the oscillatory and monotone behavior of the solutions of Theorem \ref{theorem-1} in variable $\Psi$, the proof of Theorem \ref{theorem-2} 
is developed separately for $5 \leq d \leq 12$ and $d \geq 13$.

\subsection{Oscillatory behavior for $5 \leq d \leq 12$}

By Lemma \ref{lemma-Psi-inf}, there exists the unique solution $\Theta$ to the truncated equation (\ref{eq-psi-truncated})
with the asymptotic behavior (\ref{asym-Theta-solution}). The following lemma described the oscillatory behavior of $\Theta(t)$ as $t \to +\infty$.

\begin{lemma}
\label{lemma-spiral-Theta}
Fix $5 \leq d \leq 12$. There exist $t_0 > 0$ (sufficiently large), $A_0 > 0$, $\delta_0 \in [0,2\pi)$, and $C_0 > 0$
such that the unique solution $\Theta$ of Lemma \ref{lemma-Psi-inf} satisfies the following oscillatory behavior:
\begin{equation}
\label{behavior-spiral}
\sup_{t \in [t_0,\infty)} |\Theta(t) - \sqrt{d-3} - A_0 e^{-\beta t} \sin(\alpha t + \delta_0) | \leq C_0 e^{-2\beta t_0}.
\end{equation}
where $\alpha$ and $\beta$ are given by (\ref{alpha-beta}).
\end{lemma}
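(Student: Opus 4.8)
The plan is to linearize the truncated equation \eqref{eq-psi-truncated} at the focus $(\sqrt{d-3},0)$ and to show that the heteroclinic orbit $\Theta$ of Lemma \ref{lemma-Psi-inf} approaches this equilibrium precisely along the linear spiral, with a nonlinear correction of order $e^{-2\beta t}$. First I would set $W(t) := \Theta(t) - \sqrt{d-3}$, for which \eqref{eq-psi-truncated} becomes
\begin{equation}
\label{eq:plan-W}
W''(t) + (d-4) W'(t) + 2(d-3) W(t) = N(W(t)), \qquad N(W) := -3\sqrt{d-3}\, W^2 - W^3,
\end{equation}
so that $N(W) = \mathcal{O}(W^2)$ and the linear part carries exactly the characteristic equation \eqref{char-eq-roots} with roots $\kappa_{\pm} = -\beta \pm i\alpha$. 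By Lemma \ref{lemma-Psi-inf} we already know $W(t), W'(t) \to 0$ as $t \to +\infty$, so it remains to extract the precise rate and leading oscillation.

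The first step is to upgrade this convergence to exponential decay at the focus rate $\beta$. Since the two eigenvalues share the real part $-\beta$, there is no spectral gap separating a faster stable direction, and the rate must be read off from the focus structure rather than from a stable/unstable splitting. Writing \eqref{eq:plan-W} as a first-order system with companion matrix $M$, the shifted matrix $M + \beta I$ has purely imaginary spectrum $\pm i\alpha$, hence admits a positive-definite quadratic invariant; equivalently there is a positive-definite form $P$ with $\frac{d}{dt} P = -2\beta P$ along the linear flow. Because $|N(W)| \le C|W|^2$ and $|(W,W')|$ is small for large $t$, the same form gives $\frac{d}{dt} P(W,W') \le -(2\beta - \varepsilon) P(W,W')$ for any small $\varepsilon > 0$ once $t$ is large, whence $|W(t)| + |W'(t)| \le C e^{-\gamma t}$ for some $\gamma \in (\beta/2,\beta)$.

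The second step diagonalizes the linear part over $\mathbb{C}$. Introducing $\zeta := W' - \kappa_- W$ and using $\kappa_+ + \kappa_- = -(d-4)$, $\kappa_+ \kappa_- = 2(d-3)$, one finds the scalar equation $\zeta' - \kappa_+ \zeta = N(W)$. With the a priori bound $|N(W(t))| \le C e^{-2\gamma t}$ and $\gamma > \beta/2$, the forcing $e^{-\kappa_+ s} N(W(s))$ is integrable on $[t_0,\infty)$, so variation of parameters together with the splitting $\int_{t_0}^t = \int_{t_0}^\infty - \int_t^\infty$ yields
\begin{equation}
\label{eq:plan-zeta}
\zeta(t) = \zeta_\infty\, e^{\kappa_+ t} + \mathcal{O}(e^{-2\gamma t}), \qquad \zeta_\infty := \zeta(t_0) e^{-\kappa_+ t_0} + \int_{t_0}^\infty e^{-\kappa_+ s} N(W(s))\, ds.
\end{equation}
Integrating $W' - \kappa_- W = \zeta$ back and collecting all contributions decaying at the rate $e^{-\beta t}$ into one term gives, at first pass, $W(t) = A_0 e^{-\beta t}\sin(\alpha t + \delta_0) + \mathcal{O}(e^{-\beta t})$, which in particular fixes the decay rate at exactly $\beta$. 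Feeding this sharpened bound (now with $\gamma = \beta$) back into \eqref{eq:plan-zeta} makes $e^{-\kappa_+ s} N(W(s)) = \mathcal{O}(e^{-\beta s})$, so the tail integral contributes $\mathcal{O}(e^{-2\beta t})$ and the remainder improves to $\mathcal{O}(e^{-2\beta t})$; this is a two-pass bootstrap.

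Finally, I would read off the constants: $A_0$ and $\delta_0$ are fixed by $\zeta_\infty$ through the elementary identity rewriting any real combination of $e^{\kappa_\pm t}$ as $A_0 e^{-\beta t}\sin(\alpha t + \delta_0)$, and $A_0 > 0$ because $(\sqrt{d-3},0)$ is a genuine focus, so every nonzero orbit converging to it decays at the exact rate $e^{-\beta t}$ with nontrivial oscillation and $\zeta_\infty \ne 0$. Returning to $\Theta = \sqrt{d-3} + W$ gives the pointwise expansion with remainder $\mathcal{O}(e^{-2\beta t})$, and since $\sup_{t \ge t_0} e^{-2\beta t} = e^{-2\beta t_0}$ this is precisely the bound \eqref{behavior-spiral}. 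The main obstacle is the first step: with both eigenvalues sharing the real part $-\beta$, the exponential rate cannot come from a hyperbolic splitting and must be produced from the Lyapunov form of the focus, after which both the bootstrap and the nondegeneracy $A_0 > 0$ rely on the spiral (non-node) structure.
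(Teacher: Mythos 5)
Your argument is correct, but it takes a genuinely different route from the paper. The paper disposes of this lemma softly: it observes that $(\sqrt{d-3},0)$ is a stable spiral of \eqref{eq-psi-truncated}, removes the quadratic terms by a near-identity normal-form transformation (no resonances occur since both roots of \eqref{char-eq-roots} have real part $-\beta$), and invokes a $C^2$ Hartman--Grobman-type linearization so that \eqref{behavior-spiral} is read off from the explicit solution of the linearized flow through the diffeomorphism. You instead do everything by hand: the Lyapunov quadratic form $P$ with $\frac{d}{dt}P = -2\beta P$ along the linear flow correctly seeds the exponential decay at any rate $\gamma < \beta$ (this is the right tool, since with coincident real parts there is no hyperbolic splitting to exploit), the complex diagonalization $\zeta = \Theta' - \kappa_-(\Theta - \sqrt{d-3})$ reduces matters to a scalar first-order equation, and the two-pass bootstrap (first upgrading $\gamma$ to $\beta$, then feeding $|N| = \mathcal{O}(e^{-2\beta t})$ back in) legitimately produces the remainder $\mathcal{O}(e^{-2\beta t})$, which matches the right-hand side of \eqref{behavior-spiral} after taking the supremum over $[t_0,\infty)$. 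What your route buys is self-containedness and quantitative control: the constants $A_0$, $\delta_0$ come out explicitly from $\zeta_\infty$, and you never need a smooth-linearization theorem (note that the classical Hartman--Grobman theorem gives only a $C^0$ conjugacy, so the paper's quantitative remainder really rests on the non-resonant smooth normal form, a point your approach sidesteps entirely). The one step you should write out rather than gesture at is $A_0 > 0$, i.e.\ $\zeta_\infty \neq 0$: the clean way, consistent with your own setup, is to observe that the same form $P$ satisfies the two-sided inequality $-(2\beta + \varepsilon)P \leq \frac{d}{dt}P \leq -(2\beta - \varepsilon)P$ for large $t$, so a not-identically-zero orbit decays no faster than $e^{-(\beta + \varepsilon)t}$; if $\zeta_\infty = 0$, your integral representation gives $|\Theta - \sqrt{d-3}| = \mathcal{O}(e^{-2\beta t})$ and, iterating the bootstrap, super-exponential decay, contradicting the lower bound (the orbit is nonzero for all $t$ by ODE uniqueness, since $\Theta$ is a nontrivial heteroclinic by Lemma \ref{lemma-Psi-inf}).
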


\begin{proof}
The equilibrium point $(\sqrt{d-3},0)$ is a stable spiral point of the truncated equation (\ref{eq-psi-truncated}) for $5 \leq d \leq 12$ due to the roots (\ref{complex-roots}) of the characteristic equation (\ref{char-eq-roots}). Quadratic terms beyond the linearization at $(\sqrt{d-3},0)$ can be removed by a near-identity transformation if $\kappa_{\pm} = - \beta \pm i \alpha$. By the Hartman--Grobman theorem, there exists a $C^2$-diffeomorphism, under which the dynamics of the truncated equation (\ref{eq-psi-truncated})
near $(\sqrt{d-3},0)$ is conjugate to the dynamics of the linearized equation.
The asymptotic behavior (\ref{behavior-spiral}) follows from the solution of the linearized equation
and the existence of the $C^2$-diffeomorphism.
\end{proof}

\begin{rem}
	It follows from the dynamical system theory that the bound (\ref{behavior-spiral}) can be extended to $\Theta'(t)$  as follows: 
	\begin{equation}
	\label{behavior-spiral-derivative}
	\sup_{t \in [t_0,\infty)} |\Theta'(t) - \alpha A_0 e^{-\beta t} \cos(\alpha t + \delta_0) + \beta A_0 e^{-\beta t} \sin(\alpha t + \delta_0)  | \leq C_0 e^{-2\beta t_0}.
	\end{equation}
For simplicity of writing, we will not write henceforth the explicit bounds on the derivatives.
\end{rem}

By extending Lemma \ref{lemma-Psi} and using Lemma \ref{lemma-spiral-Theta}, we prove the oscillatory behavior of the solution $\Psi_b(t)$ 
at the intermediate values of $t$ as $b \to \infty$.

\begin{lemma}
\label{lemma-spiral-Psi}
Fix $5 \leq d \leq 12$ and $\lambda \in \mathbb{R}$. For fixed $T > 0$ and $a \in \left(0,\frac{4}{d}\right)$,
there exist $b_{T,a} > 0$ and $C_{T,a} > 0$ such that the unique solution $\Psi_b$ to the second-order equation (\ref{eq-psi-singular})
with the asymptotic behavior (\ref{asym-b-solution}) satisfies
\begin{equation}
\label{bound-solution-b-extended}
    \sup_{t \in [0,T + a \log b ]} |\Psi_b(t-\log b) - \Theta(t)| \leq C_{T,a} b^{-2(1-a)}, \quad b \geq b_{T,a}.
\end{equation}
Consequently, it follows that 
\begin{eqnarray}
\nonumber
&& |\Psi_b(T + (a-1) \log b)   - \sqrt{d-3} - A_0 b^{-a \beta} e^{-\beta T} \sin(\alpha T + \delta_0 + a \alpha \log b) | \\
&& \qquad \qquad \leq C_{T,a} \max\{b^{-2a \beta}, b^{-2(1-a)}\}, \quad 
b \geq b_{T,a},
\label{behavior-Psi-spiral}
\end{eqnarray}
where $(\alpha,\beta)$ are given by (\ref{alpha-beta}), $(A_0,\delta_0)$ are defined in (\ref{behavior-spiral}), and $(b_{T,a},C_{T,a})$ are adjusted appropriately.
\end{lemma}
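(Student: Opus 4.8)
The plan is to extend the fixed-point analysis of Lemma \ref{lemma-Psi} from the half-line $(-\infty,0]$ to the growing interval $[0,T+a\log b]$, accepting a deterioration of the error from $O(b^{-2})$ to $O(b^{-2(1-a)})$. Writing $\Upsilon(t):=\Psi_b(t-\log b)-\Theta(t)$, which solves the same persistence equation $L\Upsilon=f_b(\Theta+\Upsilon)+N(\Theta,\Upsilon)$ as in Lemma \ref{lemma-Psi} and which, by that lemma, satisfies $|\Upsilon(0)|+|\Upsilon'(0)|\leq C_0 b^{-2}$, I would represent $\Upsilon$ on $[0,t]$ by variation of parameters against the homogeneous solutions $\Theta'$ and $\Xi$ (with $W(\Theta',\Xi)=e^{(4-d)t}$):
\begin{equation*}
\Upsilon(t)=c_1\Theta'(t)+c_2\Xi(t)-\int_0^t e^{(d-4)t'}\bigl[\Xi(t')\Theta'(t)-\Xi(t)\Theta'(t')\bigr]\bigl[f_b(t')(\Theta+\Upsilon)+N(\Theta,\Upsilon)\bigr]\,dt',
\end{equation*}
where $c_1,c_2$ are fixed by the data at $t=0$; since $W(0)=1$ and $|\Upsilon(0)|+|\Upsilon'(0)|=O(b^{-2})$, this forces $|c_1|+|c_2|=O(b^{-2})$.

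The decisive new ingredient is a forward kernel estimate. For $5\leq d\leq 12$ the roots $\kappa_\pm=-\beta\pm i\alpha$ are complex, so both $\Theta'(t)$ and $\Xi(t)$ decay like $e^{-\beta t}$ as $t\to+\infty$ (for $\Theta'$ this is (\ref{behavior-spiral-derivative}); for $\Xi$ it follows since both characteristic roots share real part $-\beta$). Combined with the weight $e^{(d-4)t'}=e^{2\beta t'}$, this gives $|e^{(d-4)t'}[\Xi(t')\Theta'(t)-\Xi(t)\Theta'(t')]|\leq C e^{-\beta(t-t')}$ for $0\leq t'\leq t$. The homogeneous part is then $O(b^{-2})$ uniformly, while the leading forcing term is controlled by $\int_0^t e^{-\beta(t-t')}b^{-2}e^{2t'}\,dt'\leq C b^{-2}e^{2t}/(\beta+2)$, which at the right endpoint is $Ce^{2T}b^{-2(1-a)}$ (the $b^{-4}e^{4t'}$ term being of higher order). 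Here $a<\tfrac4d\leq\tfrac45<1$ is essential: $1-a>0$ makes the convolution dominated by its upper endpoint rather than divergent.

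Setting $M(b):=\sup_{[0,T+a\log b]}|\Upsilon|$ and inserting these bounds, with $|N(\Theta,\Upsilon)|\leq C(|\Upsilon|^2+|\Upsilon|^3)$ and $|\Theta|\leq\sqrt{d-3}$, I would arrive at a nonlinear inequality
\begin{equation*}
M(b)\leq C b^{-2}+C_T b^{-2(1-a)}+C_T b^{-2(1-a)}M(b)+C\,M(b)^2 .
\end{equation*}
For $b$ large the coefficient $C_T b^{-2(1-a)}$ is small, so a continuous-induction (or contraction-mapping) argument closes the estimate and yields $M(b)\leq C_{T,a}b^{-2(1-a)}$, i.e. (\ref{bound-solution-b-extended}). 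To deduce (\ref{behavior-Psi-spiral}) I would evaluate at $t=T+a\log b$: by (\ref{bound-solution-b-extended}), $\Psi_b(T+(a-1)\log b)=\Theta(T+a\log b)+O(b^{-2(1-a)})$, and inserting the spiral asymptotics (\ref{behavior-spiral}) at the large argument $T+a\log b$, with $e^{-\beta(T+a\log b)}=e^{-\beta T}b^{-a\beta}$, $\alpha(T+a\log b)=\alpha T+a\alpha\log b$, and $e^{-2\beta(T+a\log b)}=e^{-2\beta T}b^{-2a\beta}$, produces the claimed main term $A_0 b^{-a\beta}e^{-\beta T}\sin(\alpha T+\delta_0+a\alpha\log b)$ with total error $O(\max\{b^{-2a\beta},b^{-2(1-a)}\})$.

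I expect the main obstacle to be the forward kernel estimate together with the uniform control of the forcing convolution over the $b$-dependent interval: one must confirm that $\Xi$ genuinely decays at the rate $e^{-\beta t}$ (asymptotic ODE theory applied to $L$, whose coefficients tend to those of (\ref{char-eq-roots}) as $t\to+\infty$) and that the weight $e^{2\beta t'}$ is exactly compensated, so that $|K(t,t')|\leq C e^{-\beta(t-t')}$. Once this is in hand, closing the bootstrap is routine precisely because $a<1$ renders $b^{-2(1-a)}\to 0$.
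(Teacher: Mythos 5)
Your proposal is correct and takes essentially the same approach as the paper: the same decomposition $\Upsilon=\Psi_b(\cdot-\log b)-\Theta$ solved by forward variation of parameters on $[0,T+a\log b]$ against $\Theta'$ and $\Xi$, seeded by the $\mathcal{O}(b^{-2})$ data at $t=0$ from Lemma \ref{lemma-Psi}, closed via the kernel bound $\mathcal{O}(e^{-\beta(t-t')})$ (from the common decay rate $e^{-\beta t}$ of $\Theta'$ and $\Xi$ for the complex roots) and a fixed-point/bootstrap argument, followed by endpoint evaluation at $t=T+a\log b$ using Lemma \ref{lemma-spiral-Theta}. The only cosmetic difference is that the paper bounds $f_b$ uniformly by $(|\lambda|+1)b^{-2(1-a)}e^{4T}$ on the whole interval, whereas you integrate $e^{2t'}$ against the kernel and evaluate at the right endpoint; both yield the same estimate.
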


\begin{proof}
We start with the proof of the bound (\ref{bound-solution-b-extended}). This can be done by
rewriting the integral equation (\ref{persistence-varphi-integral}) in an equivalent
form which is useful for $t \in [0,T + a \log b]$. To do so, we solve
the second-order equation (\ref{persistence-varphi})
with the variation of parameters from $t = 0$ towards $t = T + a \log b > 0$.
This gives us the integral equation in the form:
\begin{eqnarray}
\nonumber
&& \Upsilon(t) = \Upsilon(0) \left[ \Xi'(0) \Theta'(t) - \Xi(t) \Theta''(0) \right]
+ \Upsilon'(0) \left[ \Xi(t) \Theta'(0) - \Xi(0) \Theta'(t) \right]  \\
&& 
- \int_{0}^t e^{(d-4) t'} \left[ \Xi(t') \Theta'(t) - \Xi(t) \Theta'(t') \right]
\left[ f_b(t') (\Theta(t') + \Upsilon(t')) + N(\Theta(t'),\Upsilon(t')) \right] dt'. 
\label{volterra-again}
\end{eqnarray}
By the bound (\ref{bound-solution-b}), there exist $b_0 > 0$ and $C_0 > 0$ such that
$$
|\Upsilon(0)| + |\Upsilon'(0)| \leq C_0 b^{-2}, \quad b \geq b_0.
$$
It follows from the definition of $f_b$ that 
$$
\sup_{t \in [0,T + a \log b]} |f_b(t)| \leq (|\lambda| + 1) b^{-2(1-a)} e^{4 T}, \quad b \geq 1,
$$
where $T > 0$ is fixed independently of $b$. Since $(\sqrt{d-3},0)$ is a stable spiral point
of the truncated equation (\ref{eq-psi-truncated}) for $5 \leq d \leq 12$ with the roots (\ref{complex-roots}),
both $\Theta'(t)$ and $\Xi(t)$ decays to $0$ exponentially fast 
as $t \to +\infty$ such that
$$
|\Theta'(t)| + |\Xi(t)| \leq C_0 e^{-\beta t}, \quad t \geq 0,
$$
for some $b$-independent $C_0 > 0$. The kernel of the integral equation
(\ref{volterra-again}) behaves like $e^{-\beta (t-t')}$ and decays exponentially as $t \to +\infty$.
By the same fixed-point iterations as in the proof of Lemma \ref{lemma-1}, it follows that there exists a sufficiently large $b_{T,a}$ such that for every $b \geq b_{T,a}$
there exists the unique solution $\Upsilon$ to the integral equation (\ref{volterra-again}) in a closed 
subset of Banach space $L^{\infty}(0,T + a \log b)$ satisfying the bound
\begin{equation}
\label{bound-solution-b-extended-too}
\sup_{t \in [0,T + a \log b]} |\Upsilon(t)| \leq C_{T,b} b^{-2(1-a)},
\end{equation}
where $C_{T,b} > 0$ is a suitable chosen constant and $a \in (0,1)$.
Bound (\ref{bound-solution-b-extended-too}) yields (\ref{bound-solution-b-extended}).

Bound (\ref{behavior-Psi-spiral}) follows from (\ref{behavior-spiral}) and (\ref{bound-solution-b-extended})
since $a \log b \to +\infty$ as $b \to \infty$ if $a > 0$ and $b^{-a\beta} \gg b^{-2(1-a)}$ if
$a < \frac{4}{d} < 1$.
\end{proof}

\begin{rem}
The bound (\ref{bound-solution-b}) was used in \cite{Budd1989} without improvement
given by the bound (\ref{bound-solution-b-extended}). The bound (\ref{bound-solution-b}) is not
sufficient for our purpose because if $a = 0$ in Lemma \ref{lemma-spiral-Psi}
then we are not allowed to use the asymptotic behavior (\ref{behavior-spiral})
in order to derive the bound (\ref{behavior-Psi-spiral}).
\end{rem}

\begin{rem}
The constraint $a \in \left(0,\frac{4}{d} \right) \subset (0,1)$ needed to control the small approximation error in the bound (\ref{bound-solution-b-extended})
implies that the oscillatory behavior (\ref{behavior-Psi-spiral}) is observed in $\Psi_b(t)$ for sufficiently large negative $t$, 
yet not in the limit $t \to -\infty$. Indeed, $\Psi_b(t)$ satisfies the asymptotic behavior (\ref{asym-b-solution}) and decays to zero as $t \to -\infty$.
\end{rem}

Let us now turn to the one-parameter solution $\Psi_C(t)$ 
defined by the asymptotic behavior (\ref{asymptotics-infinity-psi}) as $t \to +\infty$. This solution to the differential equation (\ref{eq-psi-singular}) 
is extended globally for every $t \in \mathbb{R}$ by Lemma \ref{lemma-Psi-asymptotics}. For $\lambda = \lambda_{\infty}$, 
there exists a uniquely defined $C = C_{\infty}$ such that $\Psi_{C = C_{\infty}}$ coincides with the unique solution $\Psi_{\infty} := \Psi_{\lambda = \lambda_{\infty}}$ which satisfies the asymptotic behavior (\ref{lambda-solution-behavior}) as $t \to -\infty$. Thus, $\Psi_{\infty} = \Psi_{C = C_{\infty}} = \Psi_{\lambda = \lambda_{\infty}}$ is a bounded function on $\mathbb{R}$. However, the functions $\Psi_{C \neq C_{\infty}}$ and $\Psi_{\lambda \neq \lambda_{\infty}}$ are not globally bounded on $\mathbb{R}$ due to divergence as $t \to -\infty$ and $t \to +\infty$ respectively.

The unique solution $\Psi_C$ is differentiable in $(\lambda,C)$ due to the smooth asymptotic
behavior (\ref{asymptotics-infinity-psi}) and the smoothness of the differential equation (\ref{eq-psi-singular}). Therefore, we can define
\begin{eqnarray}
\label{derivative-C-solution}
\Psi_1 := \partial_{\lambda} \Psi_C |_{(\lambda,C) = (\lambda_{\infty},C_{\infty})}, \quad
\Psi_2 := \partial_C \Psi_C |_{(\lambda,C) = (\lambda_{\infty},C_{\infty})}.
\end{eqnarray}
Functions $\Psi_{1,2}$ satisfy the linear second-order equations written in the form
\begin{equation}
\label{eq-psi-singular-derivative-1}
\mathcal{L}_0 \Psi_1 = f + g \Psi_1, \qquad 
\mathcal{L}_0 \Psi_2 = g \Psi_2,
\end{equation}
where 
\begin{eqnarray*}
(\mathcal{L}_0 \Psi)(t) & := & 
\Psi''(t) + (d-4) \Psi'(t) + 2 (d-3) \Psi(t), \\ 
f(t) & := & - e^{2t} \Psi_{\infty}(t), \\
g(t) & := & 3 (d-3 - \Psi_{\infty}(t)^2) - \lambda_{\infty} e^{2t} + e^{4t}.
\end{eqnarray*}
We add the following technical assumption.
\begin{assumption}
	\label{assumption-1}
Uniquely defined functions $\Psi_{\infty}$ and $\Psi_2$ are assumed to satisfy the following non-degeneracy assumption:
\begin{equation}
\label{degeneracy-1}
\int_{-\infty}^{\infty} e^{(d-2) t} \Psi_{\infty}(t) \Psi_2(t) dt \neq 0.
\end{equation}
\end{assumption}

\begin{rem}
The non-degeneracy assumption (\ref{degeneracy-1}) can be equivalently written as 
\begin{eqnarray*}
\frac{\partial}{\partial C} \int_{-\infty}^{\infty} e^{(d-2)t} \Psi_C(t)^2 dt \biggr|_{\lambda = \lambda_{\infty}, C = C_{\infty}} \neq 0, 
\end{eqnarray*}
or 
\begin{eqnarray*}
\frac{\partial}{\partial C} \int_{0}^{\infty} r^{d-3} F_C(r)^2 dr \biggr|_{\lambda = \lambda_{\infty}, C = C_{\infty}} \neq 0, 
\end{eqnarray*}
or 
\begin{eqnarray*}
\frac{\partial}{\partial C} \int_{0}^{\infty} r^{d-1} f_C(r)^2 dr \biggr|_{\lambda = \lambda_{\infty}, C = C_{\infty}} \neq 0,
\end{eqnarray*}
where $f_C(r) = r^{-1} F_c(r) = r^{-1} \Psi_C(\log r)$.
\end{rem}

\begin{rem}
	One can reformulate the constraint (\ref{degeneracy-1}) from a different point of view. Recall the solution $\Psi_{\lambda}$ to the second-order equation (\ref{eq-psi-singular}) satisfying the asymptotic behavior (\ref{lambda-solution-behavior}) as $t \to -\infty$ and extended globally. Derivative $\partial_{\lambda} \Psi_{\lambda}$ satisfies the same differential equation (\ref{eq-psi-singular-derivative-1}) as $\Psi_1$ 
	but compared to $\Psi_1$, $\partial_{\lambda} \Psi_{\lambda}(t)$ generally diverges as $t \to +\infty$. The condition (\ref{degeneracy-1}) ensures that $\partial_{\lambda} \Psi_{\lambda}$ is not spanned by the derivatives of the solution $\Psi_C(t)$ in $\lambda$ and $C$, which decays to zero as $t \to \infty$.
	Hence, the constraint (\ref{degeneracy-1}) is a transversality condition between the two $C^1$ families of solutions 
	to the differential equation (\ref{eq-psi-singular}) given by $\Psi_{\lambda}$ and $\Psi_C$. Note that this transversality condition was not mentioned in the previous works in \cite{Budd_Norbury1987,Budd1989,DF} on a related subject.
\end{rem}

The following lemma determines the behavior of the solutions 
$\Psi_{1,2}$ for large negative $t$ and the solution $\Psi_{C}$ for parameters $(\lambda,C)$ near the point $(\lambda_{\infty},C_{\infty})$.

\begin{lemma}
\label{lemma-spiral-Psi-C}
Fix $5 \leq d \leq 12$. For fixed $T > 0$ and $a \in (0,1)$,
there exist $b_{T,a} > 0$, $C_{T,a} > 0$, $A_{1,2}$, $B_{1,2}$ 
such that $\Psi_{1,2}$ in (\ref{derivative-C-solution}) 
satisfy for every $t \in (-\infty,(a-1) \log b + T]$: 
\begin{eqnarray}
|\Psi_{1,2}(t) - A_{1,2} e^{-\beta t} \sin(\alpha t) -
B_{1,2} e^{-\beta t} \cos(\alpha t) | \leq C_{T,a}  b^{-2(1-a)} e^{-\beta t}, \quad b \geq b_{T,a},
\label{behavior-C-spiral-derivative}
\end{eqnarray}
where $(\alpha,\beta)$ are given by (\ref{alpha-beta}).
Consequently, there exists $\epsilon_0 > 0$ such that 
for every $\epsilon \in (0,\epsilon_0)$ and for every 
$(\lambda,C) \in \mathbb{R}^2$ satisfying 
\begin{equation}
\label{ball-lambda-C}
(\lambda - \lambda_{\infty})^2 + (C - C_{\infty})^2 \leq \epsilon^2 b^{-2\beta (1-a)},
\end{equation}
it is true for every 
$b \geq b_{T,a}$ and every $t \in [(a-1)\log b,(a-1) \log b + T]$ that 
\begin{eqnarray}
\nonumber
&& |\Psi_C(t) - \sqrt{d-3} 
- [A_1 (\lambda - \lambda_{\infty}) + A_2 (C - C_{\infty})] 
e^{-\beta t} \sin(\alpha t) \\
\nonumber
&& \qquad \qquad \qquad \qquad \qquad
- [B_1 (\lambda - \lambda_{\infty}) + B_2 (C - C_{\infty})] 
e^{-\beta t} \cos(\alpha t) | \\
\nonumber
&& \leq C_{T,a} \left( b^{-2(1-a)}
+ (\lambda - \lambda_{\infty}) b^{-(2-\beta)(1-a)} + (C - C_{\infty}) b^{-(2-\beta)(1-a)} \right. \\
&& \left.\qquad \qquad \qquad \qquad \qquad
+ (\lambda - \lambda_{\infty})^2  b^{2\beta (1-a)} + (C - C_{\infty})^2 b^{2\beta (1-a)}  \right),
\label{behavior-C-spiral}
\end{eqnarray}
where $b_{T,a}$ and $C_{T,a}$ are adjusted appropriately. 
If Assumption \ref{assumption-1} is satisfied, then 
\begin{equation}
\label{coefficients-degeneracy}
A_1 B_2 \neq A_2 B_1.
\end{equation} 
\end{lemma}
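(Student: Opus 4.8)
The plan is to establish the asymptotic expansion (\ref{behavior-C-spiral-derivative}) of the derivatives $\Psi_{1,2}$, then the expansion (\ref{behavior-C-spiral}) of $\Psi_C$ by a Taylor expansion in $(\lambda,C)$, and finally the non-degeneracy (\ref{coefficients-degeneracy}), which I will reduce to a Wronskian identity. \emph{For Step 1}, I would analyze the linear equations (\ref{eq-psi-singular-derivative-1}) as $t\to-\infty$. By (\ref{lambda-solution-behavior}) we have $\Psi_\infty(t)=\sqrt{d-3}+\mathcal{O}(e^{2t})$, so both the coefficient $g(t)$ and the forcing $f(t)=-e^{2t}\Psi_\infty(t)$ are $\mathcal{O}(e^{2t})$. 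Hence the equations degenerate to $\mathcal{L}_0\Psi=0$, whose characteristic equation is (\ref{char-eq-roots}) with roots $\kappa_\pm=-\beta\pm i\alpha$, giving fundamental solutions $e^{-\beta t}\sin(\alpha t)$ and $e^{-\beta t}\cos(\alpha t)$. Writing $\Psi_{1,2}(t)=e^{-\beta t}[A_{1,2}\sin(\alpha t)+B_{1,2}\cos(\alpha t)]+R_{1,2}(t)$ and setting up Volterra integral equations for $R_{1,2}$ with integration from $-\infty$ (as in Lemmas \ref{lemma-1} and \ref{lemma-0}), the effective forcing $g\Psi^{\mathrm{lead}}+f+gR$ is $\mathcal{O}(e^{(2-\beta)t})$ and non-resonant since $2-\beta$ is not a characteristic exponent; a contraction argument then yields the sharp remainder $R_{1,2}(t)=\mathcal{O}(e^{(2-\beta)t})$ with $b$-independent constants $A_{1,2},B_{1,2}$. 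Restricting to $t\le(a-1)\log b+T$ and using $e^{2t}\le b^{-2(1-a)}e^{2T}$ turns this into (\ref{behavior-C-spiral-derivative}).

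\emph{For Step 2}, since $\Psi_C$ depends smoothly on $(\lambda,C)$ through the smooth asymptotics (\ref{asymptotics-infinity-psi}), I would Taylor expand about $(\lambda_\infty,C_\infty)$,
\[
\Psi_C(t)=\Psi_\infty(t)+(\lambda-\lambda_\infty)\Psi_1(t)+(C-C_\infty)\Psi_2(t)+\text{(second order)},
\]
and restrict to the moving window $t\in[(a-1)\log b,(a-1)\log b+T]$. There $\Psi_\infty=\sqrt{d-3}+\mathcal{O}(b^{-2(1-a)})$, producing the $b^{-2(1-a)}$ error; inserting (\ref{behavior-C-spiral-derivative}) combines the leading parts into the coefficients $A_1(\lambda-\lambda_\infty)+A_2(C-C_\infty)$ and $B_1(\lambda-\lambda_\infty)+B_2(C-C_\infty)$ for $\sin$ and $\cos$, while the remainders $(\lambda-\lambda_\infty)R_1$ and $(C-C_\infty)R_2$ give the $b^{-(2-\beta)(1-a)}$ errors via $e^{-\beta t}\le b^{\beta(1-a)}$. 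The second-order Taylor remainder, controlled through the ball constraint (\ref{ball-lambda-C}) together with the at-most-$e^{-2\beta t}$ growth of $\partial^2_{\lambda,C}\Psi_C$ as $t\to-\infty$, accounts for the $(\lambda-\lambda_\infty)^2b^{2\beta(1-a)}$ and $(C-C_\infty)^2b^{2\beta(1-a)}$ errors. Collecting these yields (\ref{behavior-C-spiral}).

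\emph{For Step 3}, I would introduce the Wronskian $W:=\Psi_1'\Psi_2-\Psi_1\Psi_2'$. A direct computation from (\ref{eq-psi-singular-derivative-1}), in which the $g\Psi_1\Psi_2$ contributions cancel, gives
\[
\frac{d}{dt}\!\left[e^{(d-4)t}W(t)\right]=e^{(d-4)t}f(t)\Psi_2(t)=-e^{(d-2)t}\Psi_\infty(t)\Psi_2(t).
\]
Integrating over $\mathbb{R}$: as $t\to+\infty$ both $\Psi_{1,2}$ decay super-exponentially by (\ref{asymptotics-infinity-psi}), so the boundary term vanishes; as $t\to-\infty$, Step 1 together with the elementary Wronskian $W[e^{-\beta t}\sin(\alpha t),e^{-\beta t}\cos(\alpha t)]=\alpha e^{-2\beta t}$ and the identity $2\beta=d-4$ give $e^{(d-4)t}W\to\alpha(A_1B_2-A_2B_1)$, all cross terms involving $R_{1,2}$ vanishing in the limit. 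This produces
\[
\alpha(A_1B_2-A_2B_1)=\int_{-\infty}^{\infty}e^{(d-2)t}\Psi_\infty(t)\Psi_2(t)\,dt,
\]
whose right-hand side is nonzero by Assumption \ref{assumption-1}; hence $A_1B_2\neq A_2B_1$.

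The cleanest and most conceptual step is the Wronskian identity of Step 3, which transparently converts Assumption \ref{assumption-1} into (\ref{coefficients-degeneracy}). The genuine technical difficulty lies in Steps 1 and 2: establishing the \emph{sharp} remainder rate $R_{1,2}=\mathcal{O}(e^{(2-\beta)t})$ uniformly, so that the $R_{1,2}$ cross terms in the Wronskian limit truly vanish and the $b$-dependent bounds close, and securing uniform $C^2$ control of $\Psi_C$ over the parameter ball with second derivatives growing no faster than $e^{-2\beta t}$ as $t\to-\infty$, which underlies the second-order error terms in (\ref{behavior-C-spiral}).
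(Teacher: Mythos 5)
Your proposal is correct and follows essentially the same architecture as the paper's proof: variation of parameters around the oscillatory homogeneous solutions $e^{-\beta t}\sin(\alpha t)$, $e^{-\beta t}\cos(\alpha t)$ of $\mathcal{L}_0$, Volterra fixed-point estimates in the weighted variable $e^{\beta t}\Psi$, and a Wronskian identity converting Assumption \ref{assumption-1} into (\ref{coefficients-degeneracy}). Your Step 3 is a genuinely cleaner variant, though. The paper argues by contradiction: assuming $A_1B_2=A_2B_1$, it forms $\Delta=\Psi_1-\mu\Psi_2$ so that the oscillatory leading part cancels and $e^{\beta t}\Delta(t)\to 0$ as $t\to-\infty$, and only then integrates the Wronskian relation over $\mathbb{R}$. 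You instead evaluate $\lim_{t\to-\infty}e^{(d-4)t}W(\Psi_1,\Psi_2)=\alpha(A_1B_2-A_2B_1)$ directly, obtaining the identity $\alpha(A_1B_2-A_2B_1)=\int_{-\infty}^{\infty}e^{(d-2)t}\Psi_\infty\Psi_2\,dt$, which is strictly more informative (it computes the determinant rather than merely showing it nonzero). Be aware that this direct evaluation is legitimate only because your Step 1 establishes the sharp pointwise rates $R_{1,2},R_{1,2}'=\mathcal{O}(e^{(2-\beta)t})$ as $t\to-\infty$: the window bound (\ref{behavior-C-spiral-derivative}) by itself only gives $R_{1,2}=\mathcal{O}(e^{-\beta t})$, under which the cross terms $e^{2\beta t}\cdot(\mathrm{lead})\cdot R_{1,2}$ are merely $\mathcal{O}(1)$ and the boundary limit is not controlled --- this is exactly why the paper detours through $\Delta$. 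You flag the sharp rate as essential, and it does follow from your convolution estimate, since the effective forcing $f+g\Psi^{\mathrm{lead}}$ is $\mathcal{O}(e^{(2-\beta)t})$ and $\int_{-\infty}^{t}e^{-\beta(t-t')}e^{(2-\beta)t'}dt'=\tfrac{1}{2}e^{(2-\beta)t}$.

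Your Step 2 is the one place where the proposal is thinner than the paper. You assert uniform $C^2$ control of $\Psi_C$ over the ball (\ref{ball-lambda-C}) with $\partial^2_{\lambda,C}\Psi_C=\mathcal{O}(e^{-2\beta t})$, flagging it as the crux but not proving it. The paper supplies precisely this by constructing $\Sigma:=\Psi_C-\Psi_\infty$ as a fixed point of a backward Volterra equation in the variable $e^{\beta t}\Sigma$: because the nonlinear terms $3\Psi_\infty\Sigma^2+\Sigma^3$ grow as $t\to-\infty$, the contraction cannot be closed on $(-\infty,0]$ but only on the window $[(a-1)\log b,0]$ in a ball of $b$-dependent radius $\mathcal{O}(\epsilon\, b^{-\beta(1-a)})$ --- which is the real reason the constraint (\ref{ball-lambda-C}) appears at all --- and the quadratic error bound (the $(\lambda-\lambda_\infty)^2 b^{2\beta(1-a)}$ and $(C-C_\infty)^2 b^{2\beta(1-a)}$ terms) then follows from differentiability of the fixed point in the parameters. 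Your asserted second-derivative bound is true and provable by differentiating this fixed-point equation twice, but some such construction must be carried out: as written, ``Taylor expand and invoke the ball constraint'' presupposes that $\Psi_C$ exists and stays close to $\Psi_\infty$ down to $t=(a-1)\log b$ uniformly over the ball, and that is the nontrivial content of this step, not a consequence of smooth parameter dependence alone.
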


\begin{proof}
Since $\Psi_{\infty}(t) = \sqrt{d-3} + \mathcal{O}(e^{2t})$ as $t \to -\infty$, there exist $b_{T,a} > 0$ and $C_{T,a} > 0$ such that
\begin{equation}
\label{bound-f0-f1}
\sup_{t \in (-\infty,(a-1) \log b + T]} (|f(t)| + |g(t)|) \leq C_{T,a} b^{-2(1-a)}, \quad b \geq b_{T,a},
\end{equation}
where $T > 0$ and $a \in (0,1)$ are fixed independently of $b$. 
The left-hand side of linear equations (\ref{eq-psi-singular-derivative-1}) coincides with the linearized equation near the stable spiral point $(\sqrt{d-3},0)$ with two roots (\ref{complex-roots}). By variation of parameters, we can rewrite 
the linear equations for $\Psi_{1,2}$ in the integral form:
\begin{eqnarray}
\nonumber
&& \Psi_{1,2}(t) = A_{1,2} e^{-\beta t} \sin(\alpha t) 
+ B_{1,2} e^{-\beta t} \cos(\alpha t) \\
&& \qquad \qquad + \alpha^{-1} \int_{-\infty}^t e^{-\beta (t-t')} \sin(\alpha (t-t'))
\left[ f(t') e_{1,2} + g(t') \Psi_{1,2}(t') \right] dt',
\label{volterra-last}
\end{eqnarray}
where $A_{1,2}$, $B_{1,2}$ are some constant coefficients 
and $e_1 = 1$, $e_2 = 0$.
The kernel of the integral equations (\ref{volterra-last})  
is bounded in the variable $\tilde{\Psi}_{1,2}(t) = e^{\beta t} \Psi_{1,2}(t)$, for which we can write 
\begin{eqnarray}
\nonumber
&& \tilde{\Psi}_{1,2}(t) = A_{1,2} \sin(\alpha t) 
+ B_{1,2} \cos(\alpha t) \\
&& \qquad \qquad + \alpha^{-1} \int_{-\infty}^t \sin(\alpha (t-t'))
\left[ f(t') e^{\beta t'} e_{1,2} + g(t') \tilde{\Psi}_{1,2}(t') \right] dt'.
\label{volterra-last-tilde}
\end{eqnarray}
By the same fixed-point iterations as in the proof of Lemma \ref{lemma-1}, 
there exists the unique solutions $\tilde{\Psi}_{1,2}$ to the integral equations (\ref{volterra-last-tilde}) in a closed 
subset of Banach space $L^{\infty}(-\infty,T+(a-1) \log b)$ 
satisfying the bounds
\begin{eqnarray*}
&& \sup_{t \in (-\infty,T+(a-1)\log b]} |\tilde{\Psi}_{1,2}(t) - A_{1,2} \sin(\alpha t) - B_{1,2} \cos(\alpha t) | \leq C_{T,a}  b^{-2(1-a)}, \qquad b \geq b_{T,a},
\end{eqnarray*}
due to bounds (\ref{bound-f0-f1}). By using the transformation $\tilde{\Psi}_{1,2}(t) = e^{\beta t} \Psi_{1,2}(t)$, we obtain  (\ref{behavior-C-spiral-derivative}). 

In order to justify (\ref{behavior-C-spiral}), we substitute the decomposition $\Psi_C = \Psi_{\infty} + \Sigma$ into (\ref{eq-psi-singular}) and obtain the following persistence problem:
\begin{equation}
\label{persistence-Psi-C}
\mathcal{L}_{\infty} \Sigma = \mathcal{F},
\end{equation}
where
\begin{eqnarray*}
(\mathcal{L}_{\infty} \Sigma)(t) & := & \Sigma''(t) + (d-4) \Sigma'(t) + (3-d) \Sigma(t) + 3 \Psi_{\infty}(t)^2 \Sigma(t) + \lambda_{\infty} e^{2t} \Sigma(t) - e^{4t} \Sigma(t), \\
\mathcal{F}(t) & := & -(\lambda - \lambda_{\infty}) e^{2t} (\Psi_{\infty}(t) + \Sigma(t)) - 3 \Psi_{\infty}(t) \Sigma(t)^2 - \Sigma(t)^3.
\end{eqnarray*}
Let $\{ \Sigma_1,\Sigma_2\}$ be the fundamental system 
of the homogeneous equation $\mathcal{L}_{\infty} \Sigma = 0$ subject 
to the normalization 
$$
\left\{ \begin{array}{l} \Sigma_1(0) = 1, \\
\Sigma_1'(0) = 0, \end{array} \right. \qquad 
\left\{ \begin{array}{l} \Sigma_2(0) = 0, \\
\Sigma_2'(0) = 1.\end{array} \right.
$$
Since $\mathcal{L}_{\infty} = \mathcal{L}_0 - g$ and $g(t) = \mathcal{O}(e^{2t})$ as $t \to -\infty$, the functions $\Sigma_1(t)$ and $\Sigma_2(t)$ diverge like $\mathcal{O}(e^{-\beta t})$ as $t \to -\infty$, so that there exists a positive constant $C$ such that 
\begin{equation}
\label{bound-Sigma-1-2}
\sup_{t \in (-\infty,0]} e^{\beta t} \left( |\Sigma_1(t)| + |\Sigma_2(t)| \right) \leq C.
\end{equation}
The Wronskian relation from Liouville's theorem yields
\begin{equation}
\label{Wron-Sigma}
W(\Sigma_1,\Sigma_2)(t) = \Sigma_1(t) \Sigma_2'(t) - \Sigma_1'(t) \Sigma_2(t) = e^{-2\beta t},
\end{equation}
where $2\beta = d-4$. By variation of parameters, we can rewrite the differential equation (\ref{persistence-Psi-C}) in the integral form:
\begin{eqnarray}
\nonumber
&& \Sigma(t) = \Sigma(0) \Sigma_1(t) + \Sigma'(0) \Sigma_2(t) \\
&& \qquad \qquad + \int_{t}^0 e^{2\beta t'} 
\left[ \Sigma_1(t) \Sigma_2(t') - \Sigma_1(t') \Sigma_2(t) \right] \mathcal{F}(t') dt',
\label{volterra-Psi-C}
\end{eqnarray}
where $t < 0$.

For simplicity, let us set $C = C_{\infty}$ and consider $\lambda$ satisfying $|\lambda - \lambda_{\infty}| \leq \epsilon e^{-\beta(1-a)}$. The proof 
of the general case under the bound (\ref{ball-lambda-C}) is similar. We set $\tilde{\Sigma}(t) := e^{\beta t} \Sigma(t)$ as before and 
rewrite the integral equation (\ref{volterra-Psi-C}) in the form:
\begin{eqnarray}
\nonumber
&& \tilde{\Sigma}(t) = \Sigma(0) e^{\beta t} \Sigma_1(t) + \Sigma'(0) e^{\beta t}  \Sigma_2(t) \\
\nonumber
&& \qquad \qquad - (\lambda - \lambda_{\infty}) \int_{t}^0 K(t,t') e^{2 t'} \left[ e^{\beta t'} \Psi_{\infty}(t') + \tilde{\Sigma}(t') \right] dt' \\
&& \qquad \qquad - \int_{t}^0 K(t,t') \left[ 3 \Psi_{\infty}(t') e^{-\beta t'} \tilde{\Sigma}(t')^2 - e^{-2\beta t'} \tilde{\Sigma}(t')^3 \right] dt',
\label{volterra-Psi-C-tilde}
\end{eqnarray}
where the kernel $K(t,t') := e^{\beta (t+t')} 
\left[  \Sigma_1(t) \Sigma_2(t') - \Sigma_1(t') \Sigma_2(t) \right]$ 
satisfies the bound
\begin{equation}
\label{bound-Sigma-K}
\sup_{t \in (-\infty,0],t' \in (-\infty,0]} |K(t,t')| \leq C.
\end{equation}
which follows from (\ref{bound-Sigma-1-2}). By the smoothness of $\Psi_C$ in $\lambda$, we have $|\Sigma(0)| + |\Sigma'(0)| \leq C |\lambda - \lambda_{\infty}|$. The nonlinear terms grow as $t \to -\infty$, therefore, the fixed-point arguments cannot be closed in $L^{\infty}(-\infty,0)$. However, they can be closed in the ball $B_{\delta} \subset L^{\infty}((a-1)\log b,0)$ provided that $\delta = C\epsilon e^{-\beta (1-a)}$ with sufficiently small $\epsilon > 0$ and some $C > 0$. 
In particular, the nonlinear terms are contractive if $\epsilon$ is sufficiently small. By using the first-point iterations, there exists 
the unique solution to the integral equation (\ref{volterra-Psi-C-tilde}) satisfying 
\begin{equation}
\sup_{t \in [(a-1)\log b,0]} |\tilde{\Sigma}(t)| \leq C |\lambda - \lambda_{\infty}| \leq C \epsilon b^{-\beta (1-a)}.
\end{equation}
Since $\tilde{\Sigma}$ is smooth in $\lambda$ and $\partial_{\lambda} \tilde{\Sigma} |_{\lambda - \lambda_{\infty}} = \tilde{\Psi}_1$ constructed above, we then conclude that 
\begin{equation}
\label{bound-second-derivative}
\sup_{t \in [(a-1)\log b,0]} |\tilde{\Sigma}(t) - (\lambda - \lambda_{\infty}) \tilde{\Psi}_1 | \leq C (\lambda - \lambda_{\infty})^2 b^{\beta (1-a)} \leq C \epsilon^2 b^{-\beta (1-a)}.
\end{equation} 
Bound (\ref{behavior-C-spiral}) follows from the decomposition $\Psi_C = \Psi_{\infty} + \Sigma$, the expansion 
$\Psi_{\infty}(t) = \sqrt{d-3} + \mathcal{O}(e^{2t})$ as $t \to -\infty$, the bound (\ref{behavior-C-spiral-derivative}) on the first derivatives, 
and the bound (\ref{bound-second-derivative}) on the higher-order terms.

It remains to prove that $A_1 B_2 \neq A_2 B_1$ under Assumption 
\ref{assumption-1}. Since the differential equation (\ref{eq-psi-singular-derivative-1}) is homogeneous and $\Psi_2$ in (\ref{derivative-C-solution}) is nonzero due to the boundary conditions (\ref{asymptotics-infinity-psi}), it follows that $(A_2,B_2) 
\neq (0,0)$ by uniqueness of the zero solution in the integral equation (\ref{volterra-last-tilde}) for $\tilde{\Psi}_2$. If $A_1 B_2 = A_2 B_1$, then there exists $\mu \in \mathbb{R}$ such that $(A_1,B_1) = \mu (A_2,B_2)$ and $\Delta(t) := \Psi_1(t) - \mu \Psi_2(t)$ satisfies 
the integral equation that follows from (\ref{volterra-last}):
\begin{eqnarray}
\Delta(t) = \alpha^{-1} \int_{-\infty}^t e^{-\beta (t-t')}
\sin(\alpha (t-t'))
\left[ f(t') + g(t') \Delta(t') \right] dt'.
\label{volterra-difference}
\end{eqnarray}
By the previous arguments, there exists the unique solution 
to the integral equation (\ref{volterra-difference}) 
for $\tilde{\Delta}(t) = e^{\beta t} \Delta(t)$ in a closed subset of Banach space $L^{\infty}(-\infty,T+(a-1) \log b)$. Moreover, $\tilde{\Delta}(t) \to 0$ as $t \to -\infty$. The Wronskian between $\Delta$ and $\Psi_2$ satisfies the inhomogeneous 
equation 
\begin{equation}
\label{Wronskian-Delta}
\frac{d}{dt} e^{(d-4)t} W(\Delta,\Psi_2) = - e^{(d-4)t} f(t) \Psi_2(t).
\end{equation}
Due to the fast decay of $\Psi_C(t)$ as $t \to +\infty$ in (\ref{asymptotics-infinity-psi}), we integrate the inhomogeneous equation (\ref{Wronskian-Delta}) on $\mathbb{R}$ and obtain the contradiction with the constraint (\ref{degeneracy-1}) in Assumption \ref{assumption-1}:
$$
0 = \lim_{t \to -\infty}  e^{(d-4)t} W(\Delta,\Psi_2) = \int_{-\infty}^{\infty}  e^{(d-4)t} f(t) \Psi_2(t) dt = - 
\int_{-\infty}^{\infty}  e^{(d-2)t} \Psi_{\infty}(t) \Psi_2(t) dt \neq 0,
$$
where for the first equality we have used that $\tilde{\Psi}_2(t)$ is bounded and $\tilde{\Delta}(t)$ is decaying to zero as $t \to -\infty$. The contradiction implies that $A_1 B_2 \neq A_2 B_1$ under Assumption \ref{assumption-1}.
\end{proof}

The proof of Theorem \ref{theorem-2} for $5 \leq d \leq 12$ is developed based on Lemmas \ref{lemma-spiral-Psi} and \ref{lemma-spiral-Psi-C}.\\

{\em Proof of Theorem \ref{theorem-2} for $5 \leq d \leq 12$.}

By Theorem \ref{theorem-1}, the solution $\Psi_b(t)$ exists for
a certain value of $\lambda$ denoted by $\lambda(b)$ for every $b > 0$. 
By Lemma \ref{lemma-0}, it satisfies the asymptotic behavior (\ref{asymptotics-infinity-psi})
for uniquely selected $C = C(b)$. Therefore, for this value of $\lambda = \lambda(b)$, we have
\begin{equation}
\label{relation-b-C}
\Psi_b(t) = \Psi_{C(b)}(t), \quad t \in \mathbb{R}.
\end{equation}
By comparing the bound (\ref{behavior-Psi-spiral}) of Lemma \ref{lemma-spiral-Psi} for any fixed $T \in \mathbb{R}$
and $a \in \left(0,\frac{4}{d} \right)$ with the bound (\ref{behavior-C-spiral}) of Lemma \ref{lemma-spiral-Psi-C} at the time
instance $t = T + (a-1) \log b$, we obtain the system of nonlinear equations:
\begin{eqnarray}
\label{system-A-B}
\left\{ \begin{array}{l}  A_1 (\lambda(b) - \lambda_{\infty}) + A_2 (C(b) - C_{\infty}) = A_0 b^{-\beta} \cos(\delta_0 + \alpha \log b) + E_1 \\
B_1 (\lambda(b) - \lambda_{\infty}) + B_2 (C(b) - C_{\infty}) = A_0 b^{-\beta} \sin(\delta_0 + \alpha \log b) + E_2, \end{array} \right.
\end{eqnarray}
where coefficients $(A_1,A_2)$ and $(B_1,B_2)$ are the same as in (\ref{behavior-C-spiral-derivative}) and $(E_1,E_2)$ are error terms satisfying 
\begin{eqnarray*}
	&& E_{1,2} = \mathcal{O}(b^{-(1+a) \beta}, b^{-(2+\beta)(1-a)},
	(\lambda(b) - \lambda_{\infty})b^{-2(1-a)}, (C(b) - C_{\infty}) b^{-2(1-a)}, \\
	&& \qquad \qquad   (\lambda(b) - \lambda_{\infty})^2 b^{\beta (1-a)},(C(b) - C_{\infty})^2 b^{\beta (1-a)})
\end{eqnarray*}
as $b \to \infty$, provided that $(\lambda(b),C(b))$ satisfy the bound (\ref{ball-lambda-C}) for some $\epsilon > 0$. By Lemma \ref{lemma-spiral-Psi-C}, 
it follows that $A_1 B_2 \neq A_2 B_1$ so that the matrix 
in (\ref{system-A-B}) is invertible. By the implicit function theorem, there exist constants $A_{\infty}$, $B_{\infty}$, $\delta_{\infty}$, and $\nu_{\infty}$ such that
the unique solution to the system (\ref{system-A-B}) is given by
\begin{eqnarray}
\label{system-lambda}
\left\{ \begin{array}{l}
\lambda(b) - \lambda_{\infty} = A_{\infty} b^{-\beta} \sin(\delta_{\infty} + \alpha \log b) +
\mathcal{O}(b^{-(1+a) \beta}, b^{-(2+\beta)(1-a)},b^{-\beta - 2(1-a)}), \\
C(b) - C_{\infty} = B_{\infty} b^{-\beta} \sin(\nu_{\infty} + \alpha \log b) + \mathcal{O}(b^{-(1+a) \beta}, b^{-(2+\beta)(1-a)},b^{-\beta - 2(1-a)}).
\end{array} \right.
\end{eqnarray}
The solution (\ref{system-lambda}) satisfies the bound (\ref{ball-lambda-C}) since $b^{-\beta} \ll b^{-\beta(1-a)}$ for $a > 0$.
On the other hand, if  $a < \frac{4}{d}$, then $b^{-\beta} \gg e^{-(2+\beta)(1-a)}$, and the error terms in (\ref{system-lambda}) are smaller compared to the leading-order terms. Expansion (\ref{system-lambda}) justifies the expansion (\ref{snake}). \hspace{3.5cm} $\Box$

\vspace{0.25cm}

\begin{rem}
	Let $\{b_n\}_{n \in \mathbb{N}}$ be a sequence of roots of $\lambda(b) = \lambda_{\infty}$. It follows from \eqref{system-lambda} that
	\begin{equation}\label{b-large}
	\lim_{n \to \infty} \frac{b_{n+1}}{b_n} = e^{\frac{\pi}{\alpha}}.
	\end{equation}
	We verified the asymptotic limit \eqref{b-large} numerically. The results 
	for $d = 5$ are given in Table \ref{tab2:my_label}, where $e^{\frac{\pi}{\alpha}} \approx 5.06478$.
\end{rem}

\begin{table}[h]
	\centering
	\begin{tabular}{c|cc}
		$n$ & $b_n$ & $b_{n+1}/b_n$ \\ \hline
		1 & 3.7733455 & 5.37388 \\
		2 & 20.277514 & 5.07167\\
		3 & 102.84079 & 5.08211\\
		4 & 522.64782 & 5.06062\\
		5 & 2644.9194 & 5.06744\\
		6 & 13402.960 & 5.06352\\
		7 & 67866.139 & 5.06588\\
		8 & 343801.49 & 5.06317\\
		9 & 1740725.8 & 
	\end{tabular}
\vspace{0.25cm} 
	\caption{Approximate values of $b_n$ such that $\lambda(b_n) = \lambda_{\infty}$ for $d=5$.}
	\label{tab2:my_label}
\end{table}

Figure \ref{fig:psid5} illustrates the solutions $\Psi_{b}$  
of the second-order equation (\ref{eq-psi-singular}) for $d = 5$ 
and $b = b_1, b_3, b_6$, where $\{ b_n \}_{n \in \mathbb{N}}$ 
are defined in Table \ref{tab2:my_label}. The left panel shows that the solutions $\Psi_b$ translated in $t$ 
by $\log b$ in comparison with the solution $\Theta$ of 
the truncated equation (\ref{eq-psi-truncated}). The right panel 
shows the solutions $\Psi_b$ without translation in comparison with the limiting 
singular solution $\Psi_{\infty}$ satisfying (\ref{eq-psi-singular}) and (\ref{bc-singular}). The left panel confirms convergence of 
$\{ \Psi_{b_n}(\cdot - \log b_n)\}_{n \in \mathbb{N}}$ to $\Theta$ on $(-\infty,t_0]$ for a fixed $t_0 > 0$. The right panel confirms convergence of $\{ \Psi_{b_n}\}_{n \in \mathbb{N}}$  to $\Psi_{\infty}$ on $[t_0,\infty)$ for a fixed $t_0 < 0$.

\begin{figure}[h]
	\centering
	\includegraphics[width=0.47\textwidth]{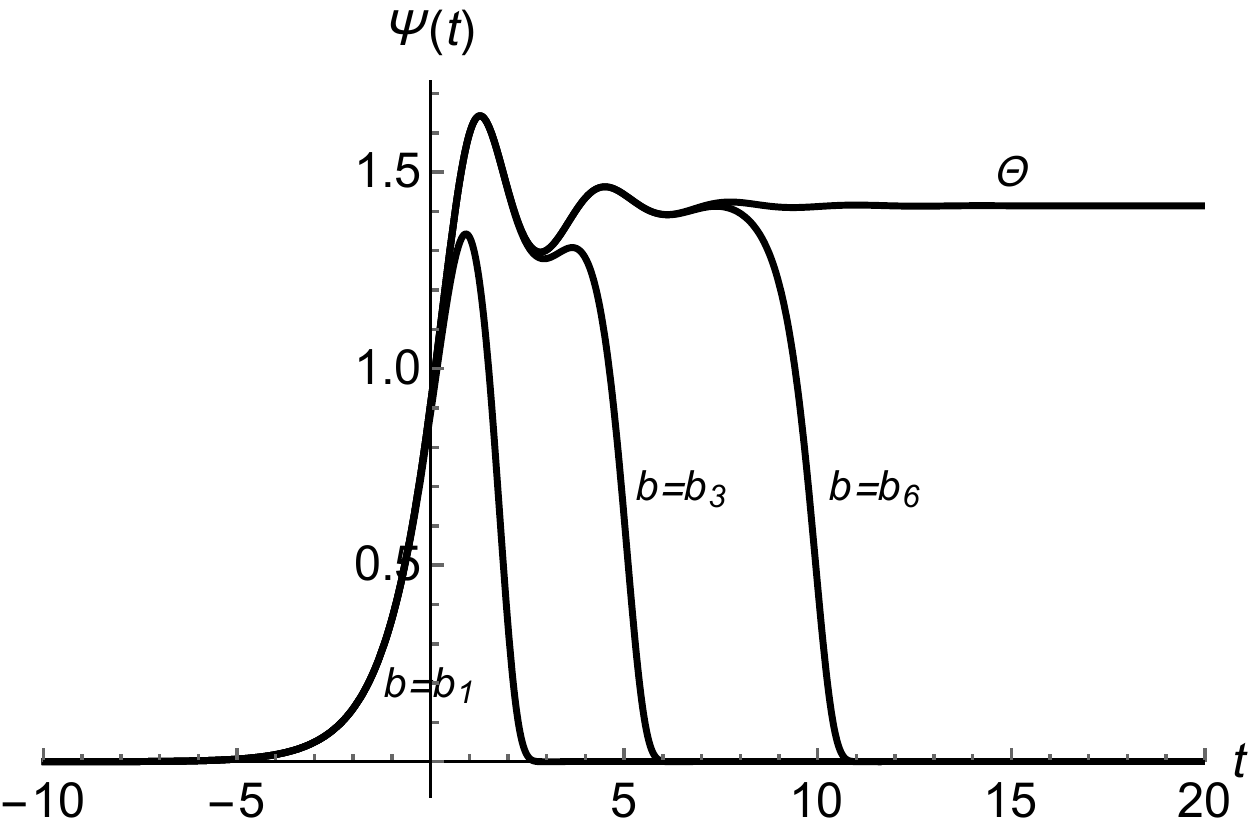}\qquad
	\includegraphics[width=0.47\textwidth]{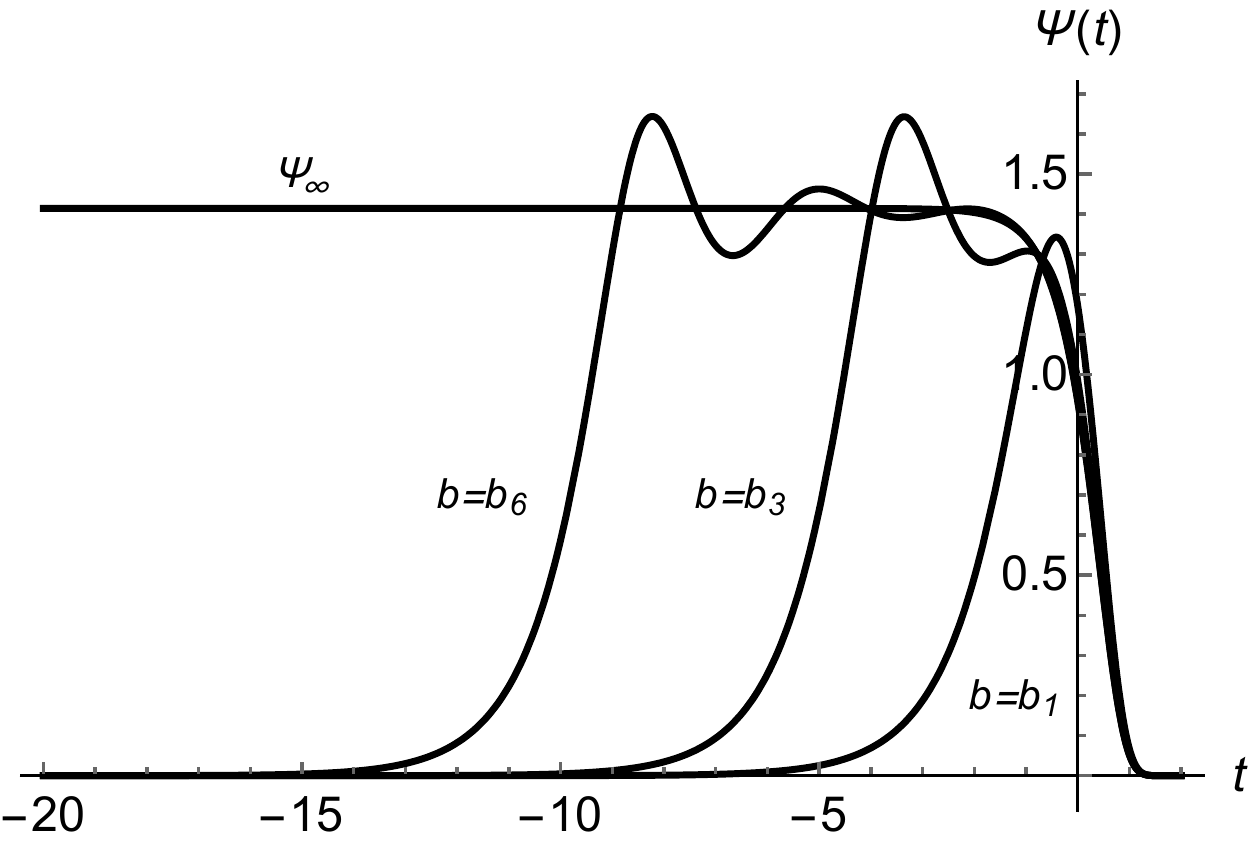}
	\caption{Plots of the solutions $\Psi_{b}$ for $d=5$ 
		and $b = b_1, b_3, b_6$ in comparison with $\Theta$ after translation of $t$ by $\log b$ (left) and with $\Psi_{\infty}$ (right). }
	\label{fig:psid5}
\end{figure}

\begin{figure}[htp]
	\centering
	\includegraphics[width=0.5\textwidth]{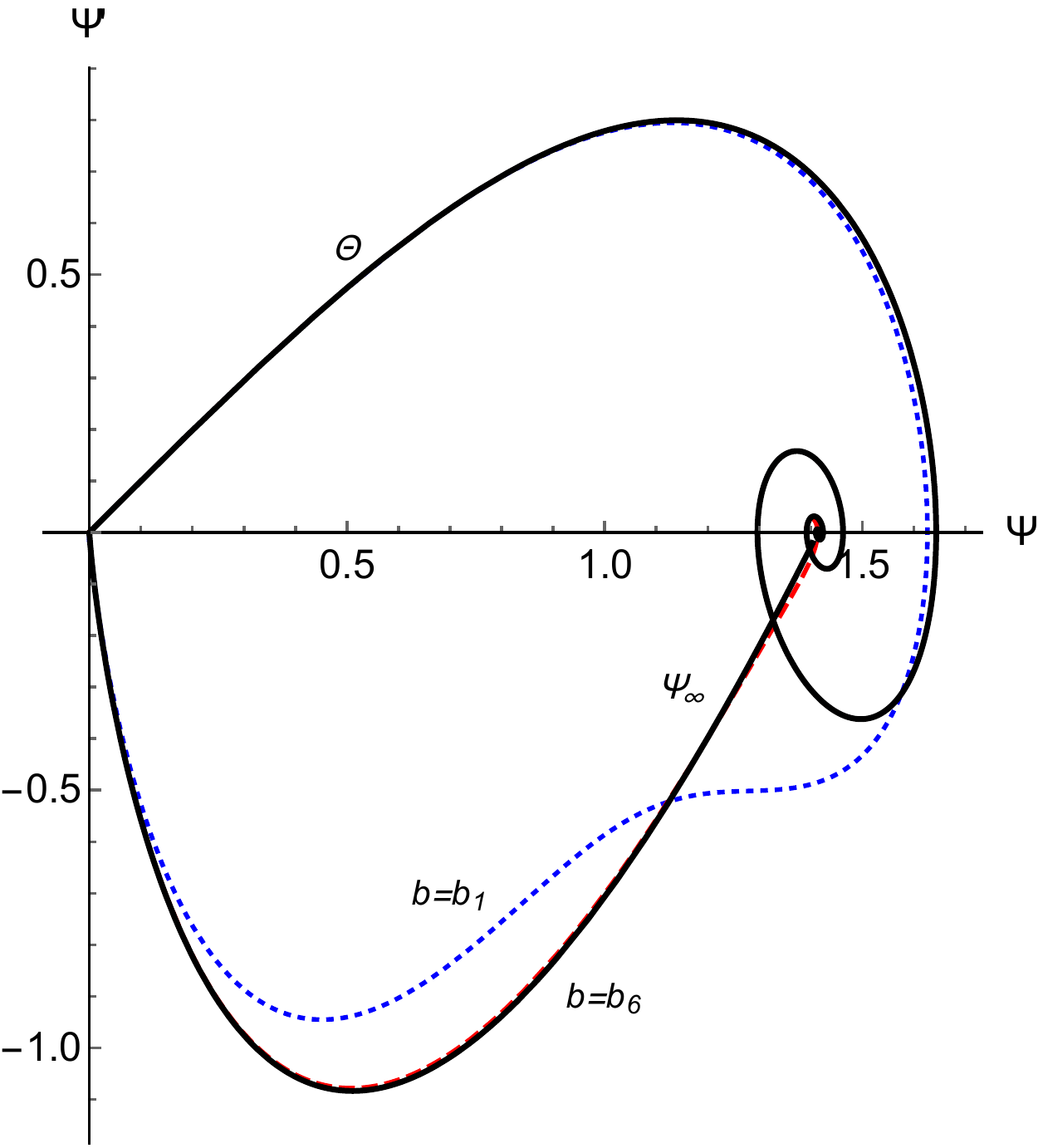}
	\caption{Solutions $\Psi_{b_1}$, $\Psi_{b_6}$, $\Theta$, and $\Psi_{\infty}$ on the phase plane $(\Psi,\Psi')$ for $d = 5$.}
	\label{fig:phaseplane}
\end{figure} 

Figure \ref{fig:phaseplane} shows solutions $\Psi_b$ for $b = b_1$ and $b = b_6$ on the phase plane $(\Psi,\Psi')$ together with the solution 
$\Theta$ of the truncated equation (\ref{eq-psi-truncated}) 
and the limiting singular solution $\Psi_{\infty}$ satisfying (\ref{eq-psi-singular}) and (\ref{bc-singular}). The difference of $\Psi_{b = b_6}$ (red dotted line) from
$\Theta$ and $\Psi_{\infty}$ is almost invisible, 
whereas the difference is large in the case of $\Psi_{b = b_1}$ (blue dotted line).

\subsection{Monotone behavior for $d \geq 13$}

Here we state and prove the corresponding modifications of results of
Lemmas \ref{lemma-spiral-Theta}, \ref{lemma-spiral-Psi}, and \ref{lemma-spiral-Psi-C}
in the case $d \geq 13$. The following lemma described the exponential behavior of $\Theta(t)$ as $t \to +\infty$.

\begin{lemma}
\label{lemma-monotone-Theta}
Fix $d \geq 13$. There exist $t_0 > 0$ (sufficiently large), $A_0 > 0$, $B_0 > 0$, and $C_0 > 0$
such that the unique solution $\Theta$ of Lemma \ref{lemma-Psi-inf} satisfies the following behavior:
\begin{equation}
\label{behavior-monotone}
\sup_{t \in [t_0,\infty)} |\Theta(t) - \sqrt{d-3} - A_0 e^{\kappa_+ t} - B_0 e^{\kappa_- t}| \leq C_0 e^{2 \kappa_+ t_0},
\end{equation}
where $(\kappa_+,\kappa_-)$ are given by (\ref{linearization-nonzero}).
\end{lemma}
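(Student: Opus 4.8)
The plan is to mirror the strategy of Lemma \ref{lemma-spiral-Theta}, replacing the stable spiral by the stable node that $(\sqrt{d-3},0)$ becomes for $d \geq 13$, where the characteristic equation (\ref{char-eq-roots}) has the two real negative roots $\kappa_- < \kappa_+ < 0$ recorded in (\ref{real-roots}). First I would set $\theta(t) := \Theta(t) - \sqrt{d-3}$ and linearize the truncated equation (\ref{eq-psi-truncated}) about the equilibrium, obtaining $\theta'' + (d-4)\theta' + 2(d-3)\theta = N(\theta)$ with $N(\theta) = -3\sqrt{d-3}\,\theta^2 - \theta^3 = \mathcal{O}(\theta^2)$; the homogeneous part has the fundamental solutions $e^{\kappa_\pm t}$. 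By Lemma \ref{lemma-Psi-inf} the heteroclinic orbit $\Theta$ satisfies $\theta(t) \to 0$ as $t \to +\infty$, so the remaining task is to extract the precise rate of this decay.

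The construction itself I would carry out by a contraction-mapping argument on $[t_0,\infty)$, exactly as in the proofs of Lemmas \ref{lemma-0} and \ref{lemma-Psi}. Using variation of parameters for the operator $\theta \mapsto \theta'' + (d-4)\theta' + 2(d-3)\theta$, whose Wronskian is proportional to $e^{-(d-4)t}$, I would write $\theta$ as $A_0 e^{\kappa_+ t} + B_0 e^{\kappa_- t}$ plus a Duhamel integral of $N(\theta)$, the two free constants $A_0, B_0$ being the coefficients along the slow and fast modes. Since $N(\theta) = \mathcal{O}(e^{2\kappa_+ t})$ once $\theta = \mathcal{O}(e^{\kappa_+ t})$, a fixed point in the weighted space with norm $\sup_{t \geq t_0} e^{-\kappa_+ t}|\theta(t)|$ produces a unique solution whose nonlinear correction is $\mathcal{O}(e^{2\kappa_+ t})$, which is exactly the error recorded in (\ref{behavior-monotone}); smoothness and the corresponding bound on $\Theta'$ follow by the same bootstrapping used in Lemma \ref{lemma-1}. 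One subtlety is that the Duhamel kernel along the fast mode must be oriented (integrated from $t_0$ to $t$ rather than from $t$ to $\infty$) whenever $2\kappa_+ > \kappa_-$, since the sign of $2\kappa_+ - \kappa_-$ changes with $d$; this only affects the definition of $B_0$ and not the final bound.

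It remains to pin down the signs. To see that $A_0 > 0$, I would argue on the phase plane $(\Theta,\Theta')$ as in Lemma \ref{lemma-Psi-inf}: on $(0,\sqrt{d-3})$ the force $\Theta\,((d-3)-\Theta^2)$ is positive, so the orbit reaches $\Theta = \sqrt{d-3}$ with $\Theta'>0$, overshoots into $\Theta > \sqrt{d-3}$, and---being overdamped (a node)---returns monotonically from above; hence $\theta(t) > 0$ for large $t$ and the coefficient $A_0$ along the slowly decaying mode $e^{\kappa_+ t}$ is strictly positive. The coefficient $B_0$ along the fast mode is then fixed by matching to the full orbit, and its positivity would be read off the same construction, a routine but delicate bookkeeping step that I would defer to the detailed computation.

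The step I expect to be the main obstacle is the possibility of a \emph{resonance} between the two real eigenvalues. If $\kappa_- = 2\kappa_+$, the quadratic forcing $\propto e^{2\kappa_+ t}$ generated by $N(\theta)$ becomes resonant with the fast mode and produces a term $\propto t\,e^{2\kappa_+ t}$, which would dominate $e^{2\kappa_+ t_0}$ in the supremum over $[t_0,\infty)$ and destroy the clean two-term form (\ref{behavior-monotone}). Using (\ref{linearization-nonzero}), the relation $\kappa_- = 2\kappa_+$ forces $\kappa_+ = -(d-4)/3$ together with $\kappa_+^2 = d-3$, hence $(d-4)^2 = 9(d-3)$, i.e. $d = \tfrac{17 \pm \sqrt{117}}{2}$, which is never an integer; since $d$ is a natural number with $d \geq 13$, no such resonance occurs and the construction closes. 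As an alternative route consistent with Lemma \ref{lemma-spiral-Theta}, one could instead invoke a smooth (Sternberg) linearization near the node, valid precisely because $\kappa_-/\kappa_+ \notin \{2,3,\dots\}$ for integer $d \geq 13$, and then read off (\ref{behavior-monotone}) directly from the linear flow.
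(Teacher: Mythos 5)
Your main construction is a genuinely different route from the paper's. The paper's proof is soft: it observes that $(\sqrt{d-3},0)$ is a stable node with roots (\ref{linearization-nonzero}), removes the quadratic terms by a near-identity transformation under the non-resonance condition $\kappa_- \neq 2\kappa_+$ (checked via the absence of integer roots of $d^2-17d+43=0$), and then invokes the Hartman--Grobman theorem to get a $C^2$-diffeomorphism conjugating the flow of (\ref{eq-psi-truncated}) near the node to the linear flow, from which (\ref{behavior-monotone}) is read off. Your ``alternative route'' via Sternberg linearization is therefore essentially the paper's actual argument, while your primary route (variation of parameters plus a contraction in the weighted norm $\sup_{t\geq t_0} e^{-\kappa_+ t}|\theta(t)|$) is the hands-on analogue, consistent in style with Lemmas \ref{lemma-0} and \ref{lemma-Psi}. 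Your resonance computation is correct and lands on exactly the paper's quadratic: $\kappa_-=2\kappa_+$ together with $\kappa_+ + \kappa_- = -(d-4)$ and $\kappa_+\kappa_- = 2(d-3)$ forces $(d-4)^2 = 9(d-3)$, i.e.\ $d^2-17d+43=0$, with roots $(17\pm\sqrt{117})/2 \notin \mathbb{N}$. Your observation that the sign of $2\kappa_+-\kappa_-$ flips with $d$ (at $d=13$ one has $\kappa_+=-4$, $\kappa_-=-5$, so $2\kappa_+<\kappa_-$, while $2\kappa_+>\kappa_-$ for $d\geq 14$) is correct and is finer bookkeeping than the paper records; it is indeed harmless for the bound. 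One routine step you elide: to know that the specific heteroclinic $\Theta$ of Lemma \ref{lemma-Psi-inf} is represented by your integral equation, you need a preliminary crude exponential-decay estimate for orbits entering the hyperbolic sink before the bootstrap.

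The genuine flaw is your sign argument for $A_0>0$. Positivity of the force on $(0,\sqrt{d-3})$ does show $\Theta'>0$ as long as $\Theta<\sqrt{d-3}$, but it does not imply that $\Theta$ reaches $\sqrt{d-3}$ in finite time: the strong damping $(d-4)\Theta'$ can make the orbit approach the equilibrium asymptotically from below, and for $d\geq 13$ this is exactly what happens. The node condition $d^2-16d+40\geq 0$ is precisely the Joseph--Lundgren-type separation regime for the scale-invariant equation $f''+\frac{d-1}{r}f'+f^3=0$ underlying (\ref{eq-psi-truncated}): the regular solution stays strictly below the singular solution $\sqrt{d-3}/r$ for all $r$, i.e.\ $\Theta(t)<\sqrt{d-3}$ for all $t$, with monotone approach --- this is the very absence of oscillations/intersections that distinguishes $d\geq 13$ from $5\leq d\leq 12$ throughout the paper. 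Hence $\theta(t)<0$ for all large $t$ and the slow-mode coefficient satisfies $A_0\leq 0$: your overshoot picture gives the wrong sign. (Even granting an overshoot, ``returns monotonically from above'' does not follow at a node --- the orbit could cross back --- and $A_0>0$ rather than $A_0\geq 0$ would still need an argument excluding approach along the fast mode.) To be fair, the paper's own proof is silent on the signs of $A_0,B_0$ and even on $A_0\neq 0$, and downstream only $A_0\neq 0$ matters, entering through the combination $\Delta_0^{-1}A_0$ in (\ref{system-lambda-monotone}) which fixes the constant in (\ref{monotone}); so your error concerns a claim the lemma asserts but the paper never establishes. Still, as written, your positivity argument is invalid and its conclusion is the opposite of the expected one.
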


\begin{proof}
The equilibrium point $(\sqrt{d-3},0)$ is a stable sink of the truncated equation (\ref{eq-psi-truncated}) for $d \geq 13$ due to the roots (\ref{linearization-nonzero}) of the characteristic equation (\ref{char-eq-roots}) satisfying (\ref{real-roots}). Quadratic terms
beyond the linearization at $(\sqrt{d-3},0)$ can be removed by a near-identity transformation under the non-resonance condition
$\kappa_- \neq 2 \kappa_+$ which is satisfied since there are no integer solutions of the quadratic equation $d^2 - 17 d + 43 = 0$.
By the Hartman--Grobman theorem, there exists a $C^2$-diffeomorphism, under which the dynamics
of the truncated equation (\ref{eq-psi-truncated}) near $(\sqrt{d-3},0)$ is conjugate to the dynamics
of the linearized equation. The asymptotic behavior (\ref{behavior-monotone}) follows from the solution of
the linearized equation and the existence of the $C^2$-diffeomorphism.
\end{proof}

\begin{rem}
Because $\kappa_- < \kappa_+ < 0$, the function $\Theta(t)$ approaches $\sqrt{d-3}$ monotonically 
and the bound (\ref{behavior-monotone}) can be rewritten in a simpler way:
\begin{equation}
\label{behavior-monotone-explicit}
\sup_{t \in [t_0,\infty)} |\Theta(t) - \sqrt{d-3} - A_0 e^{\kappa_+ t} | \leq C_0 \max\{e^{\kappa_- t_0},e^{2 \kappa_+ t_0}\},
\end{equation}
from which the monotone behavior of $\Theta(t)$ as $t \to +\infty$ is obvious.
\end{rem}

Using Lemma \ref{lemma-monotone-Theta}, the statement of Lemma \ref{lemma-spiral-Psi}
is modified to yield the exponential behavior of the solution $\Psi_b(t)$ 
at the intermediate values of $t$ as $b \to \infty$.

\begin{lemma}
\label{lemma-monotone-Psi}
Fix $d \geq 13$ and $\lambda \in \mathbb{R}$. For fixed $T > 0$ and $a \in(0,a_0)$, 
where $a_0 \in (0,1)$ is defined by (\ref{a-0}), there exists $b_{T,a} > 0$ and $C_{T,a} > 0$ such that 
the unique solution $\Psi_b$ to the second-order equation (\ref{eq-psi-singular})
with the asymptotic behavior (\ref{asym-b-solution}) satisfies for $b \geq b_{T,a}$:
\begin{eqnarray}
|\Psi_b(T + (a-1) \log b) - \sqrt{d-3} - A_0 b^{a \kappa_+} e^{\kappa_+ T} | 
\leq C_{T,a} \max\{b^{a \kappa_-},b^{2a \kappa_+},b^{-2(1-a)}\}, 
\label{behavior-Psi-monotone}
\end{eqnarray}
where $(\kappa_+,\kappa_-)$ are given by (\ref{linearization-nonzero})
and $A_0$ is defined in (\ref{behavior-monotone-explicit}).
\end{lemma}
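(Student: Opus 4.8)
The plan is to follow the proof of Lemma \ref{lemma-spiral-Psi} almost verbatim, replacing the oscillatory decay rate $-\beta$ by the slower real rate $\kappa_+$ from \eqref{linearization-nonzero}. As there, I write $\Psi_b(t-\log b) = \Theta(t) + \Upsilon(t)$, where $\Theta$ is the heteroclinic orbit of Lemma \ref{lemma-Psi-inf} and $\Upsilon$ solves the persistence problem \eqref{persistence-varphi}. I would first extend the bound \eqref{bound-solution-b} of Lemma \ref{lemma-Psi} from $(-\infty,0]$ to the growing interval $[0,T+a\log b]$ by solving \eqref{persistence-varphi} forward in $t$ from $t=0$ through the variation-of-parameters integral equation \eqref{volterra-again}. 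This yields the analog of \eqref{bound-solution-b-extended}, namely
\[
\sup_{t \in [0,T+a\log b]} |\Upsilon(t)| \leq C_{T,a}\, b^{-2(1-a)}, \qquad b \geq b_{T,a}.
\]
Evaluating at $t = T + a\log b$ and inserting the monotone asymptotics \eqref{behavior-monotone-explicit} of $\Theta$ from Lemma \ref{lemma-monotone-Theta} then produces \eqref{behavior-Psi-monotone} directly.

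The step requiring care is the forward estimate, where I must verify that the kernel of \eqref{volterra-again} still decays as $t\to+\infty$. For $d \geq 13$ the point $(\sqrt{d-3},0)$ is a stable sink, so the two homogeneous solutions of $L\Upsilon = 0$ behave as $\Theta'(t) \sim e^{\kappa_+ t}$ (the slower rate, since by Lemma \ref{lemma-monotone-Theta} with $A_0>0$ the term $A_0 e^{\kappa_+ t}$ dominates $\Theta(t)-\sqrt{d-3}$) and $\Xi(t)\sim e^{\kappa_- t}$ as $t\to+\infty$. The latter is legitimate to use because the antisymmetric combination $\Xi(t')\Theta'(t)-\Xi(t)\Theta'(t')$ entering \eqref{volterra-again} is invariant under $\Xi \mapsto \Xi + c\Theta'$, so I may work with the purely fast-decaying representative. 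Using $\kappa_+ + \kappa_- = 4-d$ from \eqref{linearization-nonzero} together with the Wronskian relation \eqref{Wronskian-relation}, a direct computation gives
\[
\bigl| e^{(d-4)t'} [\Xi(t')\Theta'(t) - \Xi(t)\Theta'(t')] \bigr| \leq C\bigl(e^{\kappa_+(t-t')} + e^{\kappa_-(t-t')}\bigr) \leq 2C\, e^{\kappa_+(t-t')}, \quad t > t' \geq 0,
\]
so the kernel decays exponentially in $t-t'$ exactly as $e^{-\beta(t-t')}$ did in the spiral case. Since moreover $\sup_{[0,T+a\log b]}|f_b| \leq (|\lambda|+1)e^{4T}b^{-2(1-a)}$ and the initial data obey $|\Upsilon(0)| + |\Upsilon'(0)| \leq C_0 b^{-2}$ by \eqref{bound-solution-b}, the same contraction-mapping argument as in the proof of Lemma \ref{lemma-1} closes in $L^\infty(0,T+a\log b)$ and yields the displayed extension bound.

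It then remains to collect the error terms at $t = T + a\log b$. The extension bound contributes $\mathcal{O}(b^{-2(1-a)})$ through $\Upsilon$, while \eqref{behavior-monotone-explicit} gives $\Theta(T+a\log b) = \sqrt{d-3} + A_0 e^{\kappa_+ T} b^{a\kappa_+} + \mathcal{O}(\max\{b^{a\kappa_-},b^{2a\kappa_+}\})$ after substituting $e^{\kappa_\pm(T+a\log b)} = e^{\kappa_\pm T} b^{a\kappa_\pm}$. Adding these reproduces precisely \eqref{behavior-Psi-monotone}. The restriction $a \in (0,a_0)$, with $a_0$ given by \eqref{a-0}, plays the role of the constraint $a < \tfrac{4}{d}$ in Lemma \ref{lemma-spiral-Psi}: it guarantees $b^{-2(1-a)} \ll b^{a\kappa_+}$, so that the extension error stays subdominant to the leading term $A_0 e^{\kappa_+ T} b^{a\kappa_+}$; the remaining error terms $b^{a\kappa_-}$ and $b^{2a\kappa_+}$ are automatically subdominant because $\kappa_- < \kappa_+ < 0$ by \eqref{real-roots}.

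The main obstacle is the kernel estimate: one must confirm that, with two \emph{distinct} real exponential rates, it is the slower rate $\kappa_+$ that governs the decay of the Green's function for $t>t'$, so that the forward iteration still closes on the expanding interval $[0,T+a\log b]$. Once this is secured, the remainder of the argument is a line-by-line transcription of the spiral case with $-\beta$ replaced by $\kappa_+$.
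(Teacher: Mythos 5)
Your proposal is correct and takes essentially the same route as the paper, whose proof of this lemma consists of asserting that the forward extension bound (\ref{bound-solution-b-extended}) ``remains the same for every $d \geq 5$'' and then combining it with (\ref{behavior-monotone-explicit}) under the constraint $a < a_0$ from (\ref{a-0}). Your verification that the kernel of (\ref{volterra-again}) still decays like $e^{\kappa_+(t-t')}$ --- achieved by passing to the fast-decaying representative of $\Xi$ modulo $\Theta'$, which is genuinely necessary since for $d \geq 14$ one has $2\kappa_+ - \kappa_- > 0$ and the naive bound with both homogeneous solutions at the slow rate $e^{\kappa_+ t}$ fails to close --- supplies a real detail that the paper leaves implicit.
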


\begin{proof}
The proof of the bound (\ref{bound-solution-b-extended}) remains the same for every $d \geq 5$.
Bound (\ref{behavior-Psi-monotone}) follows from (\ref{bound-solution-b-extended}) and (\ref{behavior-monotone-explicit})
since $a \log b \to +\infty$ as $b \to \infty$ if $a > 0$ and $b^{a \kappa_+} \gg b^{-2(1-a)}$ if $a < a_0$, 
where 
\begin{equation}
\label{a-0}
a_0 := \frac{2}{2+|\kappa_+|} = \frac{4}{d - \sqrt{d^2 - 16 d + 40}} = \frac{d + \sqrt{d^2 - 16 d + 40}}{2 (2d - 5)}.
\end{equation}
Note that $a_0  < \frac{1}{2}$ for every $d \geq 13$.
\end{proof}

Finally, we recall again that $\Psi_C$ coincides with $\Psi_{\infty}$ for $(\lambda,C) = (\lambda_{\infty},C_\infty)$ and define $\Psi_{1,2}$ as in 
(\ref{derivative-C-solution}). 
We add the following technical assumption.
\begin{assumption}
	\label{assumption-2}
	Uniquely defined functions $\Psi_{\infty}$ and $\Psi_2$ are assumed to satisfy the following non-degeneracy assumptions:
	\begin{equation}
	\label{degeneracy-2}
	\int_{-\infty}^{\infty} e^{(d-2) t} \Psi_{\infty}(t) \Psi_2(t) dt \neq 0
	\end{equation}
and 
		\begin{equation}
	\label{degeneracy-3}
	\lim\limits_{t \to -\infty} e^{-\kappa_- t} \Psi_2(t) \neq 0.
	\end{equation} 
\end{assumption}

\begin{rem}
	Compared to Assumption \ref{assumption-1}, we have an additional 
	assumption (\ref{degeneracy-3}) in Assumption \ref{assumption-2}. 
	This additional condition excludes solutions of the homogeneous 
	equation $\mathcal{L}_{\infty} \Psi_2 = 0$ decaying to zero as $t \to +\infty$ to grow slowly as $\mathcal{O}(e^{\kappa_+ t})$ as $t \to -\infty$. 
\end{rem}

The following lemma determines the exponential behavior of the solution $\Psi_{C}$ for $(\lambda,C)$ near the point
$(\lambda_{\infty},C_{\infty})$.

\begin{lemma}
\label{lemma-monotone-Psi-C}
Fix $d \geq 13$. There exist $L_1, L_2 \in \mathbb{R}$ such that 
\begin{equation}
\label{limits-B}
L_{1,2} = \lim_{t \to -\infty} e^{-\kappa_- t} \Psi_{1,2}(t).
\end{equation}
If Assumption \ref{assumption-2} is satisfied, then $L_2 \neq 0$ 
and for fixed $T > 0$ and $a \in (0,1)$,
there exist $b_{T,a} > 0$, $C_{T,a} > 0$, and $\Delta_0 \neq 0$  
such that $\Delta(t) := \Psi_1(t) - L_2^{-1} L_1 \Psi_2(t)$ 
satisfy for every $t \in (-\infty,(a-1) \log b + T]$: 
\begin{equation}
\label{behavior-C-monotone-derivative}
|\Delta(t) - \Delta_0 e^{\kappa_+ t} | \leq 
C_{T,a} b^{-2(1-a)} e^{\kappa_+ t}, \qquad b \geq b_{T,a},
\end{equation}
where $(\kappa_+,\kappa_-)$ are given by (\ref{linearization-nonzero}). Consequently, there exists $\epsilon_0 > 0$ such that 
for every $\epsilon \in (0,\epsilon_0)$ and for every 
$\lambda \in \mathbb{R}$ satisfying 
\begin{equation}
\label{ball-lambda-C-monotone}
|\lambda - \lambda_{\infty}| \leq \epsilon b^{\kappa_+ (1-a)},
\end{equation}
it is true for every 
$b \geq b_{T,a}$ and every $t \in [(a-1)\log b,(a-1) \log b + T]$ that 
\begin{eqnarray}
\nonumber
&& | \Psi_{C_{\infty}-L_2^{-1} L_1(\lambda - \lambda_{\infty})}(t) - \sqrt{d-3} 
- \Delta_0 (\lambda - \lambda_{\infty}) e^{\kappa_+ t} | \\
\label{behavior-C-monotone}
&& \qquad \leq C_{T,a} \left( 
b^{-2(1-a)} +   (\lambda(b) - \lambda_{\infty}) b^{-(2+\kappa_+)(1-a)} +  (\lambda(b) - \lambda_{\infty})^2 b^{-2\kappa_+(1-a)} \right),
\end{eqnarray}
where $b_{T,a}$ and $C_{T,a}$ are adjusted appropriately. 
\end{lemma}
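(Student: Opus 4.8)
The plan is to follow the structure of the proof of Lemma~\ref{lemma-spiral-Psi-C} almost verbatim, replacing the oscillatory fundamental system $e^{-\beta t}\{\sin(\alpha t),\cos(\alpha t)\}$ spanning $\ker\mathcal{L}_0$ by the two real exponentials $e^{\kappa_\pm t}$; since $\kappa_-<\kappa_+<0$ by (\ref{real-roots}), the mode $e^{\kappa_- t}$ dominates as $t\to-\infty$. First I would write the linear equations (\ref{eq-psi-singular-derivative-1}) for $\Psi_{1,2}$ in the variation-of-parameters form, in analogy with (\ref{volterra-last}),
\[
\Psi_{1,2}(t)=A_{1,2}e^{\kappa_+ t}+B_{1,2}e^{\kappa_- t}+\frac{1}{\kappa_+-\kappa_-}\int_{-\infty}^t\bigl[e^{\kappa_+(t-t')}-e^{\kappa_-(t-t')}\bigr]\bigl[f(t')e_{1,2}+g(t')\Psi_{1,2}(t')\bigr]\,dt',
\]
with $e_1=1$, $e_2=0$. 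Because $f(t),g(t)=\mathcal{O}(e^{2t})$ as $t\to-\infty$, the rescaled variable $e^{-\kappa_- t}\Psi_{1,2}(t)$ solves an integral equation with an exponentially decaying kernel, exactly as in (\ref{volterra-last-tilde}), and the contraction-mapping argument used throughout the paper yields a bounded solution whose limit as $t\to-\infty$ exists. This gives the limits $L_{1,2}$ of (\ref{limits-B}), and the claim $L_2\neq0$ is precisely the hypothesis (\ref{degeneracy-3}).

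Next I would examine $\Delta:=\Psi_1-L_2^{-1}L_1\Psi_2$, which solves $\mathcal{L}_0\Delta=f+g\Delta$ and, by the definition of the $L_{1,2}$, satisfies $\lim_{t\to-\infty}e^{-\kappa_- t}\Delta(t)=L_1-L_2^{-1}L_1L_2=0$. Thus the dominant $e^{\kappa_- t}$ component of $\Delta$ is absent and $\Delta$ decays like $e^{\kappa_+ t}$; rescaling by $e^{-\kappa_+ t}$ and repeating the fixed-point iteration on $(-\infty,(a-1)\log b+T]$ produces the bound (\ref{behavior-C-monotone-derivative}) together with a well-defined coefficient $\Delta_0=\lim_{t\to-\infty}e^{-\kappa_+ t}\Delta(t)$.

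The crux, and the analog of the condition $A_1B_2\neq A_2B_1$ in Lemma~\ref{lemma-spiral-Psi-C}, is to prove $\Delta_0\neq0$; I expect this to be the main obstacle. Here I would use the Wronskian identity (\ref{Wronskian-Delta}), which reads $\frac{d}{dt}e^{(d-4)t}W(\Delta,\Psi_2)=-e^{(d-4)t}f(t)\Psi_2(t)=e^{(d-2)t}\Psi_{\infty}(t)\Psi_2(t)$. Integrating over $\mathbb{R}$, the right-hand side is the integral appearing in (\ref{degeneracy-2}). On the left, the super-exponential decay of $\Psi_C$ in (\ref{asymptotics-infinity-psi}) kills the boundary contribution at $+\infty$, while the asymptotics $\Delta\sim\Delta_0 e^{\kappa_+ t}$ and $\Psi_2\sim L_2 e^{\kappa_- t}$ together with $\kappa_++\kappa_-=-(d-4)$ give $e^{(d-4)t}W(\Delta,\Psi_2)\to\Delta_0 L_2(\kappa_--\kappa_+)$ as $t\to-\infty$. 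Hence $\Delta_0 L_2(\kappa_+-\kappa_-)=\int_{-\infty}^{\infty}e^{(d-2)t}\Psi_{\infty}(t)\Psi_2(t)\,dt\neq0$ by (\ref{degeneracy-2}), and since $L_2\neq0$ and $\kappa_+\neq\kappa_-$ we conclude $\Delta_0\neq0$.

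Finally, for the consequence (\ref{behavior-C-monotone}) I would rerun the persistence analysis (\ref{persistence-Psi-C})--(\ref{bound-second-derivative}): setting $\Psi_C=\Psi_{\infty}+\Sigma$ and moving along the line $C=C_{\infty}-L_2^{-1}L_1(\lambda-\lambda_{\infty})$, which is chosen precisely so that the first-order response $\partial_\lambda\Sigma|_0-L_2^{-1}L_1\,\partial_C\Sigma|_0=\Psi_1-L_2^{-1}L_1\Psi_2=\Delta$ cancels the faster-growing $e^{\kappa_- t}$ mode and leaves only $e^{\kappa_+ t}$. Rescaling $\Sigma$ by $e^{-\kappa_+ t}$ then closes the fixed-point argument on $[(a-1)\log b,0]$ under the smallness hypothesis (\ref{ball-lambda-C-monotone}), and a first-order Taylor expansion in $\lambda-\lambda_{\infty}$ with control of the quadratic remainder, exactly as in (\ref{bound-second-derivative}), yields the leading term $\Delta_0(\lambda-\lambda_{\infty})e^{\kappa_+ t}$ and the stated error bound.
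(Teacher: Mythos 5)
There is a genuine gap in your first step, the construction of the limits $L_{1,2}$ in (\ref{limits-B}). You anchor the variation-of-parameters representation of $\Psi_{1,2}$ at $t=-\infty$ and assert that, after the rescaling $\tilde{\Psi}_{1,2}(t)=e^{-\kappa_- t}\Psi_{1,2}(t)$, one gets ``an integral equation with an exponentially decaying kernel, exactly as in (\ref{volterra-last-tilde})''. This fails: the branch $e^{\kappa_+(t-t')}$ of your kernel rescales to $e^{(\kappa_+-\kappa_-)(t-t')}$, which \emph{grows} in $t-t'\geq 0$ since $\kappa_+-\kappa_-=\sqrt{d^2-16d+40}>0$. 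Against the source $g(t')\Psi_{1,2}(t')=\mathcal{O}(e^{(2+\kappa_-)t'})$ (recall $g(t)=\mathcal{O}(e^{2t})$ as $t\to-\infty$, while $\Psi_{1,2}$ carry the dominant mode $e^{\kappa_- t}$), this branch contributes
\begin{equation*}
e^{\kappa_+ t}\int_{-\infty}^{t} e^{(2+\kappa_--\kappa_+)t'}\,dt',
\end{equation*}
which converges only if $\kappa_+-\kappa_-<2$, i.e.\ $d^2-16d+36<0$; within the range $d\geq 13$ this holds only for $d=13$ (where $\kappa_+-\kappa_-=1$), and already for $d=14$ one has $\kappa_+-\kappa_-=2\sqrt{3}>2$. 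So for every $d\geq 14$ your displayed integral equation is divergent --- the forcing generated by the dominant mode is resonant-or-stronger relative to the subdominant mode $e^{\kappa_+ t}$ --- the contraction cannot be set up, and the existence of $L_{1,2}$ is not established. A secondary defect of the same step: a $-\infty$-anchored Volterra equation with free constants $A_{1,2}$, $B_{1,2}$ does not by itself single out the particular solutions $\Psi_{1,2}$, which are defined through the decay (\ref{asymptotics-infinity-psi}) of the family $\Psi_C$ as $t\to+\infty$; you would need an a priori bound $\Psi_{1,2}=\mathcal{O}(e^{\kappa_- t})$ as $t \to -\infty$ even to place them in your representation.

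The paper avoids both problems by anchoring this first equation at $t=+\infty$, see (\ref{volterra-last-monotone-plus}): there the troublesome kernel branch carries $e^{(\kappa_+-\kappa_-)(t-t')}\leq 1$ for $t'\geq t$, the sources $f$ and $g\Psi_{1,2}$ decay super-exponentially thanks to the Gaussian factor, and the limit $L_{1,2}=B_{1,2}+(\kappa_+-\kappa_-)^{-1}\int_{-\infty}^{+\infty}e^{-\kappa_- t'}[f(t')e_{1,2}+g(t')\Psi_{1,2}(t')]\,dt'$ is then a convergent expression; only afterwards, for $\Delta=\Psi_1-L_2^{-1}L_1\Psi_2$, whose source $g\Delta=\mathcal{O}(e^{(2+\kappa_+)t})$ is subordinate to both modes for all $d\geq 13$, does the paper switch to the $-\infty$-anchored equation (\ref{volterra-last-monotone}) --- precisely your second step, which is sound (the paper even warns in a remark that the two representations differ, so that $\Delta_0\neq A_1-L_2^{-1}L_1A_2$ in general). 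Your remaining steps match the paper's proof: the rescaling $e^{-\kappa_+ t}\Delta$ with smallness from (\ref{bound-f0-f1}) gives (\ref{behavior-C-monotone-derivative}), and the persistence argument along the line $C=C_\infty-L_2^{-1}L_1(\lambda-\lambda_\infty)$ gives (\ref{behavior-C-monotone}). Your Wronskian argument for $\Delta_0\neq 0$ is correct and in fact more explicit than the paper's (which merely says it is ``similar'' to the proof of Lemma \ref{lemma-spiral-Psi-C}): integrating (\ref{Wronskian-Delta}) over $\mathbb{R}$, with the boundary term at $+\infty$ killed by (\ref{asymptotics-infinity-psi}) and with $\kappa_++\kappa_-=-(d-4)$ at $-\infty$, yields $\Delta_0 L_2(\kappa_+-\kappa_-)=\int_{-\infty}^{\infty}e^{(d-2)t}\Psi_{\infty}(t)\Psi_2(t)\,dt\neq 0$ by (\ref{degeneracy-2}), which even produces a formula for $\Delta_0$ (modulo the routine check that the asymptotics of $\Delta$ and $\Psi_2$ can be differentiated, which follows from the integral representations). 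The proposal is therefore repaired by replacing your first integral equation with the paper's $+\infty$-anchored one; as written, it fails for all $d\geq 14$.
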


\begin{proof}
The proof of Lemma \ref{lemma-monotone-Psi-C} follows the same steps as the proof of Lemma \ref{lemma-spiral-Psi-C} but incorporates the different exponential behavior of the solutions $\Psi_{1,2}(t)$ 
in (\ref{derivative-C-solution}) as $t \to -\infty$.
By variation of parameters, the linear equations for $\Psi_{1,2}$ can be rewritten in the integral form:
\begin{eqnarray}
\nonumber
&& \Psi_{1,2}(t) = A_{1,2} e^{\kappa_+ t} + 
B_{1,2} e^{\kappa_- t} \\
&& \qquad \qquad + \frac{1}{\kappa_+ - \kappa_-} 
\int_{t}^{+\infty} \left[ e^{\kappa_-(t-t')} - e^{\kappa_+(t-t')} \right]
\left[ f(t') e_{1,2} + g(t') \Psi_{1,2}(t') \right] dt',
\label{volterra-last-monotone-plus}
\end{eqnarray}
where $A_{1,2}$, $B_{1,2}$ are some constant coefficients and 
$e_1 = 1$, $e_2 = 0$. Since $\kappa_- < \kappa_+ < 0$, 
whereas $f(t)$ and $g(t) \Psi_{1,2}(t)$ decays to zero fast as $t \to +\infty$,
the integral kernel in (\ref{volterra-last-monotone-plus}) becomes bounded in the variable 
$\tilde{\Psi}_{1,2}(t) := e^{-\kappa_- t} \Psi_{1,2}(t)$ on $[t_0,+\infty)$ for every $t_0 \in \mathbb{R}$. 
The existence of $\tilde{\Psi}_{1,2}$ in $L^{\infty}(t_0,\infty)$ is guaranteed by the Banach fixed-point theorem and 
the solutions $\tilde{\Psi}_{1,2}$ are extended globally on $\mathbb{R}$. 
In the limit $t \to -\infty$, we obtain
\begin{eqnarray*}
L_{1,2} & := & \lim_{t \to -\infty} e^{-\kappa_- t} \Psi_{1,2}(t)  \\
& = & B_{1,2} +  \frac{1}{\kappa_+ - \kappa_-}  
\int_{-\infty}^{+\infty} e^{-\kappa_- t'}
\left[ f(t')  e_{1,2} + g(t') \Psi_{1,2}(t') \right] dt'.
\end{eqnarray*}
Hence, $L_{1,2}$ are bounded. Since $L_2 \neq 0$ due to the constraint (\ref{degeneracy-3}) in Assumption \ref{assumption-2}, we can define 
$$
\Delta(t) := \Psi_1(t) - L_2^{-1} L_1 \Psi_2(t),
$$
so that $\lim\limits_{t \to -\infty}  e^{-\kappa_- t} \Delta(t) = 0$.
By variation of parameters, the linear equation for $\Delta$ can be rewritten in the integral form:
\begin{eqnarray}
\Delta(t) = \Delta_0 e^{\kappa_+ t} + 
\frac{1}{\kappa_+ - \kappa_-} 
\int_{-\infty}^t \left[ e^{\kappa_+(t-t')} - e^{\kappa_-(t-t')} \right]
\left[ f(t') + g(t') \Delta(t') \right] dt',
\label{volterra-last-monotone}
\end{eqnarray}
where $\Delta_0$ is some constant coefficient. 

\begin{rem}
The integral equation (\ref{volterra-last-monotone}) is 
different from the one which would follow from the integral equation (\ref{volterra-last-monotone-plus}) in the variable $\Delta(t)$ so that 
$\Delta_0 \neq A_1 - L_2^{-1} L_1 A_2$ generally.
While (\ref{volterra-last-monotone-plus}) is useful in the limit $t \to +\infty$, (\ref{volterra-last-monotone}) is useful in the limit $t \to -\infty$.
\end{rem}

The integral kernel in (\ref{volterra-last-monotone}) becomes bounded in the variable  $\tilde{\Delta}(t) := e^{-\kappa_+ t} \Delta(t)$, for which it 
can written in the form
\begin{eqnarray}
\tilde{\Delta}(t) = \Delta_0 + 
\frac{1}{\kappa_+ - \kappa_-} 
\int_{-\infty}^t \left[ 1 - e^{-(\kappa_+-\kappa_-)(t-t')} \right]
\left[ f(t') e^{-\kappa_+ t'} + g(t') \tilde{\Delta}(t') \right] dt',
\label{volterra-last-monotone-tilde}
\end{eqnarray}
By the same fixed-point iterations as in the proof of Lemma \ref{lemma-1}, 
there exist the unique solutions $\tilde{\Delta}$ to the integral equation (\ref{volterra-last-monotone-tilde}) in a closed 
subset of Banach space $L^{\infty}(-\infty,T+(a-1) \log b)$ 
satisfying the bounds
\begin{eqnarray*}
	&& \sup_{t \in (-\infty,T+(a-1)\log b)} |\tilde{\Delta}(t) - \Delta_0 | \leq C_{T,a}  b^{-2(1-a)}, \qquad b \geq b_{T,a},
\end{eqnarray*}
due to bounds (\ref{bound-f0-f1}). Since $\tilde{\Delta}(t) = e^{-\kappa_+ t} \Delta(t)$, we obtain the bounds (\ref{behavior-C-monotone-derivative}). 

The linear combination in $\Delta = \Psi_1 - L_2^{-1} L_1 \Psi_2$ corresponds to the choice of 
$$
C - C_{\infty} = -L_2^{-1} L_1 (\lambda - \lambda_{\infty}).
$$ 
The second derivatives of $\Psi_C$ in $(\lambda,C)$ 
grow like $\mathcal{O}(e^{2\kappa_+t})$ as $t \to -\infty$. Similarly to 
the bound (\ref{bound-second-derivative}) in the proof of Lemma \ref{lemma-spiral-Psi-C}, one can justify the bound 
\begin{equation}
\label{bound-second-derivative-monotone}
\sup_{t \in [(a-1) \log b, 0]} |\Psi_{C_{\infty}-L_2^{-1} L_1 (\lambda - \lambda_{\infty})}(t) - \Psi_{\infty}(t) - (\lambda - \lambda_{\infty}) \Delta(t) | \leq C (\lambda - \lambda_{\infty})^2 e^{-2\kappa_+ (1-a)}
\end{equation}
if $|\lambda - \lambda_{\infty}| \leq \epsilon e^{\kappa_+(1-a)}$ with sufficiently small $\epsilon > 0$. 
Bound (\ref{behavior-C-monotone}) follows from 
the expansion $\Psi_{\infty}(t) = \sqrt{d-3} + \mathcal{O}(e^{2t})$ as $t \to -\infty$ and the bounds (\ref{behavior-C-monotone-derivative}) and (\ref{bound-second-derivative-monotone}).

Finally, it is proven similarly to the proof of Lemma \ref{lemma-spiral-Psi-C} that $\Delta_0 = 0$ is in contradiction with the condition (\ref{degeneracy-2}) of Assumption \ref{assumption-2}. Hence, $\Delta_0 \neq 0$.
\end{proof}

We end this section with the formal proof of Theorem \ref{theorem-2} for $d \geq 13$.\\

{\em Proof of Theorem \ref{theorem-2} for $d \geq 13$.}

We match again the solutions $\Psi_b(t)$ and $\Psi_C(t)$ as in (\ref{relation-b-C}). 
By comparing the bound (\ref{behavior-Psi-monotone}) of Lemma \ref{lemma-monotone-Psi} for any fixed $T \in \mathbb{R}$
and $a \in (0,a_0)$ with the bound (\ref{behavior-C-monotone}) of Lemma \ref{lemma-monotone-Psi-C} at the time
instance $t = T + (a-1) \log b$, we obtain the nonlinear equation:
\begin{eqnarray}
\label{system-A-B-monotone}
\Delta_0 (\lambda(b) - \lambda_{\infty}) = A_0 b^{\kappa_+}  
+ E, 
\end{eqnarray}
where coefficients $A_0$ and $\Delta_0$ are the same as in 
(\ref{behavior-Psi-monotone}) and (\ref{behavior-C-monotone-derivative}) 
respectively and the error term $E$ satisfies 
$$
E = \mathcal{O}(b^{(1-a) \kappa_+ + a \kappa_-},b^{(1+a) \kappa_+},
b^{-(2-\kappa_+)(1-a)},  (\lambda(b) - \lambda_{\infty}) b^{-2(1-a)}, 
 (\lambda(b) - \lambda_{\infty})^2 b^{-\kappa_+(1-a)})
$$ 
as $b \to \infty$, provided that $\lambda(b)$ satisfies the bound 
(\ref{ball-lambda-C-monotone}) for some $\epsilon > 0$.
Since $(1-a) \kappa_+ + a \kappa_- < \kappa_+< 0$, 
it follows that $b^{(1-a) \kappa_+ + a \kappa_-} \ll b^{\kappa_+}$. 
Similarly, we have already checked that 
$b^{-(2-\kappa_+)(1-a)} \ll b^{\kappa_+}$ if $a < a_0$, where $a_0$ is given by (\ref{a-0}).

Since $\Delta_0 \neq 0$ by Lemma \ref{lemma-monotone-Psi-C}, 
there exists the unique solution to the nonlinear equation (\ref{system-A-B-monotone}) by the implicit function theorem 
and the unique solution for $\lambda(b)$ satisfies 
\begin{eqnarray}
\label{system-lambda-monotone}
\lambda(b) - \lambda_{\infty} = \Delta_0^{-1} A_0  b^{\kappa_+} + \mathcal{O}(b^{(1-a) \kappa_+ + a \kappa_-},b^{(1+a) \kappa_+},b^{-(2-\kappa_+)(1-a)},b^{-2(1-a)+\kappa_+}).
\end{eqnarray}
Since $a > 0$ and $\kappa_+ < 0$, it follows that $b^{\kappa_+} \ll b^{\kappa_+(1-a)}$ so that $\lambda(b)$ in (\ref{system-lambda-monotone}) belongs to the bound (\ref{ball-lambda-C-monotone}). 
The expansion (\ref{system-lambda-monotone}) justifies the expansion 
(\ref{monotone}). \hspace{5.8cm} $\Box$

\begin{figure}[h]
	\centering
	\includegraphics[width=0.45\textwidth]{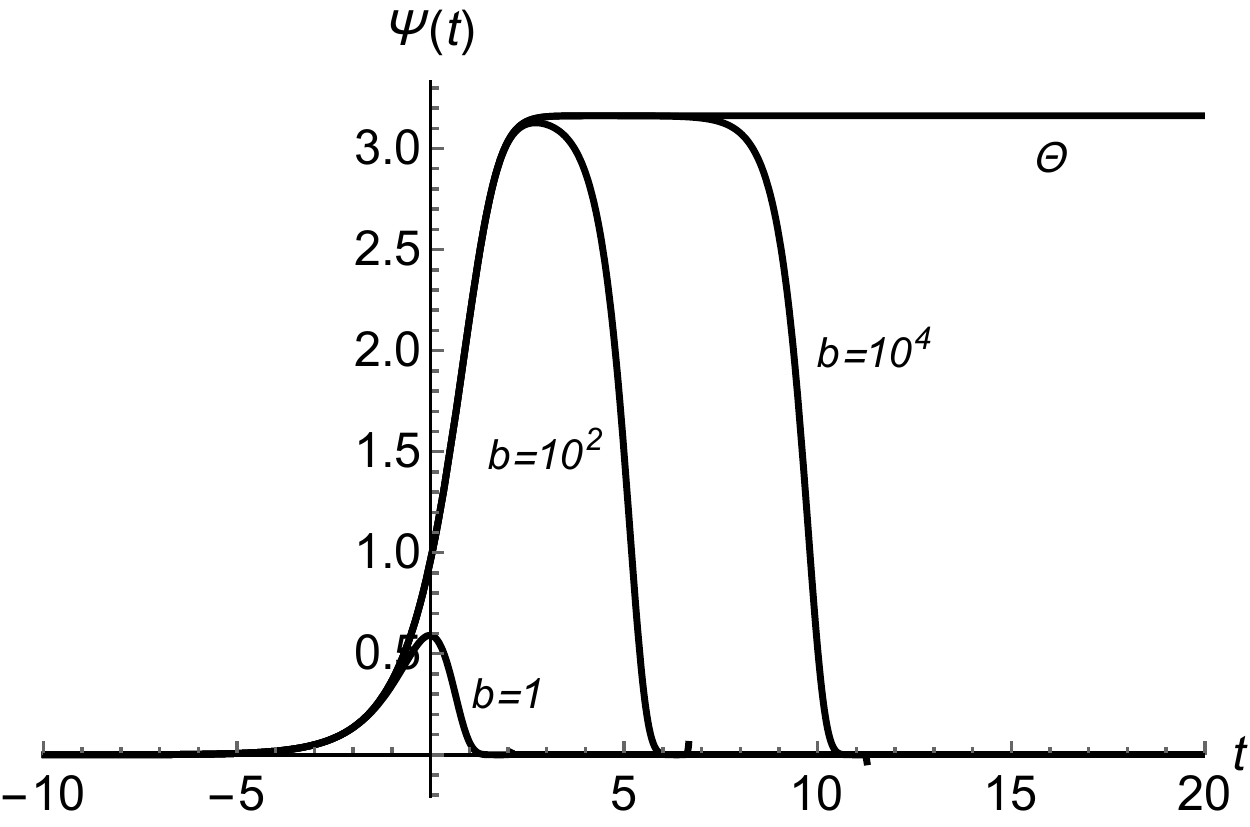}\qquad
	\includegraphics[width=0.45\textwidth]{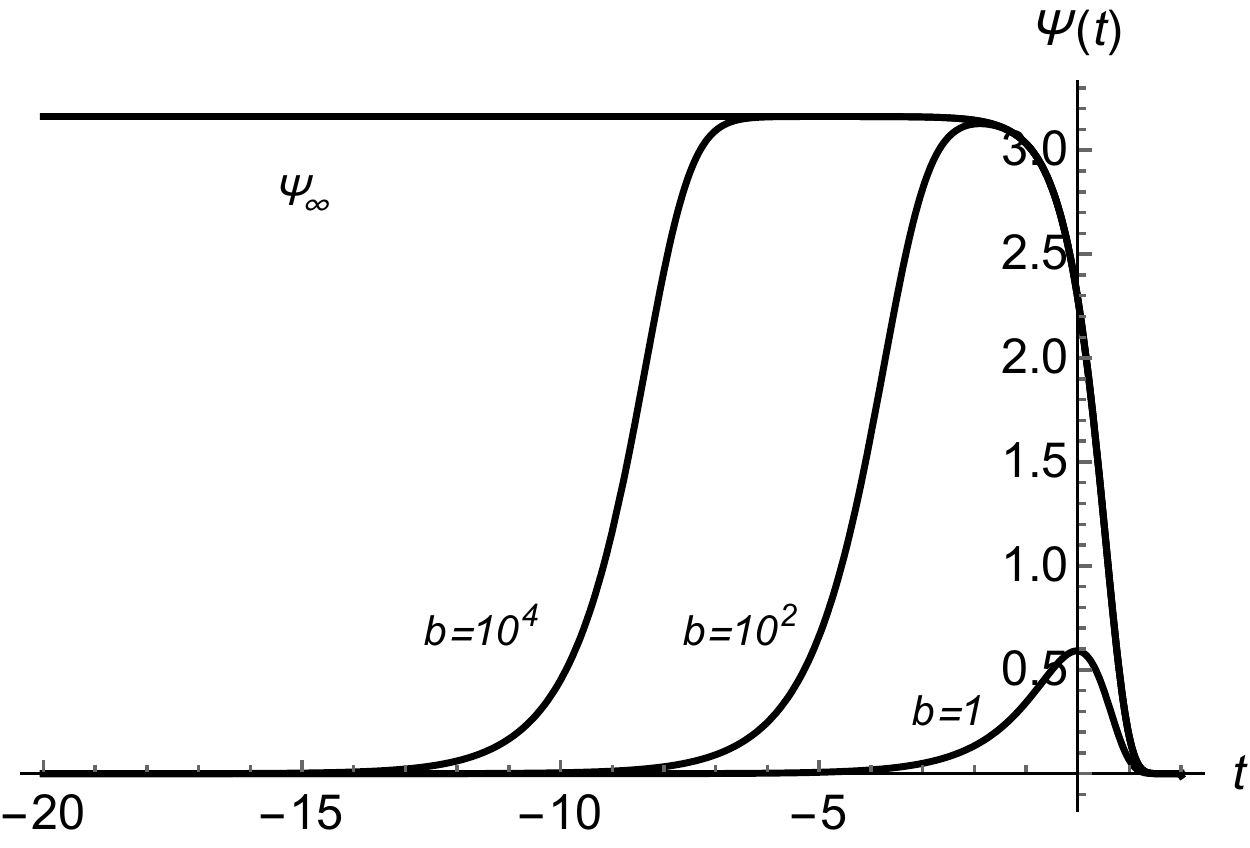}
	\caption{Plots of the solutions $\Psi_b$ for $\lambda = \lambda(b)$ and specific values of $b$ for $d=13$ in comparison with 
		$\Theta$ after translation of $t$ by $\log b$ (left) and with $\Psi_{\infty}$ (right). }
	\label{fig:psid13}
\end{figure}

Figure \ref{fig:psid13} illustrates the solutions $\Psi_b$ 
of the second-order equation (\ref{eq-psi-singular}) with $\lambda = \lambda(b)$ for $d = 13$ and $b = 1, 10^2, 10^4$. 
The left panel shows that the solutions $\Psi_b$ translated in $t$ 
by $\log b$ in comparison with the solution $\Theta$ of 
the truncated equation (\ref{eq-psi-truncated}). The right panel 
shows the solutions $\Psi_b$ without translation in comparison with the limiting 
singular solution $\Psi_{\infty}$ satisfying (\ref{eq-psi-singular}) and (\ref{bc-singular}). Convergence $\Psi_b(\cdot - \log b) \to \Theta$ as $b \to \infty$ on $(-\infty,t_0]$ for a fixed $t_0 > 0$ 
is obvious from the left panel, whereas convergence $\Psi_b \to \Psi_{\infty}$ as $b \to \infty$ on $[t_0,\infty)$ for a fixed $t_0 < 0$ 
is obvious from the right panel.

Figure \ref{fig:phaseplane13} shows two solutions $\Psi_b$ with $b = 1$ and $b = 10^2$ on the phase plane $(\Psi,\Psi')$ together with the solution 
$\Theta$ of the truncated equation (\ref{eq-psi-truncated}) 
and the limiting singular solution $\Psi_{\infty}$ satisfying (\ref{eq-psi-singular}) and (\ref{bc-singular}). The difference of $\Psi_{b = 10^2}$ (red dotted line) from
$\Theta$ and $\Psi_{\infty}$ is almost invisible, 
whereas the difference is large in the case of $\Psi_{b = 1}$ (blue dotted line).

\begin{figure}[h]
	\centering
	\includegraphics[width=0.38\textwidth]{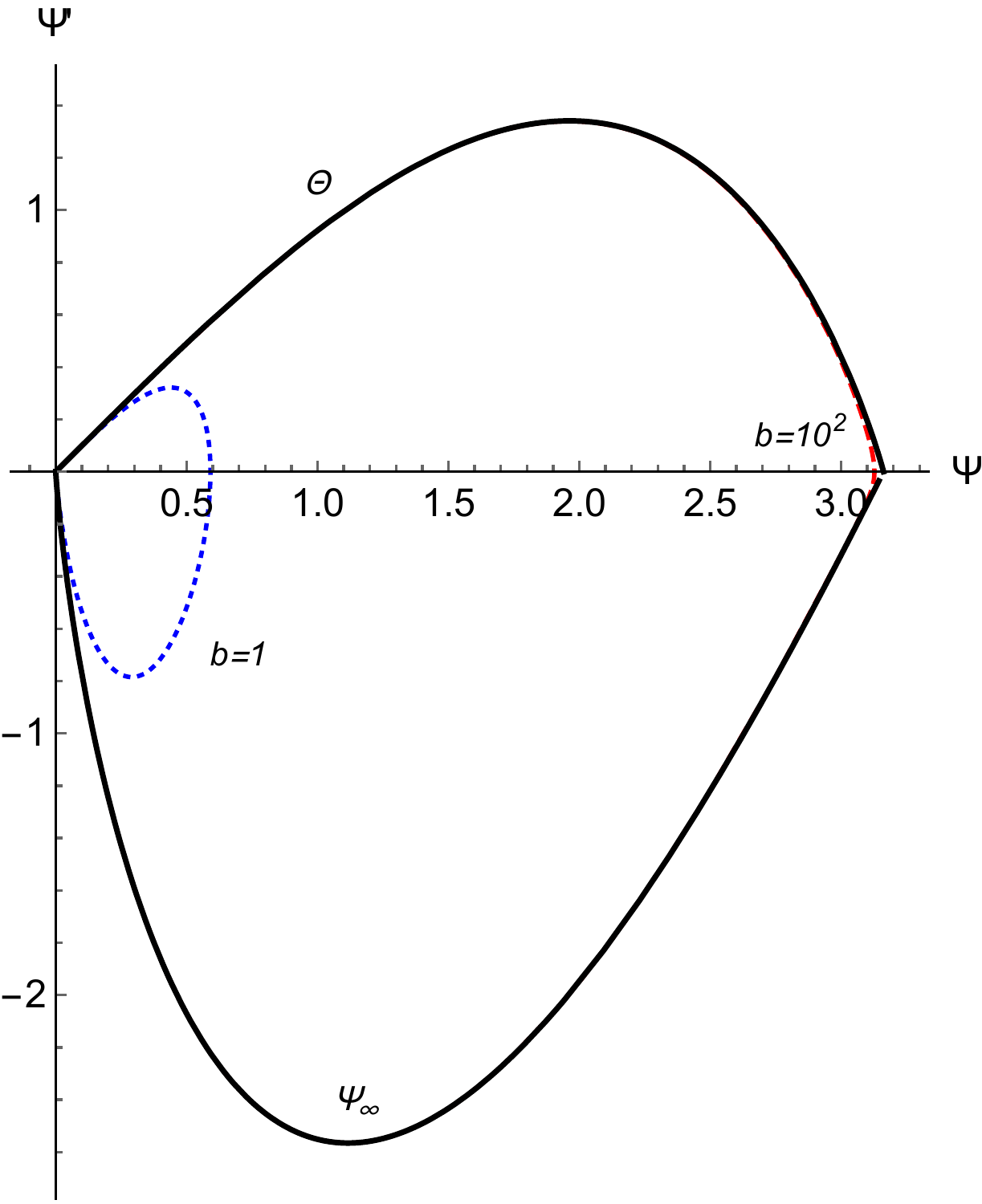}
	\caption{The solutions $\Psi_{b=1}$, $\Psi_{b=10^2}$, $\Theta$, and $\Psi_{\infty}$ on the phase plane $(\Psi,\Psi')$ for $d = 13$.}
	\label{fig:phaseplane13}
\end{figure}

\section{Conclusion}
\label{sec-conclusion}

We have considered the existence of the ground state in the energy-supercritical cubic Gross--Pitaevskii equation with a harmonic 
potential in dimension $d \geq 5$. The solution curve is unbounded 
and the supremum norm of the ground state given by the parameter $b$ diverges along the solution curve. 
In this limit, the eigenvalue parameter $\lambda = \lambda(b)$ displays the oscillatory (snaking) behavior in $b$ if $5 \leq d \leq 12$ and the monotone behavior if $d \geq 13$. 
This resembles the behavior of the ground states in the Liouville--Bratu--Gelfand problem \cite{JS,JL} and in the Dirichlet 
problem for the stationary Gross--Pitaevskii equation in a ball and without the harmonic potential \cite{Budd_Norbury1987,Budd1989,DF}. 
Our results extend and clarify the previous works \cite{Selem2011,SK2012,Selem2013}. 

It remains open to study spectral and orbital stability of the ground state by computing the Morse 
index of the Jacobian operator associated with the stationary Gross--Pitaevskii equation. See \cite{GuoWei,KikuchiWei} for computations of the Morse index of the limiting singular solutions for the Dirichlet problem. While the ground state is expected to be spectrally unstable 
near the limiting singular solution due to the oscillatory behavior for $5 \leq d \leq 12$, there is a possibility that the ground state is spectrally stable for $d \geq 13$ if the Morse index is one and the map from the eigenvalue parameter $\lambda$ to the mass $\mu := M(\mathfrak{u})$ is monotonically decreasing. The latter condition is known as the Vakhitov--Kolokolov stability criterion (see Chapter 4 in \cite{Pel}).

It seems interesting that higher dimensions $d \geq 13$ may re-enforce stability of the ground state, which is lost in $5 \leq d \leq 12$ past the first turning point along the solution curve. The stability problem will be studied in our forthcoming work.

\vspace{0.25cm}

{\bf Acknowledgements.} This research was supported by the Polish National Science Center grant no. 2017/26/A/ST2/00530.

\bibliographystyle{siam}
\bibliography{references}

\end{document}